\let\doendproof\endproof
\renewcommand\endproof{~\hfill\qed\doendproof}
\newcommand{\ab}[1] {\left\vert #1\right\vert}
\DeclarePairedDelimiter{\ceil}{\lceil}{\rceil}
\DeclarePairedDelimiter{\floor}{\lfloor}{\rfloor}
\newcommand{\no}[1] {\left\vert #1\right\vert_1}
\newcommand{\nz}[1] {\left\vert #1\right\vert_0}
\newcommand{\advice}{\ensuremath{b}\xspace}
\newcommand{\iphi}{\ensuremath{I_{\varphi}}\xspace}
\newcommand{\mincost}{\ensuremath{L_1}\xspace}
\newcommand{\minbits}{\ensuremath{d}\xspace}
\newcommand{\mzero}{\ensuremath{m_0}\xspace}
\newcommand{\mone}{\ensuremath{m_1}\xspace}
\newcommand{\dzero}{\ensuremath{d_0}\xspace}
\newcommand{\done}{\ensuremath{d_1}\xspace}
\newcommand{\dstart}{\ensuremath{d}\xspace}
\newcommand{\Bincr}{\ensuremath{\mathit(B\ref{binomincr})}\xspace}
\newcommand{\Bminus}{\ensuremath{\mathit(B\ref{binomminus})}\xspace}
\newcommand{\Bfrac}{\ensuremath{\mathit(B\ref{binomfrac})}\xspace}
\DeclareMathOperator{\smallmod}{mod}
\def\OPT{\ensuremath{\textsc{Opt}}\xspace}
\def\ALG{\ensuremath{\textsc{Alg}}\xspace}
\def\ALGp{\ensuremath{\textsc{Alg}'}\xspace}
\def\ORACLE{\ensuremath{\texttt{O}}\xspace}
\def\ORACLEp{\ensuremath{\texttt{O}'}\xspace}
\def\Wmin{\ensuremath{\mathsf{AOC}}\xspace}
\def\Wmax{\ensuremath{\mathsf{AOC}}\xspace}
\def\Wminc{\ensuremath{\mathsf{AOC}\text{-complete}}\xspace}
\def\Wmaxc{\ensuremath{\mathsf{AOC}\text{-complete}}\xspace}
\def\W{\ensuremath{\mathsf{AOC}}\xspace}
\def\Wc{\ensuremath{\mathsf{AOC}\text{-complete}}\xspace}
\def\Wcomplete{\ensuremath{\mathsf{AOC}\text{-complete}}\xspace}
\def\sq{\sqsubseteq}
\def\oldS{{\sc SG}\xspace}
\def\minS{{\sc min\-ASG}\xspace}
\def\minSu{{\sc min\-ASGu}\xspace}
\def\minSk{{\sc min\-ASGk}\xspace}
\def\maxS{{\sc max\-ASG}\xspace}
\def\maxSu{{\sc max\-ASGu}\xspace}
\def\maxSk{{\sc maxASGk}\xspace}
\def\S{{\sc ASG}\xspace}
\def\Su{{\sc ASGu}\xspace}
\def\Sk{{\sc ASGk}\xspace}
\def\P{{\textsc P}\xspace}
\def\VC{{\sc Online Vertex Cover}\xspace}
\def\IS{{\sc Online Independent Set}\xspace}
\def\DS{{\sc Online Dominating Set}\xspace}
\def\DPA{{\sc Online Disjoint Path Allocation}\xspace}
\def\CF{{\sc Online Cycle Finding}\xspace}
\def\SC{{\sc Online Set Cover}\xspace}
\def\KS{{\sc Online Uniform Knapsack}\xspace}
\def\OM{{\sc Online Matching}\xspace}
\begin{document}
\def\asydir{}

\title{The Advice Complexity of a Class of Hard Online Problems\thanks{A preliminary
version of this paper appeared in the proceedings of the 32nd International 
Symposium on Theoretical Aspects of Computer Science (STACS 2015), 
\emph{Leibniz International Proceedings in Informatics} 30: 116-129, 2015.}
\thanks{This work was partially supported by the Villum Foundation and the Danish Council for Independent Research,
Natural Sciences.}}
\author{Joan Boyar \and Lene M. Favrholdt \and Christian Kudahl \and Jesper W. Mikkelsen}
\date{\today}
\institute{J. Boyar \and L.M. Favrholdt \and C. Kudahl \and J.W. Mikkelsen
\at Department of Mathematics and Computer Science,
University of Southern Denmark, 5230 Odense M, Denmark\\Tel.: +45 6550-2338,
\\\email{\{joan,lenem,jesperwm,kudahl\}@imada.sdu.dk}}

\maketitle

\begin{abstract}
The advice complexity of an online problem is a measure of how much knowledge of the future an online algorithm needs in order to achieve a certain competitive ratio. 
Using advice complexity, we define the first online complexity class, AOC.
The class includes independent set, vertex cover, dominating set, and several others as complete problems.
AOC-complete problems are hard, since a single wrong answer by the online algorithm can have devastating consequences. For each of these problems, we show that $\log\left(1+(c-1)^{c-1}/c^{c}\right)n=\Theta (n/c)$ bits of advice are necessary and sufficient (up to an additive term of $O(\log n)$) to achieve a competitive ratio of $c$. 

The results are obtained by introducing a new string guessing problem related to those of Emek et al.~(TCS 2011) and B\"ockenhauer et al.~(TCS 2014). It turns out that this gives a powerful but easy-to-use method for providing both upper and lower bounds on the advice complexity of an entire class of online problems,
the AOC-complete problems.

Previous results of Halld\'orsson et al.~(TCS 2002) on online independent set, in a related model, imply that the advice complexity of the problem is $\Theta (n/c)$. Our results improve on this by providing an exact formula for the higher-order term. For online disjoint path allocation, B\"ockenhauer et al.~(ISAAC 2009) gave a lower bound of $\Omega (n/c)$ and an upper bound of $O((n\log c)/c)$ on the advice complexity. We improve on the upper bound by a factor of $\log c$. For the remaining problems, no bounds on their advice complexity were previously known.

\keywords{online algorithms, advice complexity, complexity class, asymmetric
string guessing, covering designs, Asymmetric Online Covering (AOC)}
\end{abstract}
\section{Introduction}

An online problem is an optimization problem in which the input is divided into small pieces, usually called requests, arriving sequentially. An online algorithm must serve each request without any knowledge of future requests, and the decisions made by the online algorithm are irrevocable. The goal is to minimize or maximize some objective function. 

Traditionally, the quality of an online algorithm is measured by the competitive ratio, which is an analog of the approximation ratio for approximation algorithms: The solution produced by the online algorithm is compared to the solution produced by an optimal offline algorithm, \OPT, which knows the entire request sequence in advance, and only the worst case is considered.

For some online problems, it is impossible to achieve a good competitive ratio. As an example, consider the classical problem of finding a maximum independent set in a graph. Suppose that, at some point, an online algorithm decides to include a vertex $v$ in its solution. It then turns out that all forthcoming vertices in the graph are connected to $v$, but not to each other. Thus, the online algorithm cannot include any of these vertices. On the other hand, \OPT knows the entire graph, and so it rejects $v$ and instead takes all forthcoming vertices. In fact, one can easily show that, even if we allow randomization, no online algorithm for this problem can obtain a competitive ratio better than $\Omega (n)$, where $n$ is the number of vertices in the graph. 

A natural question for online problems, which is not answered by competitive analysis, is the following: \emph{Is there some small amount of information such that, if the online algorithm knew this, then it would be possible to achieve a significantly better competitive ratio?} Our main result is a negative answer to this question for an entire class of hard online problems, including independent set. We prove our main result in the recently introduced advice complexity model. In this model, the online algorithm is provided with $b$ bits of advice about the input. No restrictions are placed on the advice. This means that the advice could potentially encode some knowledge which we would never expect to be in possession of in practice, or the advice could be impossible to compute in any reasonable amount of time. Lower bounds obtained in the advice complexity model are therefore very robust, since they do not rely on any assumptions about the advice. If we know that $b$ bits of advice are necessary to be $c$-competitive, then we know that \emph{any} piece of information which can be encoded using less than $b$ bits will not allow an online algorithm to be $c$-competitive.



In this paper, we use advice complexity to introduce the first complexity class for online problems. The complete problems for this class, one of which is independent set, are very hard in the online setting. We essentially show that for the complete problems in the class, a $c$-competitive online algorithm needs as much advice as is required to explicitly encode a solution of the desired quality. One important feature of our framework is that we introduce an abstract online problem which is complete for the class and well-suited to use as the starting point for reductions. This makes it easy to prove that a large number of online problems are complete for the class and thereby obtain tight bounds on their advice complexity.

\subsection{Advice Complexity}
Advice complexity~\cite{A1, A2, A3, A4} is a quantitative and standardized, i.e., problem independent, way of relaxing the online constraint by providing the algorithm with partial knowledge of the future. 
The main idea of advice complexity is to provide an online algorithm, \ALG, with some advice bits. These bits are provided by a trusted oracle, \ORACLE, which has unlimited computational power and knows the entire request sequence. 

In the first model proposed~\cite{A4}, the advice bits were given as answers (of varying lengths) to questions posed by \ALG. One difficulty with this model is that using at most 1 bit, three different options can be encoded (giving no bits, a 0, or a 1). This problem was addressed by the model proposed in~\cite{A2}, where the oracle is required to send a fixed number of advice bits per request. However, for the problems we consider, one bit per request is enough to guarantee an optimal solution, and so this model is not applicable. Instead, we will use the ``advice-on-tape'' model~\cite{A1}, which allows for a sublinear number of advice bits while avoiding the problem of encoding information in the length of each answer. 
Before the first request arrives, the oracle prepares an \emph{advice tape}, an infinite binary string. The algorithm \ALG may, at any point, read some bits from the advice tape. The {\em advice complexity} of \ALG is the maximum number of bits read by \ALG for any input sequence of at most a given length. 

When advice complexity is combined with competitive analysis, the central question is: How many bits of advice are necessary and sufficient to achieve a given competitive ratio $c$?

\begin{definition}[Competitive ratio \cite{CompRatio1, CompRatio2} and advice complexity \cite{A1, A3}]
The input to an online problem, \P, is a request sequence $\sigma=\langle r_1,\ldots , r_n \rangle$. An \emph{online algorithm with advice}, \ALG, computes the output $y=\langle y_1,\ldots , y_n\rangle$, under the constraint that $y_i$ is computed from $\varphi, r_1,\ldots , r_i$, where $\varphi$ is the content of the advice tape. Each possible output for \P is associated with a \emph{score}. For a request sequence $\sigma$, $\ALG(\sigma)$ $(\OPT(\sigma))$ denotes the score of the output computed by $\ALG$ $(\OPT)$ when serving $\sigma$. 

If \P is a maximization problem, then $\ALG$ is \emph{$c(n)$-competitive} if there exists a constant, $\alpha$, such that, for all $n \in \mathbb{N}$, $$\OPT(\sigma)\leq c(n)\cdot\ALG (\sigma)+\alpha,$$ for all request sequences, $\sigma$, of length at most $n$. 
If \P is a minimization problem, then $\ALG$ is \emph{$c(n)$-competitive} if there exists a constant, $\alpha$, such that, for all $n \in \mathbb{N}$, $$\ALG (\sigma)\leq c(n)\cdot\OPT (\sigma)+\alpha,$$ for all request sequences, $\sigma$, of length at most $n$. 
In both cases, if the inequality holds with $\alpha=0$, we say that $\ALG$ is \emph{strictly $c(n)$-competitive}.

The \emph{advice complexity, $b(n)$, of an algorithm}, \ALG, is the largest number of bits of $\varphi$ read by \ALG over all possible inputs of length at most $n$. 
The {\em advice complexity of a problem}, \P, is a function, $f(n,c)$, $c\geq 1$, such that the smallest possible advice complexity of a strictly $c$-competitive online algorithm for \P is $f(n,c)$.
\end{definition}

In this paper, we only consider deterministic online algorithms (with advice). Note that both the advice read and the competitive ratio may depend on $n$, but, for ease of notation, we often write $b$ and $c$ instead of $b(n)$ and $c(n)$. Also, by this definition, $c\geq 1$, for both minimization and maximization problems. For minimization problems, the score is also called the \emph{cost}, and for maximization problems, the score is also called the \emph{profit}. Furthermore, we use output and \emph{solution} interchangeably.  Lower and upper bounds on the advice complexity have been obtained for many problems, see e.g. \cite{Asushmita, Asetcover, Abip, Aschedule, Ais, Aknapsack, Aec, Asteiner, Avtripartite, Avbipartite, Avpath, Ak-server, Abpj, A1, A2, A3, A4, SG, AListupdate}.

\subsection{String guessing}
In \cite{A2, SG}, the advice complexity of the following {\em string guessing} problem, \oldS, is studied: For each request, which is simply empty and contains no information, the algorithm tries to guess a single bit (or more generally, a character from some finite alphabet). The correct answer is either revealed as soon as the algorithm has made its guess ({\em known history}), or all of the correct answers are revealed together at the very end of the request sequence ({\em unknown history}). The goal is to guess correctly as many bits as possible. 

The problem was first introduced  (under the name generalized matching pennies) in \cite{A2}, where a lower bound for randomized algorithms with advice was given. In \cite{SG}, the lower bound was improved for the case of deterministic algorithms. In fact, the lower bound given in \cite{SG} is tight up to lower-order additive terms. While \oldS is rather uninteresting in the view of traditional competitive analysis, it is very useful in an advice complexity setting. Indeed, it has been shown that the string guessing problem can be reduced to many classical online problems, thereby giving lower bounds on the advice complexity for these problems. This includes bin packing \cite{Abpj}, the $k$-server problem \cite{Asushmita}, list update \cite{AListupdate}, metrical task system \cite{A2}, set cover \cite{SG} and a certain version of maximum clique \cite{SG}.

\subsubsection{Asymmetric string guessing}
In this paper, we introduce a new string guessing problem called {\em asymmetric string guessing, \S}, formally defined in Section~\ref{sectionasg}. The rules are similar to those of the original string guessing problem with an alphabet of size two, but the score function is asymmetric: If the algorithm answers $1$ and the correct answer is $0$, then this counts as a single wrong answer (as in the original problem). On the other hand, if the algorithm answers $0$ and the correct answer is $1$, the solution is deemed infeasible and the algorithm gets an infinite penalty. This asymmetry in the score function forces the algorithm to be very cautious when making its guesses. 

As with the original string guessing problem, \S is not very interesting in the traditional framework of competitive analysis. However, it turns out that \S captures, in a very precise way, the hardness of problems such as online independent set and online vertex cover. 


\subsection{Problems}
Many of the problems that we consider are graph problems, and most of them are studied in the \emph{vertex-arrival model}. In this model, the vertices of an unknown graph are revealed one by one. That is, in each round, a vertex is revealed together with all edges connecting it to previously revealed vertices. For the problems we study in the vertex-arrival model, whenever a vertex, $v$, is revealed, an online algorithm $\ALG$ must (irrevocably) decide if $v$ should be included in its solution or not. Denote by $V_{\ALG}$ the vertices included by $\ALG$ in its solution after all vertices of the input graph have been revealed. The individual graph problems are defined by specifying the set of feasible solutions. The cost (profit) of an infeasible solution is $\infty$ ($-\infty$). 

The problems we consider in the vertex-arrival model are:

\begin{itemize}
\item \VC. A solution is feasible if it is a vertex cover in the input graph. The problem is a minimization problem.

\item \CF. A solution is feasible if the subgraph induced by the vertices in the solution contains a cycle. We assume that the presented graph always contains a cycle. The problem is a minimization problem

\item \DS. A solution is feasible if it is a dominating set in the input graph. The problem is a minimization problem.

\item \IS. A solution is feasible if it is an independent set in the input graph. The problem is a maximization problem.
\end{itemize}

We emphasize that the classical $2$-approximation algorithm for offline vertex cover cannot be used in our online setting, even though the algorithm is greedy. That algorithm greedily covers the edges (by selecting both endpoints) one by one, but this is not possible in the vertex-arrival model.

Apart from the graph problems in the vertex-arrival model mentioned above, we also consider the following online problems. Again, the cost (profit) of an infeasible solution is $\infty$ ($-\infty$).

\begin{itemize}
\item \DPA. A path with $L+1$ vertices $\{v_0,\ldots , v_{L}\}$ is given. Each request $(v_i, v_j)$ is a subpath specified by the two endpoints $v_i$ and $v_j$. A request $(v_i, v_j)$ must immediately be either accepted or rejected. This decision is irrevocable. A solution is feasible if the subpaths that have been accepted do not share any edges. The profit of a feasible solution is the number of accepted paths. The problem is a maximization problem.

\item \SC (set-arrival version). A finite set $U$ known as the \emph{universe} is given. The input is a sequence of $n$ finite subsets of $U$, $(A_1,\ldots , A_n)$, such that $\cup_{1\leq i \leq n}A_i=U$. A subset can be either accepted or rejected. Denote by $S$ the set of indices of the subsets accepted in some solution. The solution is feasible if $\cup_{i\in S}A_i=U$. The cost of a feasible solution is the number of accepted subsets. The problem is a minimization problem.
\end{itemize}


\subsection{Preliminaries}
Throughout the paper, we let $n$ denote the number of requests in the input.

We let $\log$ denote the binary logarithm $\log_2$ and $\ln$ the natural logarithm $\log_e$. 

By a \emph{string} we always mean a bit string. For a string $x\in\{0,1\}^n$, we denote by $\no{x}$ the Hamming weight of $x$ (that is, the number of 1s in $x$) and we define $\nz{x}=n-\no{x}$.  Also, we denote the $i$'th bit of $x$ by $x_i$, so that $x=x_1x_2\ldots x_n$. 

For $n\in\mathbb{N}$, define \mbox{$[n]=\{1,2,\ldots , n\}$}. For a subset $Y\subseteq [n]$, the \emph{characteristic vector} of $Y$ is the string $y=y_1\ldots y_n\in\{0,1\}^n$ such that, for all $i\in[n]$, $y_i=1$ if and only if $i\in Y$. For $x,y\in\{0,1\}^n$, we write $x\sq y$ if $x_i=1\Rightarrow y_i=1$ for all $1\leq i\leq n$. 

If the oracle needs to communicate some integer $m$ to the algorithm, and if the algorithm does not know of any upper bound on $m$, the oracle needs to use a self-delimiting encoding. For instance, the oracle can write $\ceil{\log(m+1)}$ in unary (a string of $1$'s followed by a $0$) before writing $m$ itself in binary. In total, this encoding uses $2\ceil{\log (m+1)}+1=O(\log m)$ bits. Slightly more efficient encodings exist, see e.g.~\cite{Ak-server}.


\subsection{Our contribution}
In Section~\ref{section:ac}, we give lower and upper bounds on the advice complexity of the new asymmetric string guessing problem, \S. The bounds are tight up to an additive term of $O(\log n)$. 
Both upper and lower bounds hold for the competitive ratio as well as the {\em strict} competitive ratio.

More precisely, if $b$ is the number of advice bits necessary and sufficient to achieve a (strict) competitive ratio $c>1$, then we show that 
\begin{equation}
\advice= \log\left(1+\frac{(c-1)^{c-1}}{c^{c}}\right)n \pm\Theta (\log n),\label{maineq}
\end{equation}
where
$$\frac{1}{e\ln 2}\frac{n}{c} \leq \log\left(1+\frac{(c-1)^{c-1}}{c^{c}}\right)n \leq \frac{n}{c}\,.$$
This holds for all variants of the asymmetric string guessing problem (minimization/maximization and known/unknown history). See Figure~\ref{graf} on page~\pageref{graf} for a graphical plot.
For the lower bound, the constant hidden in $\Theta(\log n)$ depends on the additive constant $\alpha$ of the $c$-competitive algorithm. We only consider $c>1$, since in order to be strictly $1$-competitive, an algorithm needs to correctly guess every single bit. It is easy to show that this requires $n$ bits of advice (see e.g.~\cite{SG}). By Remark~\ref{strictremark} in section~\ref{section:ac}, this also gives a lower bound for being $1$-competitive.


In Section~\ref{section:aoc}, we introduce a class, \W , of online problems. The class \W essentially consists of those problems which can be reduced to \S. In particular, for any problem in \W , our upper bound on the advice complexity for \S applies. This is one of the few known examples of a general technique for constructing online algorithms with advice, which works for an entire class of problems.

On the hardness side, we show that several online problems, including \VC, \CF, \DS, \IS, \SC and \DPA are \Wc, that is, they have the same advice complexity as \S. We prove this by providing reductions from \S to each of these problems. The reductions preserve the competitive ratio and only increase the number of advice bits by an additive term of $O(\log n)$. Thus, we obtain bounds on the advice complexity of each of these problems which are essentially tight. Finally, we give a few examples of problems which belong to \W, but are provably not \Wc.
This first complexity class with its many complete problems could be the beginning of a complexity theory for online algorithms.

As a key step in obtaining our results, we establish a connection between the advice complexity of \S and the size of covering designs (a well-studied object from the field of combinatorial designs).

\subsubsection{Discussion of results}
Note that the offline versions of the \Wc problems have very different properties. Finding the shortest cycle in a graph can be done in polynomial time. There is a greedy $2$-approximation algorithm for finding a minimum vertex cover. No $o (\log n)$-approximation algorithm exists for finding a minimum set cover (or a minimum dominating set), unless $\mathsf{P}=\mathsf{NP}$ \cite{Raz}. For any $\varepsilon >0$, no $n^{1-\varepsilon}$-approximation algorithm exists for finding a maximum independent set, unless $\mathsf{ZPP}=\mathsf{NP}$ \cite{Haastad}.
Yet these \Wc problems all have essentially the same high advice complexity. Remarkably, the algorithm presented in this paper for problems in \W is \emph{oblivious} to the input: it ignores the input and uses only the advice to compute the output. Our lower bound proves that for \Wc problems, this oblivious algorithm is optimal. This shows that for \Wc problems, an adversary can reveal the input in such a way that an online algorithm simply cannot deduce any useful information from the previously revealed requests when it has to answer the current request. Thus, even though the \Wc problems are very different in the offline setting with respect to approximation, in the online setting, they become equally hard since an adversary can prevent an online algorithm from using any non-trivial structure of these problems.

Finally, we remark that the bounds (\ref{maineq}) are under the assumption that the number of $1$s in the input string (that is, the size of the optimal solution) is chosen adversarially. In fact, if $t$ denotes the number of $1$s in the input string, we give tight lower and upper bounds on the advice complexity as a function of both $n$, $c$, and $t$. We then obtain (\ref{maineq}) by calculating the value of $t$ which maximizes the advice needed (it turns out that this value is somewhere between $n/(ec)$ and $n/(2c)$). If $t$ is smaller or larger than this value, then our algorithm will use less advice than stated in (\ref{maineq}).

\subsubsection{Comparison with previous results}
The original string guessing problem, \oldS, can be viewed as a maximization problem, the goal being to correctly guess as many of the $n$ bits as possible. Clearly, \OPT always obtains a profit of $n$. With a single bit of advice, an algorithm can achieve a strict competitive ratio of $2$: The advice bit simply indicates whether the algorithm should always guess $0$ or always guess $1$. This is in stark contrast to \S, where linear advice is needed to achieve any constant competitive ratio. On the other hand, for both \oldS and \S, achieving a constant competitive ratio $c<2$ requires linear advice. However, the exact amount of advice required to achieve such a competitive ratio is larger for \S than for \oldS. See Figure~\ref{graf} for a graphical comparison.

The problems \IS and \DPA, which we show to be \Wc, have previously been studied in the context of advice complexity or similar models. 
We present a detailed comparison of our work to these previous results.

In \cite{A1}, among other problems, the advice complexity of \DPA is considered. It is shown that a strictly $c$-competitive algorithm must read at least $\frac{n+2}{2c}-2$ bits of advice. Comparing with our results, we see that this lower bound is asymptotically tight. On the other hand, the authors show that for any $c\geq 2$, there exists a strictly $c$-competitive online algorithm reading at most $b$ bits of advice, where $$b=\min\left\{ n \log \left(\frac{c}{(c-1)^{(c-1)/c}}\right), \frac{n\log n}{c}\right\}+3 \log n + O(1)\,.$$
We remark that $n \log \left(c/(c-1)^{(c-1)/c}\right)\geq (n\log c)/c$, for $c\geq 2$. Thus, this upper bound is a factor of $2 \log c$ away from the lower bound.

In \cite{Magnus}, the problem \IS is studied in a multi-solution model. In this model, an online algorithm is allowed to maintain multiple solutions. The algorithm knows (a priori) the number $n$ of vertices in the input graph. The model is parameterized by a function $r(n)$. Whenever a vertex $v$ is revealed, the algorithm can include $v$ in at most $r(n)$ different solutions (some of which might be new solutions with $v$ as the first vertex). At the end, the algorithm outputs the solution which contains the most vertices.

The multi-solution model is closely related to the advice complexity model. After processing the entire input, an algorithm in the multi-solution model has created at most $n\cdot r(n)$ different solutions (since at most $r(n)$ new solutions can be created in each round). Thus, one can convert a multi-solution algorithm to an algorithm with advice by letting the oracle provide $\log (n\cdot r(n))$ bits of advice indicating which solution to output. In addition, the oracle needs to provide $O(\log n)$ bits of advice in order to let the algorithm learn $n$ (which was given to the multi-solution algorithm for free). On the other hand, an algorithm using $b(n)$ bits of advice can be converted to $2^{b(n)}$ deterministic algorithms. One can then run them in parallel to obtain a multi-solution algorithm with $r(n)=2^{b(n)}$. These simple conversions allow one to translate both upper and lower bounds between the two models almost exactly (up to a lower-order additive term of $O(\log n)$).

It is shown in \cite{Magnus} that for any $c\geq 1$, there is a strictly $c$-competitive algorithm in the multi-solution model if $\ceil{\log r(n)-1} \geq n/c$. This gives a strictly $c$-competitive algorithm reading $\frac{n}{c}+O(\log n)$ bits of advice. On the other hand, it is shown that for any strictly $c$-competitive algorithm in the multi-solution model, it must hold that $c\geq n/(2\log (n\cdot r(n)))$. This implies that any strictly $c$-competitive algorithm with advice must read at least $\frac{n}{2c}-\log n$ bits of advice. Thus, the upper and lower bounds obtained in \cite{Magnus} are asymptotically tight.

Comparing our results to those of \cite{Magnus} and \cite{A1}, we see that we improve on both the lower and upper bounds on the advice complexity of the problems under consideration by giving tight results. For the upper bound on \DPA, the improvement is a factor of $(\log c)/2$. The results of \cite{Magnus} are already asymptotically tight. Our improvement consists of determining the exact coefficient of the higher-order term. Perhaps even more important, obtaining these tight lower and upper bounds on the advice complexity for \IS and \DPA becomes very easy when using our string guessing problem \S. We remark that the reductions we use to show the hardness of these problems reduces instances of \S to instances of \IS (resp.\ \DPA) that are identical to the hard instances used in \cite{Magnus} (resp.~\cite{A1}). What enables us to improve the previous bounds, even though we use the same hard instances, is that we have a detailed analysis of the advice complexity of \S at our disposal.

\subsection{Related work}
The advice complexity of \DPA has also been studied as a function of the length of the path (as opposed to the number of requests), see \cite{A1, DPAL}. 

The advice complexity of \IS on bipartite graphs and on sparse graphs has been determined in \cite{Ais}. It turns out that for these graph classes, even a small amount of advice can be very helpful. For instance, it is shown that a single bit of advice is enough to be $4$-competitive on trees (recall that without advice, it is not possible to be better than $\Omega(n)$-competitive, even on trees).

 It is clear that online maximum clique in the vertex arrival model is essentially equivalent to \IS. In \cite{SG}, the advice complexity of a different version of online maximum clique is studied: The vertices of a graph are revealed as in the vertex-arrival model. Let $V_{\ALG}$ be the set of vertices selected by $\ALG$ and let $C$ be a maximum clique in the subgraph induced by the vertices $V_{\ALG}$. The profit of the solution $V_{\ALG}$ is $\ab{C}^2 / \ab{V_{\ALG}}$. In particular, the algorithm is not required to output a clique, but is instead punished for including too many additional vertices in its output.

The \VC problem and some variations thereof are studied in \cite{Demange}.

The advice complexity of an online set cover problem~\cite{setcover} has been studied in \cite{Asetcover}. However, the version of online set cover that we consider is different and so our results and those of \cite{Asetcover} are incomparable.



\section{Asymmetric String Guessing} \label{sectionasg}
In this section, we formally define the asymmetric string guessing problem and give simple algorithms for the problem. There are four variants of the problem, one for each combination of minimization/maximization and known/unknown history. Collectively, these four problems will be referred to as \S. 

We have deliberately tried to mimic the definition of the string guessing problem \oldS from \cite{SG}. However, for \S, the number, $n$, of requests is not revealed to the online algorithm (as opposed to in \cite{SG}). 
This is only a minor technical detail since it changes the advice complexity by at most $O (\log n)$ bits.

\subsection{The Minimization Version}
We begin by defining the two minimization variants of \S: One in which the output of the algorithm cannot depend on the correctness of previous answers (unknown history), and one in which the algorithm, after each guess, learns the correct answer (known history\footnote{The concept of known history for online problems also appears in \cite{MagnusSzegedy, Magnus} where it is denoted \emph{transparency}.}). We collectively refer to the two minimization problems as \minS.
\begin{definition}
\label{minSudef}
The \emph{minimum asymmetric string guessing problem with unknown history}, \mbox{\minSu}, has input $\langle ?_1,\ldots , ?_n, x\rangle$, where $x\in\{0,1\}^n$, for some $n\in\mathbb{N}$. For $1\leq i \leq n$, round $i$ proceeds as follows:
\begin{enumerate}
\item The algorithm receives request $?_i$ which contains no information.
\item The algorithm answers $y_i$, where $y_i\in\{0,1\}$.
\end{enumerate}
The {\em output} $y=y_1\ldots y_n$ computed by the algorithm is \emph{feasible}, if 
 $x\sq y$. Otherwise, $y$ is \emph{infeasible}. 
The \emph{cost} of a feasible output is $\no{y}$, and the cost of an infeasible output is $\infty$.
The goal is to minimize the cost.
\end{definition}

Thus, each request carries no information. While this may seem artificial, it does capture the hardness of some online problems (see for example Lemma~\ref{dshard}).

\begin{definition}
\label{minSkdef}
The \emph{minimum asymmetric string guessing problem with known history}, \minSk, has input $\langle ?,x_1,\ldots , x_n\rangle$, where $x = x_1 \ldots x_n \in\{0,1\}^n$, for some $n\in\mathbb{N}$. For $1\leq i \leq n$, round $i$ proceeds as follows:
\begin{enumerate}
\item If $i>1$, the algorithm learns the correct answer, $x_{i-1}$, to the request in the previous round.  
\item The algorithm answers $y_i=f(x_1,\ldots , x_{i-1})\in\{0,1\}$, where $f$ is a function defined by the algorithm.

\end{enumerate}
The {\em output} $y=y_1\ldots y_n$ computed by the algorithm is \emph{feasible}, if 
 $x\sq y$. Otherwise, $y$ is \emph{infeasible}. 
The \emph{cost} of a feasible output is $\no{y}$, and the cost of an infeasible output is $\infty$.
The goal is to minimize the cost.
\end{definition}
The string $x$ in either version of \minS will be referred to as the \emph{input string} or the \emph{correct string}. Note that the number of requests in both versions of \minS is $n+1$, since there is a final request that does not require any response from the algorithm. This final request ensures that the entire string $x$ is eventually known. For simplicity, we will measure the advice complexity of \minS as a function of $n$ (this choice is not important as it changes the advice complexity by at most one bit).


Clearly, for any deterministic \minS algorithm which sometimes answers $0$, there exists an input string on which the algorithm gets a cost of $\infty$. However, if an algorithm always answers $1$, the input string could consist solely of $0$s. Thus, no deterministic algorithm can achieve any competitive ratio bounded by a function of $n$. One can easily show that the same holds for any randomized algorithm.


We now give a simple algorithm for \minS which reads $O(n/c)$ bits of advice and achieves a strict competitive ratio of $\ceil{c}$.
\begin{theorem}
For any $c\geq 1$, there is a strictly $\ceil{c}$-competitive algorithm for \minS which reads $\ceil{\frac{n}{c}}+O(\log (n/c))$ bits of advice.
\label{trivialmin}
\end{theorem}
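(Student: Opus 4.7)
The plan is to give a simple oblivious algorithm that partitions the positions $\{1,\ldots , n\}$ into $B=\ceil*{n/\ceil*{c}}$ consecutive blocks of size $\ceil*{c}$ (with the last block possibly shorter). For each block, the oracle writes one advice bit: bit $b_j=1$ iff the $j$-th block contains at least one position $i$ with $x_i=1$. The algorithm maintains only a position counter and reads one bit from the tape at the start of each new block; it then answers $b_j$ for every round inside block $j$. Since the algorithm never consults the content of the requests and never needs the correct history, the very same scheme works for both \minSu and \minSk simultaneously.

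For feasibility, whenever $x_i=1$ the block containing position $i$ satisfies $b_j=1$, so the algorithm answers $y_i=1$, giving $x\sq y$. For the competitive ratio, let $m$ denote the number of blocks with $b_j=1$. Then $\no{y}\le \ceil*{c}\cdot m$, while each marked block contributes at least one $1$ to $x$, so $\OPT(\sigma)=\no{x}\ge m$. Combining gives $\ALG(\sigma)\le \ceil*{c}\cdot\OPT(\sigma)$. The corner case needed for \emph{strict} competitiveness is $\OPT(\sigma)=0$, and it is handled automatically: if $x$ is all zero, then every $b_j=0$, so the algorithm outputs the all-zero string and has cost $0$.

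Finally, the advice complexity equals $B=\ceil*{n/\ceil*{c}}\le \ceil*{n/c}$, since $\ceil*{c}\ge c$. This lies comfortably inside the stated bound $\ceil*{n/c}+O(\log(n/c))$; the $O(\log(n/c))$ slack is merely there in case one wishes to prepend a self-delimiting description of $\ceil*{c}$, or a termination marker, but the bare scheme above does not need any such framing, since the algorithm reads advice bits strictly on demand as the position counter enters a new block and simply halts once no further requests arrive.

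I do not expect any real obstacle: this is a standard grouping argument. The only two small points worth checking carefully are (i) confirming that strictness survives the $\OPT(\sigma)=0$ case (it does, because the advice is then the all-zero string), and (ii) verifying that the algorithm can correctly pace its advice reads using only its local position counter, so that the number $n$ does not need to be given to it in advance.
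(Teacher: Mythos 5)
Your proof is correct, and it takes a genuinely cleaner route than the paper's. The paper partitions the $n$ positions into residue classes modulo $p=\ceil{n/c}$, so the algorithm must be told the value of $p$; this forces an extra $O(\log(n/c))$-bit self-delimiting header on the advice tape, which is exactly where the lower-order term in the theorem comes from. You instead use consecutive blocks of \emph{fixed} size $\ceil{c}$, a quantity the algorithm already knows because $c$ is a parameter of the algorithm, not of the input; the algorithm therefore simply reads one fresh bit from the tape at the start of each new block, using its local position counter, with no framing information whatsoever. This is a real simplification and, as you note, it even yields the slightly sharper bound of $\ceil{n/\ceil{c}}\le\ceil{n/c}$ advice bits with no additive logarithmic term at all. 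Your feasibility argument ($x\sq y$) and competitive-ratio argument ($\no{y}\le\ceil{c}\cdot m\le\ceil{c}\cdot\no{x}$, with the $\OPT=0$ case handled automatically since the advice is then all-zero) are both sound, and your observation that the oblivious scheme works verbatim for both \minSu and \minSk matches the paper's reasoning. The only caveat is stylistic: the paper's modular partition generalizes more readily to other grouping arguments in the literature, but for this particular theorem your consecutive-block scheme is strictly simpler and strictly tighter.
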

\begin{proof}
We will prove the result for \minSu. Clearly, it then also holds for \minSk.

Let $x=x_1\ldots x_n$ be the input string. The oracle encodes $p=\ceil{n/c}$ in a self-delimiting way, which requires $O(\log (n/c))$ bits of advice. 
For $0\leq j <p$, define $C_j=\{x_i : i \equiv j \pmod p\}$. These $p$ sets partition the input string, and the size of each $C_j$ is at most $\ceil{n/p}$. The oracle writes one bit, $b_j$, for each set $C_j$. If $C_j$ contains only $0$s, $b_j$ is set to $0$. Otherwise, $b_j$ is set to $1$. Thus, in total, the oracle writes $\ceil{n/c}+O(\log (n/c))$ bits of advice to the advice tape.

The algorithm, $\ALG$, learns $p$ and the bits $b_0,\ldots , b_{p-1}$ from the advice tape. In round $i$, $\ALG$ answers with the bit $b_{i\smallmod p}$. We claim that this algorithm is strictly $\ceil{c}$-competitive. It is clear that the algorithm produces a feasible output. Furthermore, if $\ALG$ answers $1$ in round $i$, it must be the case that at least one input bit in $C_{i\smallmod p}$ is $1$. Since the size of each $C_j$ is at most $\ceil{n/p}\leq\ceil{c}$, this implies that $\ALG$ is strictly $\ceil{c}$-competitive.
\end{proof}

\subsection{The Maximization Version}
We also consider \S in a maximization version. One can view this as a dual version of \minS.

\begin{definition}
The \emph{maximum asymmetric string guessing problem with unknown history}, \maxSu, is identical to \minSu, except that the score function is different: The score of a feasible output $y$ is $\nz{y}$, and the score of an infeasible output is $-\infty$. The goal is to maximize the score.
\end{definition}
The maximum asymmetric string guessing problem with {\em known history} is defined similarly:
\begin{definition}
\label{maxSkdef}
The \emph{maximum asymmetric string guessing problem with known history}, \maxSk, is identical to \minSk, except that the score function is different: The score of a feasible output $y$ is $\nz{y}$, and the score of an infeasible output is $-\infty$. The goal is to maximize the score.
\end{definition}

We collectively refer to the two problems as \maxS. 
Similarly, \minSu and \maxSu are collectively called \Su, and \minSk and \maxSk are collectively called \Sk.

An algorithm for \maxS without advice cannot attain any competitive ratio bounded by a function of $n$. If such an algorithm would ever answer $0$ in some round, an adversary would let the correct answer be $1$ and the algorithm's output would be infeasible. On the other hand, answering $1$ in every round gives an output with a profit of zero.

Consider instances of \minS and \maxS with the same correct string $x$. It is clear that the optimal solution is the same for both instances. However, as is usual with dual versions of a problem, they differ with respect to approximation. For example, if half of the bits in $x$ are $1$s, then we get a $2$-competitive solution $y$ for the \minS instance by answering $1$ in each round. However, in \maxS, the profit of the same solution $y$ is zero. Despite this, there is
a similar result to Theorem~\ref{trivialmin} for \maxS. 

\begin{theorem} \label{trivialmax} For any $c\geq 1$, there is a strictly $\ceil{c}$-competitive algorithm for \maxS which reads $\ceil{n/c}+O(\log n)$ bits of advice. \end{theorem}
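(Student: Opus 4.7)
The plan is to adapt Theorem \ref{trivialmin}, but the naive adaptation---one advice bit per residue class signalling whether the class contains only zeros---fails for \maxS: if every class contains at least one $1$, the algorithm is forced to output $1$ everywhere and gains no profit, whereas $\OPT$ can still be large. Instead, the oracle will pick one single residue class of $[n]$ and explicitly transmit the bits of $x$ restricted to that class; outside that class the algorithm simply outputs $1$, which is always safe.

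In detail, partition $[n]$ into $\ceil{c}$ classes $C_j=\{i\in[n]:i\equiv j\pmod{\ceil{c}}\}$ for $j=1,\ldots,\ceil{c}$, so that $|C_j|\leq \ceil{n/\ceil{c}}\leq \ceil{n/c}$ for every $j$. Let $j^*=\arg\max_j \nz{x|_{C_j}}$; since the classes partition $[n]$, an averaging argument gives $\nz{x|_{C_{j^*}}}\geq \nz{x}/\ceil{c}$. The oracle writes $\ceil{c}$ and $j^*$ in a self-delimiting form (using $O(\log n)$ bits) followed by the $|C_{j^*}|$ bits of $x|_{C_{j^*}}$ in order. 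On request $i$, the algorithm outputs $y_i=1$ if $i\notin C_{j^*}$; otherwise it outputs the next unread bit of the stored string $x|_{C_{j^*}}$.

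Feasibility is immediate: outside $C_{j^*}$ we have $y_i=1\geq x_i$, and on $C_{j^*}$ we have $y_i=x_i$, so $x\sq y$. The profit satisfies
\[
\nz{y}=\nz{x|_{C_{j^*}}}\geq \frac{\nz{x}}{\ceil{c}}=\frac{\OPT(\sigma)}{\ceil{c}},
\]
giving strict $\ceil{c}$-competitiveness. The total advice is $\ceil{n/\ceil{c}}+O(\log n)\leq \ceil{n/c}+O(\log n)$ bits, as claimed. Since the algorithm uses only the advice and the round index---ignoring both the empty requests and any history revealed in the known-history variant---the same construction works verbatim for \maxSu and \maxSk. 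There is no genuine obstacle; the only technical care needed is the self-delimiting encoding of $\ceil{c}$ and $j^*$, which accounts for the additive $O(\log n)$ term.
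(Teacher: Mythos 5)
Your proof is correct and follows essentially the same approach as the paper's: partition into $\ceil{c}$ parts of size at most $\ceil{n/c}$, identify the part with the most $0$s, transmit that part's bits verbatim, and answer $1$ elsewhere. The only difference is cosmetic---you use residue classes where the paper uses consecutive blocks---and your observation that the naive adaptation of Theorem~\ref{trivialmin} fails for \maxS is exactly the motivation the paper has in mind.
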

\begin{proof}
We will prove the result for \maxSu. Clearly, it then also holds for \maxSk.

The oracle partitions the input string $x=x_1\ldots x_n$ into $\ceil{c}$ disjoint blocks, each containing (at most) $\ceil{\frac{n}{c}}$ consecutive bits. Note that there must exist a block where the number of $0$s is at least $\nz{x} / \ceil{c}$. The oracle uses $O(\log n)$ bits to encode the index $i$ in which this block starts and the index $i'$ in which it ends. Furthermore, the oracle writes the string $x_i\ldots x_{i'}$ onto the advice tape, which requires at most $\ceil{\frac{n}{c}}$ bits, since this is the largest possible size of a block. The algorithm learns the string $x_i\ldots x_{i'}$ and answers accordingly in rounds $i$ to $i'$. In all other rounds, the algorithm answers $1$. Since the profit of this output is at least $\nz{x}/\ceil{c}$, it follows that $\ALG$ is strictly $\ceil{c}$-competitive.
\end{proof}

In the following section, we determine the amount of advice an algorithm needs to achieve some competitive ratio $c>1$. It turns out that the algorithms from Theorems \ref{trivialmin} and \ref{trivialmax} use the asymptotically smallest possible number of advice bits, but the coefficient in front of the term $n/c$ can be improved.


\section{Advice Complexity of ASG}
\label{section:ac}
In this section we give upper and lower bounds on the number of advice bits necessary to obtain $c$-competitive ASG algorithms, for some $c>1$.
The bounds are tight up to $O(\log n)$ bits.
For \Su, the gap between the upper and lower bounds stems only from the fact that the advice used for the upper bound includes the number, $n$, of requests and the number, $t$, of 1-bits in the input. 
Since the lower bound is shown to hold even if the algorithm knows $n$ and $t$, this slight gap is to be expected.

The following two observations will be used extensively in the analysis.
\begin{remark}
\label{strictremark}
Suppose that a \minS algorithm, $\ALG$, is $c$-competitive. By definition, there exists a constant, $\alpha$, such that $\ALG(\sigma)\leq c\cdot\OPT(\sigma)+\alpha$. Then, one can construct a new algorithm, $\ALG'$, which is {\em strictly} $c$-competitive and uses $O(\log n)$ additional advice bits as follows: 

Use $O(\log n)$ bits of advice to encode the length $n$ of the input and use $\alpha \cdot \ceil{\log n}=O(\log n)$ bits of advice to encode the index of (at most) $\alpha$ rounds in which $\ALG$ guesses $1$ but where the correct answer is $0$. Clearly, $\ALG'$ can use this additional advice to achieve a strict competitive ratio of $c$. 

This also means that a lower bound of $b$ on the number of advice bits required to be {\em strictly} $c$-competitive implies a lower bound of $b-O(\log n)$ advice bits for being $c$-competitive (where the constant hidden in $O(\log n)$ depends on the additive constant $\alpha$ of the $c$-competitive algorithm). 

The same technique can be used for \maxS.
\end{remark}

\begin{remark}
\label{2balgs}
For a minimization problem, an algorithm, $\ALG$, using $b$ bits of advice can be converted into $2^b$ algorithms, $\ALG_1,\ldots , \ALG_{2^b}$, without advice, one for each possible advice string, such that $\ALG(\sigma)=\min_i \ALG_i(\sigma)$ for any input sequence $\sigma$. The same holds for maximization problems, except that in this case, $\ALG(\sigma)=\max_i \ALG_i(\sigma)$.
\end{remark}

For ASG with unknown history, the output of a deterministic algorithm can depend only on the advice, since no information is revealed to the algorithm through the input. Thus, for \minSu and \maxSu, a deterministic algorithm using $b$ advice bits can produce only $2^b$ different outputs, one for each possible advice string.

\subsection{Using Covering Designs}
In order to determine the advice complexity of \S, we will use some basic results from the theory of combinatorial designs. 
We start with the definition of a covering design.

For any $k \in \mathbb{N}$, a {\em $k$-set} is a set of cardinality $k$.
Let $v\geq k\geq t$ be positive integers. A \emph{(v,k,t)-covering design} is a family of $k$-subsets (called \emph{blocks}) of a $v$-set, $S$, such that any $t$-subset of $S$ is contained in at least one block. The {\em size} of a covering design, $D$, is the number of blocks in $D$. The \emph{covering number}, $C(v,k,t)$, is the smallest possible size of a $(v,k,t)$-covering design. 
Many papers have been devoted to the study of these numbers. See \cite{opac-b1088981} for a survey. The connection to \S is that for inputs to \minS where the number of $1$s is $t$, an $(n, \floor{ct}, t)$-covering design can be used to obtain a strictly $c$-competitive algorithm. 

It is clear that a $(v, k, t)$-covering design always exists. Since a single block has exactly $\binom{k}{t}$ $t$-subsets, and since the total number of $t$-subsets of a set of size $v$ is $\binom{v}{t}$, it follows that ${\binom{v}{t}}/{\binom{k}{t}}\leq C(v,k,t)$. We will make use of the following upper bound on the size of a covering design:
\begin{lemma}[Erd\H{o}s, Spencer \cite{erdos-spencer}] 
\label{erdos}For all natural numbers $v\geq k\geq t$,
\begin{equation*}
\frac{\binom{v}{t}}{\binom{k}{t}}\leq C(v,k,t)\leq \frac{\binom{v}{t}}{\binom{k}{t}}\left(1+\ln\binom{k}{t}\right)
\end{equation*}
\end{lemma}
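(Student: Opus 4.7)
The plan is to prove the two inequalities separately, with the lower bound following from a straightforward double-counting argument and the upper bound from the probabilistic method. Throughout, set $N=\binom{v}{t}$ and $M=\binom{k}{t}$. For the lower bound, observe that every block is a $k$-set and thus contains exactly $M$ distinct $t$-subsets of the base $v$-set $S$. Since a $(v,k,t)$-covering design must cover each of the $N$ $t$-subsets of $S$ at least once, a counting argument (each $t$-subset is assigned to at least one block that covers it, and each block is assigned to at most $M$ of them) yields immediately that the number of blocks is at least $N/M$.

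For the upper bound, the approach I would take is to choose $m$ blocks $B_1,\ldots,B_m$ independently and uniformly at random from among all $\binom{v}{k}$ $k$-subsets of $S$, for a suitable $m$ to be optimized later. For any fixed $t$-subset $T\subseteq S$, the probability that $T\subseteq B_i$ equals $\binom{v-t}{k-t}/\binom{v}{k}$, which equals $M/N$ by the identity $\binom{v}{k}\binom{k}{t}=\binom{v}{t}\binom{v-t}{k-t}$. Hence the probability that $T$ is not covered by any $B_i$ is $(1-M/N)^m\leq e^{-mM/N}$, and by linearity of expectation the expected number of uncovered $t$-subsets after the $m$ random picks is at most $Ne^{-mM/N}$.

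I would then take $m\approx (N/M)\ln M$, which makes this expected value at most $N/M$. Thus there exists a specific outcome of the random process in which at most $N/M$ $t$-subsets remain uncovered. For each such remaining $t$-subset $T$, simply add one arbitrary $k$-set containing $T$ (which exists because $v\geq k\geq t$). The resulting family is a $(v,k,t)$-covering design of size at most
$$m+\frac{N}{M}\leq \frac{N}{M}\ln M+\frac{N}{M}=\frac{N}{M}\bigl(1+\ln M\bigr),$$
which is exactly the claimed upper bound on $C(v,k,t)$.

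The main obstacle is a minor bookkeeping issue: $m$ must be an integer, and rounding $(N/M)\ln M$ up introduces a slack of at most one block. This can be absorbed using the fact that the inequality $(1-M/N)^m\leq e^{-mM/N}$ is not tight, or by observing that the extra block is compensated by integrality of $C(v,k,t)$. The conceptual content is standard (it is essentially the probabilistic/greedy analysis for \textsc{Set Cover}, specialized to the symmetric setting where every set covers exactly $M$ elements and the natural fractional solution has value $N/M$), so no truly difficult step is expected.
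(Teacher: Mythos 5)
The paper does not prove this lemma; it cites Erd\H{o}s and Spencer directly, so there is no in-paper argument to compare against. Your sketch is the standard probabilistic-method proof and it is essentially correct: the lower bound is the double-counting bound, and the upper bound is the ``random blocks then patch up the remainder'' construction, with the key computation being that a uniformly random $k$-set contains a fixed $t$-set with probability exactly $M/N$.

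One small caveat on your handling of the integrality of $m$. Your first suggested fix is the right one: the slack between $(1-M/N)^m$ and $e^{-mM/N}$ does pay for rounding $m$ up. If one sets $p=M/N$, $q=-\ln(1-p)\geq p$, and minimizes $g(m)=m+N(1-p)^m$ over real $m$, the minimum value equals $(1+\ln(Nq))/q$, which is at most $(N/M)(1+\ln M)$ because the function $q\mapsto(1+\ln(Nq))/q$ is non-increasing for $Nq\geq 1$; the strict inequality for $M\geq 2$ provides enough room to move to the nearest integer, and the case $M=1$ (i.e., $k=t$) can be settled directly by taking $m=0$. Your second suggested fix, however, does not work as stated: knowing that $C(v,k,t)$ is an integer lets you floor a non-integer upper bound, but if the argument only establishes $C\leq (N/M)(1+\ln M)+1$, flooring does not bring you back down to $(N/M)(1+\ln M)$. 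So the ``integrality of $C$'' remark should be dropped; the ``inequality is not tight'' remark is the one to keep and flesh out.
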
 

We use Lemma~\ref{erdos} to express both the upper and lower bound in terms of (a quotient of) binomial coefficients. This introduces an additional difference of $\log n$ between the stated lower and upper bounds. 

Lemma~\ref{calclower} in Appendix \ref{calcbounds} shows how the bounds we obtain can be approximated by a closed formula, avoiding binomial coefficients. This approximation costs an additional (additive) difference of $O(\log n)$ between the lower and upper bounds. The approximation is in terms of the following function:

\begin{equation*}
\label{bnc}
B(n,c)= \log\left(1+\frac{(c-1)^{c-1}}{c^{c}}\right)n
\end{equation*} 
For $c>1$, we show that $B(n,c)\pm O(\log n)$ bits of advice are necessary and sufficient to achieve a (strict) competitive ratio of $c$, for any version of \S. See Figure \ref{graf} for a graphical view. It can be shown (Lemma~\ref{simplelemma}) that $$\frac{1}{e \ln(2)}\frac{n}{c}\leq B(n,c)\leq \frac{n}{c}\,.$$ In particular, if $c=o(n /\log n)$, we see that $O(\log n)$ becomes a lower-order additive term. Thus, for this range of $c$, we determine exactly the higher-order term in the advice complexity of \S. Since this is the main focus of our paper, we will often refer to $O(\log n)$ as a lower-order additive term. The case where $c=\Omega (n/\log n)$ is treated separately in Section \ref{smalladvice}.

\begin{figure}[!ht]
\begin{center}
\includegraphics[width=0.98\textwidth]{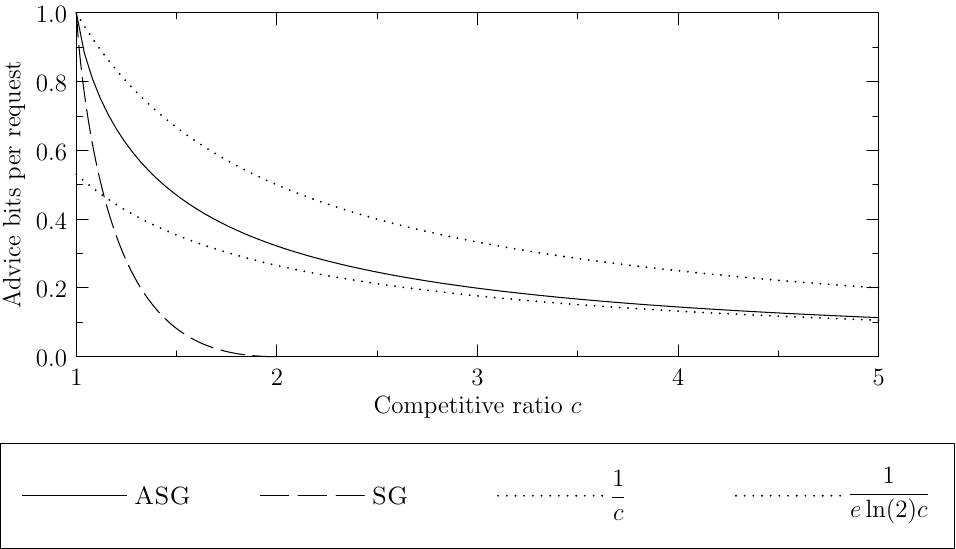}
\end{center}
\caption{The solid line shows the number of advice bits per request which are necessary and sufficient for obtaining a (strict) competitive ratio of $c$ for \S (ignoring lower-order terms). The dashed line shows the same number for the original binary string guessing problem \oldS \cite{SG}. The dotted lines are the functions $1/c$ and $1/(e\ln (2) c)$.}
\label{graf}
\end{figure}

\FloatBarrier

\subsection{Advice Complexity of \minS}
\label{sectionmins}

We first consider \minS with unknown history. 
Clearly, an upper bound for \minSu is also valid for \minSk.
We will show that the covering number $C(v,k,t)$ is very closely related to the advice complexity of \minSu. 

\begin{theorem}
For any $c>1$, there exists a strictly $c$-competitive algorithm for \minS reading $\advice$ bits of advice, where
$$
\advice \leq B(n,c)+O(\log n).
$$
\label{covup}
\end{theorem}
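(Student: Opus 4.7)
\medskip

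\noindent\textbf{Proof sketch.} I will prove the result for \minSu; the same algorithm works for \minSk. The plan is to use covering designs. Let $t=\no{x}$ be the number of $1$s in the input string and let $X=\{i\in [n]:x_i=1\}$, so $|X|=t=\OPT$. If $t=0$ the oracle writes a single bit indicating this and the algorithm outputs $0^n$; otherwise, let $k=\floor{ct}\geq 1$. For each pair $(n,t)$ with $1\leq t\leq n$, fix in advance (known to both oracle and algorithm) a smallest $(n,k,t)$-covering design $\mathcal{D}(n,t)$ on the point set $[n]$, and an ordering of its blocks. Since $X$ is a $t$-subset of $[n]$, there is some block $B\in\mathcal{D}(n,t)$ with $X\subseteq B$. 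The oracle writes $n$ and $t$ on the tape using a self-delimiting encoding ($O(\log n)$ bits), followed by the index of $B$ in $\mathcal{D}(n,t)$, using $\ceil{\log C(n,k,t)}$ bits.

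The algorithm reads $n,t$, computes $k=\floor{ct}$ and $\mathcal{D}(n,t)$, recovers $B$ from its index, and outputs $y$ with $y_i=1\iff i\in B$. Since $X\sq B$ the output is feasible, and its cost is $|B|=\floor{ct}\leq ct=c\cdot\OPT$, so the algorithm is strictly $c$-competitive.

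It remains to bound the advice. Using the upper bound of Lemma~\ref{erdos},
\begin{equation*}
\log C(n,k,t)\leq \log\frac{\binom{n}{t}}{\binom{k}{t}}+\log\!\left(1+\ln\binom{k}{t}\right)=\log\frac{\binom{n}{t}}{\binom{k}{t}}+O(\log n),
\end{equation*}
since $\binom{k}{t}\leq 2^n$. Hence the total advice is at most
\begin{equation*}
\log\frac{\binom{n}{t}}{\binom{\floor{ct}}{t}}+O(\log n).
\end{equation*}
Because the oracle can choose the encoding depending on $t$ but the worst-case bound $b(n)$ must hold for every input, I take the maximum over $t\in\{1,\dots,n\}$. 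By Lemma~\ref{calclower} in Appendix~\ref{calcbounds}, this maximum is at most $B(n,c)+O(\log n)$, where the extremal $t$ lies between $n/(ec)$ and $n/(2c)$. Combining the two $O(\log n)$ terms gives $\advice\leq B(n,c)+O(\log n)$, as required.

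\medskip

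\noindent\textbf{Main obstacle.} The algorithmic idea — compress an arbitrary $t$-subset of $[n]$ into the index of a covering block of size $\floor{ct}$ — is transparent once covering designs are on the table; the price of the ``slack'' $k>t$ is what converts $\log\binom{n}{t}$ bits into the smaller quantity $\log\binom{n}{t}/\binom{k}{t}$. The only non-routine step is the clean asymptotic evaluation
$$\max_t \log\frac{\binom{n}{t}}{\binom{\floor{ct}}{t}}=\log\!\left(1+\frac{(c-1)^{c-1}}{c^{c}}\right)n\pm O(\log n),$$
which is deferred to Lemma~\ref{calclower} in the appendix; I would prove it by applying Stirling's approximation to both binomial coefficients and optimizing the resulting entropy-like expression in $t/n$, where the first-order condition yields the ratio $t/n=(c-1)^{c-1}/c^{c}/(1+(c-1)^{c-1}/c^{c})$ and a direct substitution recovers the claimed closed form.
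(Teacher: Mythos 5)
Your proof is correct and follows essentially the same route as the paper's: encode $n$ and $t=\no{x}$ in $O(\log n)$ bits, then encode the index of a covering block $B$ of size $\lfloor ct\rfloor$ that contains the support of $x$, and bound the covering number via Lemma~\ref{erdos} and Lemma~\ref{calclower}. The only thing you elide is the edge case $\lfloor ct\rfloor \ge n$: when this happens an $(n,\lfloor ct\rfloor,t)$-covering design does not exist (blocks would be larger than $[n]$), but the algorithm can simply output $1^n$, which has cost $n \le ct$ and needs no covering-design advice; this also matches the restriction $\lfloor ct\rfloor < n$ in the maximum inside Lemma~\ref{calclower}. With that one-line case added, your argument is complete and coincides with the paper's.
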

\begin{proof}
We will define an algorithm $\ALG$ and an oracle $\ORACLE$ for \minSu such that $\ALG$ is strictly $c$-competitive and reads at most $\advice$ bits of advice. Clearly, the same algorithm can be used for \minSk. 

Let $x=x_1\ldots x_n$ be an input string to \minSu and set $t=\no{x}$. The oracle $\ORACLE$ writes the value of $n$ to the advice tape using a self-delimiting encoding. Furthermore, the oracle writes the value of $t$ to the advice tape using $\lceil \log n\rceil$ bits (this is possible since $t\leq n$). Thus, this part of the advice uses at most $3\ceil{\log n}+1$ bits in total.

If $\floor{ct}\geq n$, then $\ALG$ will answer $1$ in each round. If $t=0$, $\ALG$ will answer 0 in each round.

If $0<\floor{ct}<n$, then $\ALG$ computes an optimal $(n, \floor{ct}, t)$-covering design as follows: $\ALG$ tries (in lexicographic order, say) all possible sets of $\floor{ct}$-blocks, starting with sets consisting of one block, then two blocks, and so on. For each such set, $\ALG$ can check if it is indeed an $(n, \floor{ct}, t)$-covering design. As soon as a valid covering design, $D$, is found, the algorithm can stop, since $D$ will be a smallest possible $(n, \floor{ct}, t)$-covering design.

Now, $\ORACLE$ picks a $\floor{ct}$-block, $S_y$, from $D$, such that the characteristic vector $y$ of $S_y$ satisfies that $x\sq y$. Note that, since $\ALG$ is deterministic, the oracle knows which covering design $\ALG$ computes and the ordering of the blocks in that design. The oracle then writes the index of $S_y$ on the advice tape. This requires at most $\ceil{\log C(n, \floor{ct}, t)}$ bits of advice. 

$\ALG$ reads the index of the $\floor{ct}$-block $S_y$ from the advice tape and answers $1$ in round $i$ if and only if the element $i$ belongs to $S_y$. Clearly, this will result in $\ALG$ answering $1$ exactly $\floor{ct}$ times and producing a feasible output. It follows that $\ALG$ is strictly $c$-competitive. Furthermore, the number of bits read by $\ALG$ is $$b \leq \left\lceil\log \left(\max_{t\colon \floor{ct}<n}C(n, \floor{ct}, t)\right)\right\rceil+3\lceil \log n\rceil+1\,.$$ 
The theorem now follows from Lemma \ref{calclower}, Inequality~(\ref{UpperIn2}).
\end{proof}

We now give an almost matching lower bound. 

\begin{theorem}
For any $c>1$, a $c$-competitive algorithm $\ALG$ for \minSu must read $\advice$ bits of advice, where
$$\advice \geq B(n,c)-O(\log n)\,.$$
\label{covlow}
\end{theorem}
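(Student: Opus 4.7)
The plan is to show that any strictly $c$-competitive algorithm for \minSu must, in effect, produce a family of candidate output strings that forms a covering design, and then invoke the Erd\H{o}s--Spencer bound together with Lemma~\ref{calclower} to translate the covering-design lower bound into the desired expression $B(n,c)-O(\log n)$. By Remark~\ref{strictremark}, it suffices to prove the bound under the assumption that $\ALG$ is \emph{strictly} $c$-competitive (losing only an additive $O(\log n)$).

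By Remark~\ref{2balgs}, an algorithm $\ALG$ using $b$ bits of advice can be converted into $2^b$ deterministic algorithms $\ALG_1,\dots,\ALG_{2^b}$ without advice. Since in \minSu the requests carry no information whatsoever, each $\ALG_j$ is forced to produce the same fixed output string $y^{(j)}\in\{0,1\}^n$ on every input. Call $Y=\{y^{(1)},\dots,y^{(2^b)}\}$. Fix any $t$ with $1\le t$ and $\lfloor ct\rfloor<n$. For every input string $x$ with $\no{x}=t$, strict $c$-competitiveness forces some $y^{(j)}\in Y$ to be feasible, i.e.\ $x\sq y^{(j)}$, and to satisfy $\no{y^{(j)}}\le ct$; since $\no{y^{(j)}}$ is an integer, this is $\no{y^{(j)}}\le\lfloor ct\rfloor$. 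Restrict $Y$ to $Y'=\{y\in Y:\no{y}\le\lfloor ct\rfloor\}$, and pad each $y\in Y'$ by flipping additional $0$s to $1$s until its Hamming weight equals exactly $\lfloor ct\rfloor$; padding only enlarges the support, so the resulting collection $Y''$ still contains every $t$-subset of $[n]$.

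Thus, viewing the elements of $Y''$ as $\lfloor ct\rfloor$-blocks of $[n]$, the collection $Y''$ is an $(n,\lfloor ct\rfloor,t)$-covering design. Hence
\begin{equation*}
2^b\ \ge\ |Y|\ \ge\ |Y'|\ =\ |Y''|\ \ge\ C(n,\lfloor ct\rfloor,t)\ \ge\ \frac{\binom{n}{t}}{\binom{\lfloor ct\rfloor}{t}},
\end{equation*}
where the last inequality is the elementary lower bound from Lemma~\ref{erdos}. Taking logarithms and choosing $t$ to maximise the right-hand side yields
\begin{equation*}
b\ \ge\ \max_{t:\,\lfloor ct\rfloor<n}\log\!\left(\binom{n}{t}\Big/\binom{\lfloor ct\rfloor}{t}\right).
\end{equation*}

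Finally, Lemma~\ref{calclower} (in Appendix~\ref{calcbounds}) shows that this maximum is at least $B(n,c)-O(\log n)$, completing the bound for strictly $c$-competitive algorithms. Applying Remark~\ref{strictremark} once more transfers the bound to $c$-competitive algorithms at the cost of absorbing another $O(\log n)$ term. The argument itself is short and conceptual; the only real work is packaged inside the calculation of Lemma~\ref{calclower}, which approximates the binomial-coefficient expression by the closed form $B(n,c)=\log\!\bigl(1+(c-1)^{c-1}/c^c\bigr)n$. That approximation, together with the rounding from $ct$ to $\lfloor ct\rfloor$, is the only source of the additive $O(\log n)$ slack between the matching upper bound of Theorem~\ref{covup} and the lower bound here.
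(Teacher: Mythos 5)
Your proof is correct and takes essentially the same route as the paper's: reduce to the strict case via Remark~\ref{strictremark}, bound the number of distinct outputs by $2^b$ via Remark~\ref{2balgs}, pad each feasible output of weight at most $\floor{ct}$ up to exactly $\floor{ct}$ to obtain an $(n,\floor{ct},t)$-covering design of size at most $2^b$, and finish with Lemma~\ref{calclower}. (One trivial slip: padding could in principle merge two distinct outputs, so $\ab{Y'}=\ab{Y''}$ should be $\ab{Y'}\ge\ab{Y''}$, but this does not affect the conclusion.)
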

\begin{proof}
By Remark~\ref{strictremark}, it suffices to prove the lower bound for strictly $c$-competitive algorithms.
Suppose that $\ALG$ is strictly $c$-competitive. Let $\advice$ be the number of advice bits read by $\ALG$ on inputs of length $n$. 
For $0 \leq t \leq n$, let $I_{n,t}$ be the set of input strings of length $n$ with Hamming weight $t$, and let $Y_{n,t}$ be the corresponding set of output strings produced by \ALG.
We will argue that, for each $t$, $0 \leq \lfloor ct \rfloor \leq n$, $Y_{n,t}$ can be converted to an $(n,\floor{ct},t)$-covering design of size at most $2^{b}$.

By Remark~\ref{2balgs}, $\ALG$ can produce at most $2^{b}$ different output strings, one for each possible advice string. Now, for each input string, $x\in I_{n,t}$, there must exist some advice which makes $\ALG$ output a string $y$, where $\no{y}\leq \floor{ct}$ and $x\sq y$. If not, then $\ALG$ is not strictly $c$-competitive. For each possible output $y\in\{0,1\}^n$ computed by $\ALG$, we convert it to the set $S_y\subseteq [n]$ which has $y$ as its characteristic vector. If $\no{y}<\floor{ct}$, we add some arbitrary elements to $S_y$ so that $S_y$ contains exactly $\floor{ct}$ elements. Since $\ALG$ is strictly $c$-competitive, this conversion gives the blocks of an $(n, \floor{ct}, t)$-covering design. The size of this covering design is at most $2^{b}$, since $\ALG$ can produce at most $2^{b}$ different outputs. It follows that $C(n, \floor{ct}, t)\leq 2^{b}$, for all $t$, $0 \leq \lfloor ct \rfloor \leq n$. 
Thus, $$b \geq \log \left(\max_{t\colon\floor{ct}<n}C(n, \floor{ct}, t)\right).$$ 
The theorem now follows from Lemma \ref{calclower}, Inequality~(\ref{LowerIn2}).
\end{proof}

Note that the proof of Theorem \ref{covlow} relies heavily on the unknown history in order to bound the total number of possible outputs. However, Theorem~\ref{lminknown} below states that the lower bound of $B(n,c)-O(\log n)$ also holds for \minSk. 
In order to prove this, we show how an adversary can ensure that revealing the correct answers for previous requests does not give the algorithm too much extra information. The way to ensure this depends on the specific strategy used by the algorithm and oracle at hand, and so the proof is more complicated than that of Theorem \ref{covlow}.

\begin{theorem}
For any $c>1$, a $c$-competitive algorithm for \minSk must read $b$ bits of advice, where
$$\advice \geq B(n,c)-O(\log n)\,.$$
\label{lminknown}
\end{theorem}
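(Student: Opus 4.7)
The plan is to follow the counting framework of Theorem~\ref{covlow}, with the essential new ingredient being a combinatorial lemma that bounds, per advice string, the number of weight-$t$ inputs the algorithm can handle even when histories are revealed. By Remark~\ref{strictremark}, it suffices to treat the strict case. For each advice $\phi$, let $H_\phi^t$ be the set of inputs $x$ with $\no{x}=t$ such that $\ALG$ with advice $\phi$ produces a feasible output of cost at most $\floor{ct}$ on $x$. Since every weight-$t$ input must be handled by some advice, $\binom{n}{t} \leq \sum_\phi |H_\phi^t|$. Once we establish the key bound $|H_\phi^t| \leq \binom{\floor{ct}}{t}$, we obtain $2^b \geq \binom{n}{t} / \binom{\floor{ct}}{t}$, and maximizing over $t$ and invoking Lemmas~\ref{erdos} and~\ref{calclower} will yield $b \geq B(n,c) - O(\log n)$.

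The main obstacle is the combinatorial lemma, since in the known-history setting a single advice can produce different outputs on different inputs, so the direct counting of Theorem~\ref{covlow} fails. I intend to prove the following stronger statement by induction on $n$: for any deterministic strategy $s$ for an $n$-round asymmetric string guessing game with known history and remaining budget $B$ for output-$1$s, the number of weight-$t$ inputs that produce a feasible output of weight at most $B$ is at most $\binom{B}{t}$. The base cases $n = 0$ (only the empty string exists) and $B = 0$ (only the all-zero output is admissible) are immediate, using the convention that $\binom{B}{t}=0$ for $B<t$.

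For the inductive step, consider $y_1 = s(\emptyset)$. If $y_1 = 0$, feasibility forces $x_1 = 0$ for every handled input, so handling $x = 0\, x_2 \cdots x_n$ reduces to handling $x_2 \cdots x_n$ in the sub-strategy at history $0$ with unchanged budget~$B$; by the induction hypothesis this gives at most $\binom{B}{t}$. If $y_1 = 1$, the remaining budget drops to $B - 1$, and handled inputs split according to $x_1$: the $x_1 = 0$ branch contributes at most $\binom{B-1}{t}$ handled continuations, and the $x_1 = 1$ branch contributes at most $\binom{B-1}{t-1}$, so by Pascal's identity $\binom{B-1}{t} + \binom{B-1}{t-1} = \binom{B}{t}$ the step closes.

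The conceptual point is that every branching in the algorithm's decision tree happens exactly at a round where the algorithm outputs~$1$, which also consumes one unit of budget; this matching between branching and budget is precisely what Pascal's identity captures. In other words, the ability to adapt to history cannot enlarge $|H_\phi^t|$ beyond the non-adaptive bound $\binom{\floor{ct}}{t}$, which is why the lower bound for \minSk matches that for \minSu up to the $O(\log n)$ additive term, exactly as claimed.
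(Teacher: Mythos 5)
Your proof is correct and, while the outer counting framework is the same as the paper's (partition the weight-$t$ inputs by advice string, bound the number handled per advice, and conclude by pigeonhole), your key combinatorial lemma is proved by a genuinely different induction. The paper phrases the lemma in terms of an adversarial cost function $\mincost(m,h)$ on the set of "alive" strings and inducts on the combinatorial parameters $(m,h)$; the inductive step partitions the alive strings by the value of the current bit, introduces thresholds $\dzero,\done$, and splits into cases according to whether $\done+1\geq\dstart$, invoking Pascal's identity in one branch. You instead count directly, inducting on the number of remaining rounds $n$ and the budget $B$ of $1$-answers, walking the algorithm's decision tree: a $0$-answer restricts the handled inputs to the $x_1=0$ branch without consuming budget, a $1$-answer consumes one budget unit and splits the count across the $x_1=0$ and $x_1=1$ subtrees, and Pascal's identity closes the step. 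This sidesteps the paper's explicit adversary game, the preliminary "simple observations" normalizing the algorithm's behaviour, and the case analysis on $\dzero,\done$; it also makes the conceptual point --- branching happens exactly when budget is spent --- transparent. Both arguments yield $2^b\geq\binom{n}{t}/\binom{\floor{ct}}{t}$ and finish via Lemma~\ref{calclower}; you do not actually need Lemma~\ref{erdos} since your bound is stated directly as a quotient of binomial coefficients rather than a covering number, which is a minor simplification of the final step as well.
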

\begin{proof}
By Remark~\ref{strictremark}, it suffices to prove the lower bound for strictly $c$-competitive algorithms.
Consider the set, $I_{n,t}$, of input strings of length $n$ and Hamming weight $t$, for some $t$ such that $\lfloor ct \rfloor \leq n$.
Restricting the input set to strings with one particular Hamming weight can only weaken the adversary.

Let $\ALG$ be a strictly $c$-competitive algorithm for \minSk which reads at most $\advice$ bits of advice for any input of length $n$. 
For an advice string $\varphi$, denote by $I_\varphi\subseteq I_{n,t}$ the set of input strings for which $\ALG$ reads the advice $\varphi$.
Since we are considering \minSk, in any round, $\ALG$ may use both the advice string and the information about the correct answer for previous rounds when deciding on an answer for the current round. 

We will prove the lower bound by considering the computation of $\ALG$, when reading the advice $\varphi$, as a game between $\ALG$ and an adversary. This game proceeds according to the rules specified in Definition~\ref{minSkdef}. In particular, at the beginning of round $i$, the adversary reveals the correct answer $x_{i-1}$ for round $i-1$ to $\ALG$. Thus, at the beginning of round $i$, the algorithm knows the first $i-1$ bits, $x_1,\ldots ,x_{i-1}$, of the input string. We say that a string $s\in I_{\varphi}$ is \emph{alive} in round $i$ if $s_j=x_j$ for all $j<i$, and we denote by $I_\varphi^i\subseteq I_{\varphi}$ the set of strings which are alive in round $i$. The adversary must reveal the correct answers in a way that is consistent with $\varphi$. That is, in each round, there must exist at least one string in $I_{\varphi}$ which is alive. 

We first make two simple observations: 
\begin{itemize}
\item Suppose that, in some round $i$, there exists a string $s\in I_\varphi^i$ such that $s_i=1$. Then, $\ALG$ must answer $1$, or else the adversary can choose $s$ as the input string and thereby force $\ALG$ to incur a cost of $\infty$. Thus, we will assume that $\ALG$ always answers $1$ in such rounds.
\item On the other hand, if, in round $i$, all $s \in I_\varphi^i$ have $s_i=0$, then $\ALG$ is free to answer 0. We will assume that $\ALG$ always answers $0$ in such rounds.
\end{itemize}

Assume that, at some point during the computation, \iphi contains exactly $m$ strings and exactly $h$ 1s are still to be revealed.
We let $\mincost(m,h)$ be the largest number such that for every set of $m$ different strings of equal length, each with Hamming weight $h$, 
the adversary can force $\ALG$ to incur a cost of at least $\mincost(m,h)$ when starting for this situation.
In other words, $\mincost(m,h)$ is the minimum number of rounds in which the adversary can force $\ALG$ to answer $1$.

\emph{Claim:} For any $m,h \geq 1$, \begin{equation}\label{claiml} \mincost(m,h)\geq \min\left\{\minbits \colon m\leq\binom{\minbits}{h}\right\}.\end{equation}

Before proving the claim, we will show how it implies the theorem. For any $t$,  $0 \leq \floor{ct} < n$, there are $\binom{n}{t}$ possible input strings of length $n$ and Hamming weight $t$. By the pigeonhole principle, there must exist an advice string $\varphi'$ such that $\ab{I_{\varphi'}}\geq\binom{n}{t}/2^{\advice}$. Now, if $m=\ab{I_{\varphi'}}>\binom{\floor{ct}}{t}$, then by (\ref{claiml}), $\mincost(m,t)\geq \min \{\minbits \colon \binom{\floor{ct}}{t} < \binom{\minbits}{t} \} = \floor{ct}+1$. This contradicts the fact that $\ALG$ is strictly $c$-competitive. Thus, it must hold that $\ab{I_{\varphi'}}\leq\binom{\floor{ct}}{t}$. 
Combining the two inequalities involving $\ab{I_{\varphi'}}$, we get
$$\binom{\floor{ct}}{t} \geq \ab{I_{\varphi'}} \geq \frac{\binom{n}{t}}{2^{\advice}} \; \Rightarrow \; 2^{\advice} \geq \frac{\binom{n}{t}}{\binom{\floor{ct}}{t}}$$
Since this holds for all values of $t$, we obtain the lower bound
$$\advice \geq\log \left(\max_{t\colon\floor{ct}< n}\frac{\binom{n}{t}}{\binom{\floor{ct}}{t}}\right).$$
The theorem then follows from Lemma~\ref{calclower} and Inequalities~(\ref{UpperIn1}) and (\ref{LowerIn2}).

\paragraph{Proof of claim:} Fix $1\leq i\leq n$ and assume that, at the beginning of round $i$, there are $m$ strings alive, all of which still have exactly $h$ 1's to be revealed. The rest of the proof is by induction on $m$ and $h$. 

For the {\em base case}, suppose first that $h=1$. Then, for each of the $m$ strings, $s^1,\ldots , s^m\in I_\varphi^i$, there is exactly one index, $i_1$,\ldots , $i_m$, such that $s^1_{i_1}=\cdots = s^m_{i_m} = 1$. Since all strings in $I_\varphi^i$ must be different, it follows that $i_j\neq i_k$ for $j\neq k$. Without loss of generality, assume that $i_1<i_2<\cdots < i_m$. In rounds $i_1,\ldots , i_{m-1}$, the adversary chooses the correct answer to be $0$, while $\ALG$ is forced to answer $1$ in each of these rounds. Finally, in round $i_m$, the adversary reveals the correct answer to be $1$ (and hence the input string must be $s^m$). In total, $\ALG$ incurs a cost of $m$, which shows that $\mincost(m,1)= m$ for all $m\geq 1$. 

Assume now that $m=1$. It is clear that $\mincost(m,h)\geq h$ for all values of $h$. In particular, $\mincost(1,h)=h$. This finishes the base case.

For the {\em inductive step}, fix integers $m,h\geq 2$. Assume that the formula is true for all $(i,j)$ such that $j\leq h-1$ or such that $j=h$ and $i\leq m-1$. We will show that the formula is also true for $(m, h)$. 

Consider the strings $s^1,\ldots , s^m\in I_\varphi^i$ alive at the beginning of round $i$. We partition $I_{\varphi}^i$ into two sets, $S_0=\{s^j \colon s^j_i=0\}$ and $S_1= \{s^j \colon s^j_i=1\}$, and let $\mzero=\ab{S_0}$ and $\mone=\ab{S_1}$. 
Recall that if all sequences $s \in I_{\varphi}^i$ have $s_i=0$, we assume that \ALG answers 0, leaving $m$ and $h$ unchanged.
Thus, we may safely ignore such rounds and assume that $\mzero<m$. We let 
\begin{align*}
\dstart & = \min\left\{\minbits' \colon m\leq\binom{\minbits'}{h}\right\},\\
\dzero  & = \min\left\{\minbits' \colon \mzero\leq\binom{\minbits'}{h}\right\}, \text{ and }\\
\done   & = \min\left\{\minbits' \colon \mone\leq\binom{\minbits'}{h-1}\right\}.
\end{align*}

If $\done+1 \geq \dstart$, then the adversary chooses $1$ as the correct answer in round $i$. By the induction hypothesis, $\mincost(\mone, h-1)\geq \done$. Together with the fact that $\ALG$ is forced to answer $1$ in round $i$, this shows that the adversary can force $\ALG$ to incur a cost of at least $\mincost(\mone, h-1)+1\geq \done+1\geq \dstart$.

On the other hand, if $\done+1<\dstart$, the adversary chooses $0$ as the correct answer in round $i$. Note that this implies that each string alive in round $i+1$ still has exactly $h$ 1's to be revealed. We must have $\done\leq \dstart-2$ since $\done$ and $\dstart$ are both integers. Moreover, by definition of $\dstart$, it holds that $m>\binom{\dstart-1}{h}$. Thus, we get the following lower bound on $\mzero$:
\begin{align*}
\mzero&=m-\mone\\
&> \binom{\dstart-1}{h}-\binom{\done}{h-1}\\
&\geq\binom{\dstart-1}{h}-\binom{\dstart-2}{h-1}, \text{ since $\binom{a}{b}$ is increasing in $a$}\\
&=\binom{\dstart-2}{h}, \text{ by Pascal's Identity}.
\end{align*}
This lower bound on \mzero shows that $\dzero>\dstart-2$, and hence $\dzero\geq \dstart-1$. Combining this with the induction hypothesis gives $\mincost(\mzero,h)\geq \dzero \geq \dstart-1$. Since $m_1\geq 1$, $\ALG$ is forced to answer $1$ in round $i$, so the adversary can make $\ALG$ incur a cost of at least $\mincost(\mzero,h)+1\geq \dstart$. 
\end{proof}


\subsection{Advice Complexity of \maxS}
\label{sectionmaxs}
In this section, we will show that the advice complexity of \maxS is the same as that of \minS, up to a lower-order additive term of $O(\log n)$. 
We use the same techniques as in Section~\ref{sectionmins}.

As noted before, the difficulty of computing a $c$-competitive solution for a specific input string is not the same for \minS and \maxS. The key point is that computing a $c$-competitive solution for \maxS, on input strings with $u$ $0$'s, is roughly as difficult as computing a $c$-competitive solution for \minS, on input strings with $\ceil{u/c}$ $1$'s. 

We show that the proofs of Theorems~\ref{covup}--\ref{lminknown} can easily be modified to give upper and lower bounds on the advice complexity of \maxS. These bounds within the proofs look slightly different from the ones obtained for \minS, but we show in Lemmas~\ref{binomequal} and \ref{calclowerM} that they differ from $B(n,c)$ by at most an additive term of $O(\log n)$. 
\begin{theorem} \label{maxsgalg}
For any $c>1$, there exists a strictly $c$-competitive online algorithm for \maxS reading $\advice$ bits of advice, where
$$
\advice \leq  B(n,c)+O(\log n).
$$
\end{theorem}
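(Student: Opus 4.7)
The plan is to adapt the proof of Theorem~\ref{covup} to the maximization setting. I design an algorithm and oracle for \maxSu; since the construction ignores the revealed correct answers, it also works for \maxSk.

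Given input $x$ with $u = \nz{x}$ and $t = n - u$, the oracle first writes $n$ and $u$ on the advice tape using $O(\log n)$ bits in a self-delimiting encoding. If $u=0$, the algorithm answers $1$ in every round, giving a feasible output of optimal profit $0$. Otherwise, set $k = \lceil u/c \rceil \geq 1$. The central observation is that a strictly $c$-competitive output for $x$ must answer $0$ at some $k$-subset $Z \subseteq [n]$ contained in the zero-positions $Z_x = \{i : x_i = 0\}$, which has size $u$. Hence what is needed is a family $\mathcal{F}$ of $k$-subsets of $[n]$ such that every $u$-subset of $[n]$ contains at least one member of $\mathcal{F}$.

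Next, I will observe that by complementing every subset inside $[n]$, such a family is in bijection with an $(n, n-k, t)$-covering design: $Z \subseteq Z_x$ iff $[n]\setminus Z_x \subseteq [n]\setminus Z$, so ``every $u$-subset contains some $k$-block'' is the same as ``every $t$-subset is contained in some $(n-k)$-block.'' The algorithm deterministically computes a smallest such family $\mathcal{F}$ by exhaustive search in a canonical order, exactly as in Theorem~\ref{covup}; this family has size $C(n, n-k, t)$. Since the oracle knows which family the algorithm produces, it writes the index of some $Z \in \mathcal{F}$ with $Z \subseteq Z_x$ (which exists by the covering property) using $\lceil \log C(n, n-k, t) \rceil$ bits. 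The algorithm then answers $0$ at positions in $Z$ and $1$ elsewhere, producing a feasible output $y$ with $\nz{y} = k \geq u/c$, establishing strict $c$-competitiveness.

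The total number of advice bits is at most $\lceil \log C(n, n - \lceil u/c \rceil, n - u) \rceil + O(\log n)$. Applying Lemma~\ref{erdos} bounds this by $\log\bigl(\binom{n}{t}/\binom{n-k}{t}\bigr) + O(\log n)$. I expect the main obstacle to be the algebraic comparison of this expression with $B(n,c)$ as $u$ ranges over $\{1,\ldots,n\}$: the resulting binomial ratio has a slightly different shape from the one in the \minS analysis (since the block size $n-k$ replaces $\lfloor ct \rfloor$), but this mismatch is exactly what the planned Lemmas~\ref{binomequal} and \ref{calclowerM}, announced at the start of Section~\ref{sectionmaxs}, are designed to reconcile. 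Invoking them yields the desired bound $B(n,c) + O(\log n)$.
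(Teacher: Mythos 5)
Your proposal is correct and follows essentially the same route as the paper: writing $n$ and $u=\nz{x}$ on the tape, building an optimal $(n, n-\lceil u/c\rceil, n-u)$-covering design deterministically, encoding the index of an appropriate block, and then invoking Lemma~\ref{calclowerM} for the closed-form estimate. The only cosmetic difference is that you phrase the block as the set $Z$ of positions where the algorithm answers $0$ (and arrive at the covering design by complementation), whereas the paper directly works with the complementary block $S_y$ of positions where it answers $1$; these are the same construction.
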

\begin{proof}
We will define an algorithm $\ALG$ and an oracle $\ORACLE$ for \maxSu such that $\ALG$ is strictly $c$-competitive and reads at most $\advice$ bits of advice. Clearly, the same algorithm can be used for \maxSk. 

As in the proof of Theorem \ref{covup}, we note that, for any integers $n,u$ where $0<u<n$, the algorithm $\ALG$ can compute an optimal $(n,n-\ceil{u/c},n-u)$-covering design deterministically. 

Let $x=x_1\ldots x_n$ be an input string to \maxSu and set $u=\nz{x}$. The oracle $\ORACLE$ writes the values of $n$ and $u$ to the advice tape using at most $3\ceil{\log n}+1$ bits in total.

If $0<u<n$, then $\ORACLE$ picks an $(n-\ceil{u/c})$-block, $S_y$, from the optimal $(n,n-\ceil{u/c},n-u)$-covering design, as computed by $\ALG$, such that the characteristic vector $y$ of $S_y$ satisfies that $x\sq y$. The oracle writes the index of $S_y$ on the advice tape. This requires at most $\ceil{\log C(n,n-\ceil{u/c},n-u)}$ bits of advice. 

The algorithm, $\ALG$, first reads the values of $n$ and $u$ from the advice tape. If $u=0$, then $\ALG$ will answer $1$ in each round, and if $u=n$, then $\ALG$ will answer $0$ in each round. If $0<u<n$, then $\ALG$ will read the index of the ($n-\ceil{u/c}$)-block $S_y$ from the advice tape. $\ALG$ will answer $1$ in round $i$ if and only if the element $i$ belongs to the given block. Clearly, this will result in $\ALG$ answering $0$ exactly $n-(n-\ceil{u/c})=\ceil{u/c}$ times and producing a feasible output. It follows that $\ALG$ will be strictly $c$-competitive. Furthermore, the number of bits read by $\ALG$ is $$b \leq \left\lceil\log \left(\max_{u \colon 0<u<n}C\left(n,n-\left\lceil\frac{u}{c}\right\rceil,n-u\right)\right)\right\rceil+3\lceil \log n\rceil+1\,.$$
The theorem now follows from Lemma \ref{calclowerM}.
\end{proof}

\begin{theorem}
\label{upacklow}
For any $c>1$, a $c$-competitive algorithm $\ALG$ for \maxSu must read  $b$ bits of advice, where
$$b \geq B(n,c)-O(\log n)\,.$$
\end{theorem}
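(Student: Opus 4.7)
The plan is to adapt the argument from Theorem~\ref{covlow} almost verbatim, exploiting the duality between \minS and \maxS that was already used to prove Theorem~\ref{maxsgalg}. By Remark~\ref{strictremark}, it is enough to show the lower bound assuming $\ALG$ is strictly $c$-competitive. Since the history is unknown, Remark~\ref{2balgs} tells us that $\ALG$ produces at most $2^{b}$ distinct output strings as the advice ranges over all $2^{b}$ possibilities.

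Next I would restrict the adversary's input set. For any integer $u$ with $0 < u < n$, let $I_{n,u}$ denote the set of input strings $x \in \{0,1\}^n$ with $\nz{x} = u$; for each such $x$, $\OPT(x) = u$. Since $\ALG$ is strictly $c$-competitive, for every $x \in I_{n,u}$ there must exist some advice string that makes $\ALG$ output a feasible $y$ with $x \sq y$ and $\nz{y} \geq u/c$. Because $\nz{y}$ is an integer, this is equivalent to $\no{y} \leq n - \lceil u/c \rceil$. I convert each produced output $y$ to the subset $S_y \subseteq [n]$ of positions where $y_i = 1$, padding arbitrarily if $\no{y} < n - \lceil u/c \rceil$ so that $|S_y| = n - \lceil u/c \rceil$ exactly. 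The containment $x \sq y$ translates to $T_x \subseteq S_y$, where $T_x$ is the set of positions where $x_i = 1$ (so $|T_x| = n - u$). Thus every $(n-u)$-subset of $[n]$ is contained in some block $S_y$, which is exactly the definition of an $(n, n - \lceil u/c \rceil, n - u)$-covering design.

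Since this design has size at most $2^b$, we obtain
\[
2^{b} \geq C\!\left(n,\, n - \left\lceil \tfrac{u}{c} \right\rceil,\, n - u\right)
\]
for every $u$ with $0 < u < n$. Taking logarithms and maximizing over $u$ gives
\[
b \geq \log\left( \max_{u\colon 0<u<n} C\!\left(n,\, n - \left\lceil \tfrac{u}{c} \right\rceil,\, n - u\right) \right).
\]
The final step is to apply Lemma~\ref{calclowerM} (together with Lemma~\ref{binomequal}, as already invoked in the proof of Theorem~\ref{maxsgalg}), which tells us precisely that this maximum covering number is at least $B(n,c) - O(\log n)$. This yields $b \geq B(n,c) - O(\log n)$.

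The main obstacle is the substitution of parameters: in \minS the natural parameter is $t = \no{x}$ while here it is $u = \nz{x}$, so the relevant covering designs are $(n, n - \lceil u/c \rceil, n - u)$ rather than $(n, \lfloor ct \rfloor, t)$. Verifying that (i) the covering design construction is correct under this reparametrization (in particular that rounding $u/c$ up rather than down is forced by the feasibility requirement $\nz{y} \geq u/c$), and (ii) that the resulting maximum still matches $B(n,c)$ up to $O(\log n)$ via the analytical estimate in Lemma~\ref{calclowerM}, are the only nontrivial points. Everything else is mechanical mirroring of the proof of Theorem~\ref{covlow}.
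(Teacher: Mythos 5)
Your proposal is correct and mirrors the paper's proof of Theorem~\ref{upacklow} almost step for step: reduce to the strict case via Remark~\ref{strictremark}, bound the number of outputs by $2^b$ via Remark~\ref{2balgs}, restrict to inputs with $\nz{x}=u$, observe that strict $c$-competitiveness forces $\no{y}\leq n-\lceil u/c\rceil$ on a feasible $y$ with $x\sq y$, pad the accepted outputs into an $(n,\,n-\lceil u/c\rceil,\,n-u)$-covering design of size at most $2^b$, and finish with Lemma~\ref{calclowerM}. The only cosmetic difference is that you spell out the integer-rounding step $\nz{y}\geq u/c \Rightarrow \nz{y}\geq\lceil u/c\rceil$ explicitly, which the paper states directly.
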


\begin{proof}
By Remark~\ref{strictremark}, it suffices to prove the lower bound for strictly $c$-competitive algorithms. Suppose that $\ALG$ is strictly $c$-competitive. Let $\advice$ be the number of advice bits read by $\ALG$ on inputs of length $n$. 
For $0 \leq u \leq n$, let $I_{n,u}$ be the set of input strings $x$ of length $n$ with $\nz{x}=u$, and let $Y_{n,u}$ be the corresponding set of output strings produced by \ALG.
We will argue that, for each $u$, $Y_{n,u}$ can be converted to an $(n,n-\ceil{u/c},n-u)$-covering design of size at most $2^{b}$.

By Remark~\ref{2balgs}, $\ALG$ can produce at most $2^{b}$ different output strings, one for each possible advice string. Now, for each input string, $x=x_1 \ldots x_n$ with $\nz{x}=u$ (and, hence, $\no{x}=n-u$), there must exist some advice which makes $\ALG$ output a string $y=y_1 \ldots y_n$ where $\nz{y}\geq \ceil{u/c}$ (and, hence, $\no{y}\leq n-\ceil{u/c}$) and $x\sq y$. If not, then $\ALG$ is not strictly $c$-competitive. For each possible output $y\in\{0,1\}^n$ computed by $\ALG$, we convert it to the set $S_y\subseteq [n]$ which has $y$ as its characteristic vector. If $\no{y}<n-\ceil{u/c}$, we add some arbitrary elements to $S_y$ so that $S_y$ contains exactly $n-\ceil{u/c}$ elements. Since $\ALG$ is strictly $c$-competitive, this conversion gives the blocks of an $(n, n-\ceil{u/c}, n-u)$-covering design. The size of this covering design is at most $2^{b}$, since $\ALG$ can produce at most $2^{b}$ different outputs. It follows that $C(n, n-\ceil{u/c}, n-u)\leq 2^{b}$, for all $u$. 
Thus, $$b \geq \log \left(\max_{u \colon 0<u<n}C\left(n, n-\left\lceil \frac{u}{c} \right\rceil, n-u\right)\right).$$ The theorem now follows from Lemma \ref{calclowerM}.
\end{proof}

As was the case for \minS, the lower bound for \maxSu also holds for \maxSk.
\begin{theorem}
\label{kpacklow}
For any $c>1$, a $c$-competitive algorithm $\ALG$ for \maxSk must read at least $b$ bits of advice, where $$b\geq B(n,c)-O(\log n)\,.$$ 
\label{packlow}
\end{theorem}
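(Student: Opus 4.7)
The plan is to mirror the adversary-game argument of Theorem~\ref{lminknown}, adapted to the maximization setting where the algorithm now tries to accumulate ``safe'' zero-answers rather than minimize forced one-answers. By Remark~\ref{strictremark}, it suffices to prove the bound for a strictly $c$-competitive algorithm $\ALG$ reading $b$ bits of advice. I would fix some $u$ with $0 < u < n$, restrict attention to the set $I_{n,u}$ of input strings of length $n$ with exactly $n-u$ ones (so $\OPT = u$, and strict $c$-competitiveness forces the algorithm to answer $0$ at least $\lceil u/c \rceil$ times on every such input), and partition $I_{n,u}$ according to the advice read. Pigeonhole then yields an advice string $\varphi^{*}$ with $|I_{\varphi^{*}}| \geq \binom{n}{u}/2^{b}$.

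Next, I would set up the adversary game exactly as in the proof of Theorem~\ref{lminknown}, but tracking zero-positions rather than one-positions. The algorithm's behavior on the alive set is essentially forced: it must answer $y_i = 1$ whenever some alive string has $s_i = 1$ (else the adversary commits to $x_i = 1$ and the output becomes infeasible), and it safely answers $y_i = 0$ precisely when every alive string has $s_i = 0$, in which case it gains one unit of profit. Strict $c$-competitiveness means the adversary cannot drive the total profit below $\lceil u/c \rceil$ when starting from $I_{\varphi^{*}}$.

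The technical heart is the following inductive claim, analogous to~\eqref{claiml}: for every set $I$ of $m$ strings of length $k = h + r$, each containing $h$ zeros and $r$ ones, if the algorithm can guarantee profit at least $j$ starting from $I$, then
\[
m \leq \binom{k-j}{r}.
\]
I would prove this by induction on $k$. The base cases $j=0$ and $k=0$ are immediate. For the inductive step, split $I$ by the first bit into $I_0$ (size $m_0$) and $I_1$ (size $m_1$) and handle three cases. In case (i), $m_1 = 0$: the algorithm safely answers $0$ and gains a unit of profit, and applying the IH to $I_0$ with parameters $(h-1, r, j-1)$ gives the bound directly. In case (ii), $m_0 = 0$: the adversary is forced to $x_i = 1$, and applying the IH with $(h, r-1, j)$ yields $m \leq \binom{k-j-1}{r-1} \leq \binom{k-j}{r}$ by Pascal's identity. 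In case (iii), both $m_0, m_1 \geq 1$: the algorithm must answer $1$ and the adversary picks the branch giving lower future profit, so the IH applied to both sides (both with parameter $j$) gives $m_0 \leq \binom{k-j-1}{r}$ and $m_1 \leq \binom{k-j-1}{r-1}$, whose sum is exactly $\binom{k-j}{r}$ by Pascal's identity.

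Applying the claim to $I_{\varphi^{*}}$ with $(h,r,j) = (u, n-u, \lceil u/c \rceil)$ yields $|I_{\varphi^{*}}| \leq \binom{n - \lceil u/c \rceil}{n-u}$. Combining this with the pigeonhole inequality and maximizing over $u$ gives
\[
b \geq \log\left(\max_{u\colon 0<u<n}\frac{\binom{n}{u}}{\binom{n - \lceil u/c \rceil}{n-u}}\right),
\]
which is precisely the expression appearing in the proof of Theorem~\ref{upacklow}; invoking Lemmas~\ref{binomequal} and~\ref{calclowerM} then converts this to $b \geq B(n,c) - O(\log n)$. I expect the main obstacle to be arranging the three-case induction so that the profit gain in case~(i) is cleanly absorbed by decrementing $j$ in the hypothesis, and verifying the two Pascal's-identity manipulations in cases (ii) and (iii); the rest of the argument is bookkeeping on top of the framework already laid down in Theorem~\ref{lminknown}.
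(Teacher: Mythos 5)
Your proposal is correct and follows the same overall strategy as the paper's proof: (a)~restrict to inputs with a fixed number $u$ of zeros, (b)~note that strict $c$-competitiveness forces at least $\lceil u/c \rceil$ zero-answers, (c)~bound the number of inputs compatible with a single advice string via an adversary game, (d)~apply pigeonhole, and (e)~pass to $B(n,c)-O(\log n)$ via the appendix lemmas. The one place you diverge is that the paper's proof of this theorem does \emph{not} reprove the game-theoretic claim; it simply observes that Inequality~(\ref{claiml}) from the proof of Theorem~\ref{lminknown}, which concerns only the adversary game and not the score function, applies verbatim to \maxSk, and then converts ``at least $n-\lceil u/c\rceil +1$ forced ones'' into ``at most $\lceil u/c\rceil -1$ zeros.'' You instead dualize the claim directly in terms of guaranteed profit and give a self-contained induction on the string length $k$, splitting the alive set by the current bit into $I_0, I_1$ and handling the three cases $m_1=0$, $m_0=0$, and $m_0,m_1\geq 1$, with Pascal's identity closing the latter two. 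This is a genuinely different (and arguably cleaner) inductive argument than the paper's lexicographic induction on $(m,h)$ in Theorem~\ref{lminknown}, and I verified it is sound: in case~(iii) the adversary can choose either branch, so the algorithm must guarantee $j$ from both $I_0$ and $I_1$, yielding $m_0\leq\binom{k-j-1}{r}$ and $m_1\leq\binom{k-j-1}{r-1}$, whose sum is $\binom{k-j}{r}$. The trade-off: your version is longer but self-contained and avoids re-examining whether the proof of (\ref{claiml}) silently depends on the minimization score function, whereas the paper's proof is shorter at the cost of a ``this also holds for \maxS'' assertion that the reader must check.
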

\begin{proof}
By Remark~\ref{strictremark}, it suffices to prove the lower bound for strictly $c$-competitive algorithms.

Consider input strings, $x$, of length $n$ and such that $\nz{x}=u$. Let $t=\no{x}=n-u$. 
We reuse the notation from the proof of Theorem~\ref{lminknown} and let $I_\varphi \subseteq I_{n,t}$ denote the set of strings for which \ALG reads the advice string $\varphi$. 

Suppose there exists some advice string $\varphi'$ such that $m = \ab{I_{\varphi'}} > \binom{n-\ceil{\frac{u}{c}}}{t}$. Since Inequality~(\ref{claiml}) from the proof of Theorem~\ref{lminknown} holds for \maxS too, we get that $L_1(m,t)\geq  n-\ceil{\frac{u}{c}}+1$. But this means that there exists an input $x\in I_{\varphi'}$, with $\no{x}=t$, such that $\ALG$ must answer 1 at least $n-\ceil{\frac{u}{c}}+1$ times. In other words, for the output $y$, computed by $\ALG$ on input $x$, it holds that $\nz{y}\leq n-(n-\ceil{\frac{u}{c}}+1)\leq \ceil{\frac{u}{c}}-1$. Since $\nz{x}=u$, this contradicts the fact that $\ALG$ is strictly $c$-competitive.

Since there are $\binom{n}{u}$ possible input strings $x$ such that $\nz{x}=u$, and since the above was shown to hold for all choices of $u$, we get the lower bound $$b \geq \log \left(\max_{u\colon0<u<n}\frac{\binom{n}{u}}{\binom{n-\ceil{\frac{u}{c}}}{n-u}}\right).$$ 
The theorem now follows from Lemma \ref{calclowerM}.
\end{proof}

\subsection{Advice Complexity of \S when $c=\Omega(n/\log n)$} \label{smalladvice}
Throughout the paper, we mostly ignore additive terms of $O(\log n)$ in the advice complexity. However, in this section, we will consider the advice complexity of \S when the number of advice bits read is at most logarithmic. Surprisingly, it turns out that the advice complexity of \minS and \maxS is different in this case.

Recall that, by Theorem \ref{maxsgalg} (or Theorem \ref{trivialmax}), using $O(\log n)$ bits of advice, an algorithm for \maxS can achieve a competitive ratio of $\frac{n}{\log n}$. The following theorem shows that there is a ``phase-transition'' in the advice complexity, in the sense that using less than $\log n$ bits of advice is no better than using no advice at all. We remark that Theorem \ref{smallamax} and its proof are essentially equivalent to a previous result of Halld\'orsson et al.~\cite{Magnus} on \IS in the multi-solution model.

\begin{theorem}[cf.\cite{Magnus}]
\label{smallamax}
Let $\ALG$ be an algorithm for \maxS reading $b<\floor{\log n}$ bits of advice. Then, the competitive ratio of $\ALG$ is not bounded by a function of $n$. This is true even if $\ALG$ knows $n$ in advance.
\end{theorem}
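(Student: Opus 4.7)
The plan is to apply Remark~\ref{2balgs} and then defeat the resulting family of advice-free algorithms with one adaptive adversary. Since $b < \lfloor \log n \rfloor$ implies $N := 2^b \leq 2^{\lfloor \log n \rfloor - 1} \leq n/2$, Remark~\ref{2balgs} supplies deterministic advice-free algorithms $\ALG_1,\ldots,\ALG_N$ with $N < n$ and $\ALG(x) = \max_i \ALG_i(x)$ for every input $x$. I will construct a single input $x$, valid for both \maxSu and \maxSk, on which $\ALG(x) \leq 0$ while $\OPT(x) \geq n/2 + 1$.

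The adversary maintains an \emph{alive set} $A$, initially $\{1,\ldots,N\}$. In round $i$, each $\ALG_j$ with $j \in A$ outputs a bit $y_i^{(j)}$ from the history $x_1 \ldots x_{i-1}$ that has been revealed so far (in \maxSu this bit does not actually depend on the history, and the same strategy still works). If some alive algorithm outputs $0$, the adversary sets $x_i = 1$ (a \emph{kill round}) and deletes from $A$ every $j$ with $y_i^{(j)} = 0$; those algorithms are infeasible on every completion of $x$. Otherwise every alive algorithm outputs $1$, in which case the adversary sets $x_i = 0$ (a \emph{zero round}) and leaves $A$ unchanged. Each kill round strictly shrinks $A$, so at most $N-1$ of them can occur before $A$ becomes empty. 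Hence, either $A$ is non-empty at the end and the number of zero rounds is at least $n - N + 1 \geq n/2 + 1$, or $A$ empties before round $n$. By construction, every algorithm that is still alive after round $n$ outputted $1$ in every single round (in kill rounds, otherwise it would have been removed; in zero rounds, by the defining case split), so its produced output on $x$ equals $y = 1^n$ with profit $\nz{y} = 0$. Every deleted algorithm is infeasible on $x$ and contributes score $-\infty$. Therefore $\ALG(x) = 0$ when some algorithm survives and $\ALG(x) = -\infty$ when $A$ empties, while $\OPT(x)$ equals the number of zero rounds.

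Finally, suppose for contradiction that $\ALG$ is $c(n)$-competitive with additive constant $\alpha$, so $\OPT(x) \leq c(n) \cdot \ALG(x) + \alpha$. In the ``$A$ empties'' case the right-hand side is $-\infty$, which is impossible. In the ``some survivor'' case the inequality reads $n/2 + 1 \leq \alpha$, which fails for every $n > 2(\alpha-1)$. Because $\alpha$ is a constant independent of $n$, this produces a counterexample for all sufficiently large $n$, so no function $c(n)$ of $n$ can bound the competitive ratio. The adversary inspects only the algorithms' responses, so $\ALG$ knowing $n$ beforehand does not change anything. The one delicate point is the claim that every surviving algorithm actually produces $y = 1^n$ on the adversarial input; this is forced directly by the case split but deserves careful verification in the write-up.
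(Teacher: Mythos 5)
Your proof is correct and takes essentially the same approach as the paper: convert $\ALG$ into $2^b \leq n/2$ advice-free algorithms via Remark~\ref{2balgs}, then adaptively construct $x$ by setting $x_i = 1$ exactly when some not-yet-eliminated algorithm answers $0$ in round $i$, forcing every algorithm to be infeasible or to output $1^n$, while $\OPT$ earns profit $\Omega(n)$. Your ``alive set'' is just an explicit bookkeeping of the paper's condition ``answers $1$ in all rounds before round $i$,'' and you additionally spell out the role of the additive constant $\alpha$, which the paper leaves implicit; the only slip is the throwaway remark that ``at most $N-1$ kill rounds can occur before $A$ becomes empty'' (emptying $A$ may take $N$ kill rounds), but the bound you actually use—at most $N-1$ kill rounds when a survivor remains—is right.
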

\begin{proof}
We will prove the result for \maxSk. Clearly, it then also holds for \maxSu.

By Remark~\ref{2balgs}, we can convert $\ALG$ to $m=2^{b}$ online algorithms without advice. Denote the algorithms by $\ALG_1,\ldots , \ALG_m$. Since $b< \floor{\log n}$, it follows that $m\leq n/2$. We claim that the adversary can construct an input string $x=x_1 \ldots x_n$ for \maxSk such that the following holds: For each $1\leq j\leq m$, the output of $\ALG_j$ is either infeasible or contains only $1$s. Furthermore, $x$ can be constructed such that $\nz{x}\geq \frac{n}{2}$. 

We now show how the adversary may achieve this. For $1\leq i \leq n$, the adversary decides the value of $x_i$ as follows: If there is some algorithm, $\ALG_j$, which answers $0$ in round $i$ and $\ALG_j$
answers $1$ in all rounds before round $i$, the adversary lets $x_i=1$. In all other cases, the adversary lets $x_i=0$. It follows that if an algorithm $\ALG_j$ ever answers $0$, its output will be infeasible. Furthermore, the number of $1$'s in the input string constructed by the adversary is at most $n/2$, since $m\leq n/2$. Thus, the profit of $\OPT$ on this input is at least $n/2$, while the profit of $\ALG$ is at most $0$.
\end{proof}

For \minS, the algorithm from Theorem \ref{trivialmin} achieves a competitive ratio of $\ceil{c}$ and uses $O(n/c)$ bits of advice, for any $c>1$. In particular, it is possible to achieve a competitive ratio of e.g. $O(n/(\log \log n))$ using $O(\log \log n)$ bits of advice, which we have just shown is not possible for \maxS. The following theorem shows that no strictly $c$-competitive algorithm for \minS can use less than $\Omega (n/c)$ bits of advice, even if $n/c=o(\log n)$.

\begin{theorem}
\label{smallamin}
For any $c>1$, on inputs of length $n$, a strictly $\ceil{c}$-competitive algorithm $\ALG$ for \minS must read at least $b=\Omega (n/c)$ bits of advice.
\end{theorem}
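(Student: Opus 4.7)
The plan is to recycle the pre-asymptotic binomial inequality established inside the proof of Theorem~\ref{lminknown}, sidestepping its closed-form consequence whose additive $O(\log n)$ error would swamp the target $\Omega(n/c)$ once $c = \omega(n/\log n)$. The adversary argument in that proof in fact produces, for any strictly $c$-competitive algorithm for \minSk,
$$b \;\ge\; \log\!\frac{\binom{n}{t}}{\binom{\lfloor c t\rfloor}{t}}$$
for every integer $t$ with $\lfloor c t\rfloor\le n$, \emph{before} any asymptotic weakening. Since the present statement concerns strict $\ceil{c}$-competitiveness, applying this with $c$ replaced by the integer $\ceil{c}$ (so that $\lfloor\ceil{c} t\rfloor = \ceil{c} t$) yields
$$b \;\ge\; \log\!\frac{\binom{n}{t}}{\binom{\ceil{c} t}{t}}$$
for every integer $t$ with $\ceil{c} t\le n$. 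Because every algorithm for \minSu is also one for \minSk, the same inequality governs both variants.

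Next I would evaluate this bound at the choice $t:=\lfloor n/(2\ceil{c})\rfloor$, which ensures $\ceil{c} t\le n/2$. Writing the ratio as $\prod_{i=0}^{t-1}\tfrac{n-i}{\ceil{c}t-i}$, the assumption $\ceil{c} t\le n/2$ gives $n-i\ge 2(\ceil{c} t - i)$ for every $0\le i\le t-1$, so each factor is at least $2$ and therefore $b\ge t$.

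Finally, $t=\lfloor n/(2\ceil{c})\rfloor=\Omega(n/c)$ throughout the nontrivial range $1<c\le n/8$ (using $\ceil{c}\le 2c$ for $c\ge 1$). In the complementary regime $c=\Theta(n)$ the target $\Omega(n/c)$ is $\Omega(1)$, and the bound $b\ge 1$ is already forced by the need to distinguish the all-zeros input (which forces the output $0^n$) from any input containing a $1$ (which forbids $y=0^n$).

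The main obstacle is entirely one of bookkeeping: one must refrain from invoking the closed-form $B(n,c)\pm O(\log n)$ version of Theorem~\ref{lminknown}, because its $O(\log n)$ error cannot be absorbed into $\Omega(n/c)$ once $c\gg n/\log n$; working directly with the binomial inequality and choosing $t$ explicitly avoids this loss.
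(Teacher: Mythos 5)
Your proof is correct and follows essentially the paper's route: apply the intermediate binomial inequality from the proof of Theorem~\ref{lminknown} (valid directly since $\lfloor\lceil c\rceil t\rfloor=\lceil c\rceil t$), choose an explicit $t$, and lower-bound the ratio $\binom{n}{t}/\binom{\lceil c\rceil t}{t}$ as a product of $t$ factors each at least a fixed constant --- the paper (Lemma~\ref{finallemma}) uses $t=\lfloor n/(ec)\rfloor$ and factor $\geq e$, while you use $t=\lfloor n/(2\lceil c\rceil)\rfloor$ and factor $\geq 2$. Your explicit $b\geq 1$ fallback for large $c$ is a minor but genuine improvement in rigor, since the chain in Lemma~\ref{finallemma} degenerates to the vacuous bound $b\geq 0$ once $t=0$, i.e.\ for $c>n/e$.
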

\begin{proof}
We will prove the result for \minSk. Clearly, it then also holds for \minSu.

Suppose that $\ALG$ is strictly $\ceil{c}$-competitive. Since $\floor{\ceil{c}t}=\ceil{c}t$, it follows from the proof of Theorem \ref{lminknown} that $\ALG$ must read at least $b$ bits of advice, where
$$b \geq\log \left(\max_{t\colon \ceil{c}t< n}\frac{\binom{n}{t}}{\binom{\ceil{c}t}{t}}\right).$$ By Lemma \ref{finallemma}, this implies that $b=\Omega(n/c)$.
\end{proof}



\section{The Complexity Class \W} \label{sectionclass}
\label{section:aoc}
In this section, we define a class, \W, and show that for each problem, \P, in \W, the advice complexity of \P is at most that of \S.

\begin{definition} \label{sgeasydef}\label{wdef}
A problem, \P, is in \W \emph{(Asymmetric Online Covering)} if it can be defined as follows:
The input to an instance of \P consists of a sequence of $n$ requests, $\sigma= \langle r_1, \ldots, r_n \rangle$, and possibly one final dummy request. An algorithm for \P computes a binary output string, $y=y_1 \ldots y_n\in\{0,1\}^n$, where $y_i=f(r_1, \ldots , r_i)$ for some function $f$. 

For minimization (maximization) problems, the score function, $s$, maps a pair, $(\sigma,y)$, of input and output to a cost (profit) in $\mathbb{N} \cup \{\infty \}$ $(\mathbb{N} \cup \{-\infty \})$.
 For an input, $\sigma$, and an output, $y$, $y$ is \emph{feasible} if $s(\sigma,y) \in \mathbb{N}$. Otherwise, $y$ is \emph{infeasible}. There must exist at least one feasible output. Let $S_{\min}(\sigma)$ $(S_{\max}(\sigma))$ be the set of those outputs that minimize (maximize) $s$ for a given input $\sigma$.

If \P is a minimization problem, then for every input, $\sigma$, the following must hold:
\begin{enumerate}
\item For a feasible output, $y$, $s(\sigma,y)=\no{y}$.
\item An output, $y$, is feasible if 
   there exists a $y'\in S_{\min}(\sigma)$ such that $y'\sq y$.\\
  If there is no such $y'$, the output may or may not be feasible.
\end{enumerate}

If \P is a maximization problem, then for every input, $\sigma$, the following must hold:
\begin{enumerate}
\item For a feasible output, $y$, $s(\sigma,y)=\nz{y}$.
\item An output, $y$, is feasible if 
   there exists a $y'\in S_{\max}(\sigma)$ such that $y'\sq y$.\\
  If there is no such $y'$, the output may or may not be feasible.
\end{enumerate}
\end{definition}

The dummy request is a request that does not require an answer and is not counted when we count the number of requests. Most of the problems that we consider will not have such a dummy request, but it is necessary to make sure that \S belongs to \W.


The input, $\sigma$, to a problem \P in \W can contain any kind of information. However, for each request, an algorithm for \P only needs to make a binary decision. If the problem is a minimization problem,
it is useful to think of answering $1$ as accepting the request and answering $0$ as rejecting the request (e.g. vertices in a vertex cover). 
The output is guaranteed to be feasible if the accepted requests are a superset of the requests accepted in an optimal solution (they ``cover'' the optimal solution).

If the problem is a maximization problem, it is useful to think of answering $0$ as accepting the request and answering $1$ as rejecting the request (e.g. vertices in an independent set).
The output is guaranteed to be feasible if the accepted requests are a subset of the requests accepted in a optimal solution.

Note that outputs for problems in \W may have a score of $\pm\infty$. This is used to model that the output is infeasible (e.g. not a vertex cover/independent set).


We now show that our \Su algorithm based on covering designs works for every problem in \W. This gives an upper bound on the advice complexity for all problems in \W.
\begin{theorem} \label{algsgeasy}
Let \P be a problem in \W. There exists a strictly $c$-competitive online algorithm for \P reading $\advice$ bits of advice, where
\begin{align*}
\advice \leq B(n,c)+O(\log n).
\end{align*}
\end{theorem}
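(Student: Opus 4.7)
The plan is to lift the covering-design construction of Theorems~\ref{covup} and \ref{maxsgalg} to an arbitrary problem $\P\in\W$. The crucial observation is that property~2 of Definition~\ref{wdef} is precisely the axiom we used implicitly in the \S setting: whenever $y^*\sq y$ for some optimal $y^*$, the output $y$ is guaranteed to be feasible. Hence the algorithm can safely be \emph{oblivious}, ignoring $\sigma$ entirely and computing its answer from a description of a single block of a suitable covering design.

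Concretely, in the minimization case the oracle picks an optimal output $y^*\in S_{\min}(\sigma)$, sets $t=\no{y^*}$, and encodes $n$ and $t$ self-delimitingly (using $O(\log n)$ bits). If $t=0$ the algorithm outputs $0^n$, and if $\lfloor ct\rfloor\geq n$ it outputs $1^n$, which is feasible by property~2 since $y^*\sq 1^n$. Otherwise both oracle and algorithm deterministically compute a smallest $(n,\lfloor ct\rfloor,t)$-covering design (by lexicographic search, exactly as in Theorem~\ref{covup}), and the oracle writes the index of a block whose characteristic vector $y$ satisfies $y^*\sq y$; such a block exists because the $t$-subset $\{i\colon y^*_i=1\}$ is covered by the design. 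The algorithm outputs $y$; feasibility follows from property~2, and the cost is $\no{y}=\lfloor ct\rfloor\leq c\cdot\OPT(\sigma)$. The maximization case is entirely symmetric, with $u=\nz{y^*}$ for some $y^*\in S_{\max}(\sigma)$ and an $(n,n-\lceil u/c\rceil,n-u)$-covering design, mirroring Theorem~\ref{maxsgalg}: the chosen block's characteristic vector $y$ satisfies $y^*\sq y$, is feasible by property~2, and has profit $\nz{y}=\lceil u/c\rceil\geq\OPT(\sigma)/c$.

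The total advice is $\lceil \log C(n,\cdot,\cdot)\rceil + O(\log n)$, and applying Lemma~\ref{calclower} (minimization) or Lemma~\ref{calclowerM} (maximization) yields the claimed bound $b\leq B(n,c)+O(\log n)$. The only conceptually new step compared to the proofs for \S is the invocation of property~2 of Definition~\ref{wdef} to justify feasibility of the block-induced output; this is the single point where the general \W structure substitutes for the concrete \S definition, and I expect it to be the main (essentially the only) obstacle. Once it is in place, everything else is a routine replay of the \S upper bounds, so no additional calculation is required.
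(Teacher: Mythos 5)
Your proposal is correct and takes essentially the same approach as the paper: the paper phrases it as a black-box reduction, feeding an optimal solution $y^*\in S_{\min}(\sigma)$ (resp.\ $S_{\max}(\sigma)$) as the input string to the \minSu (resp.\ \maxSu) algorithm of Theorem~\ref{covup} (resp.\ Theorem~\ref{maxsgalg}) and reusing that oracle's advice verbatim, whereas you inline the covering-design construction itself. The decisive step in both is exactly the one you identify: the algorithm is oblivious to $\sigma$, and property~2 of Definition~\ref{wdef} (applied with $y'=y^*$) supplies feasibility while property~1 turns $\no{y}\leq\floor{ct}$ (resp.\ $\nz{y}\geq\ceil{u/c}$) into the strict $c$-competitive guarantee.
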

\begin{proof}
We first assume that \P is a minimization problem. Let \ALG be a strictly $c$-competitive \minSu algorithm reading at most $b$ bits of advice provided by an oracle \ORACLE. By Theorem \ref{covup}, such an algorithm exists. We will define a \P algorithm, \ALGp, together with an oracle \ORACLEp, that is strictly $c$-competitive and reads at most $\advice$ bits of advice. 

For a given input, $\sigma$, to \P, the oracle \ORACLEp starts by computing an $x$ such that $x\in S_{\min}(\sigma)$. This is always possible since by the definition of \W, such an $x$ always exists, and \ORACLEp has unlimited computational power.
Let $\varphi$ be the advice that \ORACLE would write to the advice tape if $x$ were the input string in an instance of \minSu. \ORACLEp writes $\varphi$ to the advice tape. From here, \ALGp behaves as \ALG would do when reading $\varphi$ (in particular, \ALGp ignores any possible information contained in $\sigma$) and computes the output $y$. Since \ALG is strictly $c$-competitive for \minSu, we know that $x\sq y$ and that $\no{y}\leq c\no{x}$. 
Since \P is in \W, this implies that $y$ is feasible (with respect to the input $\sigma$) and that $s(\sigma,y)\leq c \no{x}=c \cdot \OPT(\sigma)$. 

Similarly, one can reduce a maximization problem to \maxSu and apply Theorem \ref{maxsgalg}.
\end{proof}

Showing that a problem, \P, belongs to \W immediately gives an upper bound on the advice complexity of \P. For all variants of \S, we know that this upper bound is tight up to an additive $O(\log n)$ term. This leads us to the following definition of completeness.

\begin{definition}
\label{completedef}
A problem, \P, is \emph{\Wc} if 
\begin{itemize}
\item \P belongs to \W and 
\item for all $c>1$, any $c$-competitive algorithm for \P must read at least $b$  bits of advice, where
$$
\advice \geq B(n,c)-O(\log n).
$$
\end{itemize}
\end{definition}

Thus, the advice complexity of an \Wc problem must be identical to the upper bound from Theorem~\ref{algsgeasy}, up to a lower-order additive term of $O(\log n)$. By Definitions~\ref{minSudef}--\ref{maxSkdef} combined with Theorems~\ref{covlow}--\ref{lminknown} and \ref{upacklow}--\ref{packlow}, all of \minSu, \minSk, \maxSu and \maxSk are \Wc. 

When we show that some problem, \P, is \Wc, we usually do this by giving a reduction from a known \Wc problem to \P, preserving the competitive ratio and increasing the number of advice bits by at most $O(\log n)$. \Sk is especially well-suited as a starting point for such reductions. 

We allow for an additional $O(\log n)$ bits of advice in Definition~\ref{completedef} in order to be able to use the reduction between the strict and non-strict competitive ratios as explained in Remark~\ref{strictremark} and in order to encode some natural parameters of the problem, such as the input length or the score of an optimal solution. For most values of $c$, it seems reasonable to allow these additional advice bits. However, it does mean that for $c=\Omega (n/\log n)$, the requirement in the definition of \Wc is vacuously true. We refer to Section \ref{smalladvice} for a discussion of the advice complexity for this range of competitive ratio.



\subsection{\Wc Minimization Problems}\label{appmins}
In this section, we show that several online problems are \Wc, starting with \VC. See the introduction for the definition of the problems

\subsubsection{Online Vertex Cover.}

\begin{lemma} \label{VCupperlemma}
\VC is in \W.
\end{lemma}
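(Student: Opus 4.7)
The plan is to check that \VC satisfies each clause of Definition~\ref{wdef} when we interpret the output bit $y_i$ as the indicator of whether the $i$th revealed vertex, $v_i$, is placed in the cover. No dummy request is needed, and the cost of any feasible output (a vertex cover) equals the number of selected vertices, which is exactly $\no{y}$; so the first bulleted requirement for minimization problems holds by definition.

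The only substantive condition is the monotonicity requirement: if $y'$ is the characteristic vector of some optimal vertex cover and $y' \sq y$, then $y$ must also be feasible. This is immediate from the standard observation that vertex covers are closed under taking supersets: every edge incident to a previously covered vertex remains covered when further vertices are added. Hence the set $V_{\ALG}$ corresponding to $y$ still covers every edge of the revealed graph, so $y$ is feasible. Existence of at least one feasible output is also trivial—answering $1$ on every request puts all vertices into the cover.

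The only real ``obstacle'' here is purely formal: ensuring that the vertex-arrival model matches Definition~\ref{wdef}'s requirement that $y_i = f(r_1,\dots,r_i)$. Since an online algorithm in the vertex-arrival model must irrevocably decide about $v_i$ based only on the vertices and edges revealed up through round $i$, this is exactly a function of $r_1,\dots,r_i$, so the formal fit is immediate.

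Putting these pieces together yields the lemma. I expect the write-up to be essentially a one-paragraph verification: identify $y_i$ with the indicator of $v_i$ being included, note that $s(\sigma,y) = \no{y}$ for feasible $y$, and invoke closure of vertex covers under supersets to conclude that $y' \sq y$ with $y' \in S_{\min}(\sigma)$ forces $y$ to be feasible.
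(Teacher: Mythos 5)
Your proof is correct and takes essentially the same approach as the paper: both verify the two conditions of Definition~\ref{wdef} by identifying $y_i$ with membership of $v_i$ in the cover, noting that the cost of a feasible output equals $\no{y}$, observing that the all-ones answer is always feasible, and invoking closure of vertex covers under supersets to establish condition~2.
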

\begin{proof}
We need to verify the conditions in Definition~\ref{wdef}. 

Recall that an input $\sigma= \langle r_1, \ldots, r_n \rangle$ for \VC is a sequence of requests, where each request is a vertex along with the edges connecting it to previously requested vertices. There is no dummy request at the end.
For each request, $r_i$, an algorithm makes a binary choice, $y_i$: It either includes the vertex into its solution ($y_i=1$) or not ($y_i=0$). 

The cost of an infeasible solution is $\infty$.
A solution $y=y_1 \ldots y_n$ for \VC is feasible if the vertices included in the solution form a vertex cover in the input graph. Clearly, there is always at least one feasible solution, since taking all the vertices will give a vertex cover. 

Thus, \VC has the right form.
Finally, we verify that conditions 1 and 2 are also satisfied: Condition 1 is satisfied since the cost of a feasible solution is the number of vertices in the solution and condition 2 is satisfied since a superset of a vertex cover is also a vertex cover. 
\end{proof}

We now show a hardness result for \VC. In our reduction, we make use of the following graph construction. The same construction will also be used later on for other problems. We remark that this graph construction is identical to the one used in \cite{Magnus} for showing lower bounds for \IS in the multi-solution model.

\begin{definition}[cf. \cite{Magnus}]
\label{G}
For any string $x=x_1\ldots x_n\in\{0,1\}^n$, define $G_x=(V,E)$ as follows:
\begin{align*}
V&=\{v_1,\ldots , v_n\},\\
E&=\{(v_i, v_j) \colon \text{$x_i=1$ and $i<j$}\}.
\end{align*} 
Furthermore, let $V_0=\{v_i \colon x_i=0\}$ and $V_1=\{v_i \colon x_i =1\}$.
\end{definition}

For a string $x\in\{0,1\}^n$, the graph $G_x$ from Definition~\ref{G} is a \emph{split graph}: The vertex set $V$ can be partitioned into $V_0$ and $V_1$ such that $V_0$ is an independent set of size $\nz{x}$ and $V_1$ is a clique of size $\no{x}$.

\begin{figure}[h]
\centering
\begin{tikzpicture}[-,>=stealth',minimum size=0.75cm,node distance=2cm,
  thick,main node/.style={circle,fill=blue!20,minimum size=0.75cm,draw,font=\sffamily\Large\bfseries}]
\selectcolormodel{gray}

  \node[main node] (1) {{\small 0}};
  \node[main node] (2) [right=1 cm of 1] {{\small 1}};
  \node[main node] (3) [right=1 cm of 2] {{\small 1}};
  \node[main node] (4) [right=1 cm of 3] {{\small 0}};
  \node[main node] (5) [right=1 cm of 4] {{\small 1}};
  \node[main node] (6) [right=1 cm of 5] {{\small 0}};

  \path[every node/.style={font=\sffamily\small}]
	(2) edge (3)
	(2) edge [bend left] (4)
	(2) edge [bend left] (5)
	(2) edge [bend left] (6)
	(3) edge (4)
	(3) edge [bend right] (5)
	(3) edge [bend right] (6)
	(5) edge (6)

	;
\end{tikzpicture}
\caption{$G_{011010}$}
\end{figure}

\begin{lemma}
\label{vckh}
If there is a $c$-competitive algorithm reading $\advice$ bits for \VC, then there is a $c$-competitive algorithm reading $\advice + O(\log n)$ bits for \minSk.
\end{lemma}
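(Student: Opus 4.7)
The plan is to reduce \minSk to \VC via the graph $G_x$ from Definition~\ref{G}. The first observation is that $V_1$ is itself a vertex cover of $G_x$, because every edge $(v_i, v_j) \in E$ has $x_i = 1$ and hence $v_i \in V_1$; so $\OPT_{\text{VC}}(G_x) \leq \no{x}$. The key structural property driving the reduction is that for \emph{any} vertex cover $C$ of $G_x$, the set $V_1 \setminus C$ contains at most one vertex: if $v_{i_1}, v_{i_2} \in V_1 \setminus C$ with $i_1 < i_2$, then since $x_{i_1}=1$ the edge $(v_{i_1}, v_{i_2})$ lies in $E$, yet neither endpoint is in $C$, a contradiction.

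Given a $c$-competitive \VC algorithm $\ALG_{VC}$ reading $b$ advice bits, I would build a \minSk algorithm $\ALG_{SG}$ that simulates $\ALG_{VC}$ on $G_x$ round by round. In round $i$, the known-history property of \minSk reveals $x_1, \ldots, x_{i-1}$ to $\ALG_{SG}$, which is exactly the information needed to present $v_i$ to $\ALG_{VC}$ together with its correct neighbourhood (an edge to each earlier $v_k$ with $x_k = 1$). The oracle for \minSk first simulates the \VC oracle on $G_x$ to obtain its advice string $\varphi$ (using the same $b$ bits), and appends $O(\log n)$ further bits that encode either a flag ``$V_1 \subseteq C$'' or, by the structural property above, the index $i^*$ of the unique vertex of $V_1 \setminus C$ together with a flag indicating which case applies. $\ALG_{SG}$ then outputs $y_i = 1$ whenever $\ALG_{VC}$ includes $v_i$ \emph{or} $i = i^*$, and outputs $y_i = 0$ otherwise.

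For feasibility, any $i$ with $x_i = 1$ has $v_i \in V_1$, and so either $v_i \in C$ (giving $y_i = 1$ from the simulated inclusion) or $v_i = v_{i^*}$ (giving $y_i = 1$ from the override); hence $x \sq y$. For the cost, $\no{y} \leq |C| + 1 \leq c \cdot \OPT_{\text{VC}}(G_x) + \alpha + 1 \leq c \cdot \no{x} + \alpha + 1$, and since $\OPT_{\minSk}(x) = \no{x}$, $\ALG_{SG}$ is $c$-competitive with additive constant $\alpha + 1$ and uses $b + O(\log n)$ advice bits, as required. The main obstacle is the online constraint: $\ALG_{SG}$ must commit to $y_i$ \emph{before} learning $x_i$, so the naive rule $y_i = [\ALG_{VC} \text{ includes } v_i]$ may be infeasible whenever $\ALG_{VC}$ excludes some $v_i \in V_1$. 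The structural bound $|V_1 \setminus C| \leq 1$ is exactly what makes $O(\log n)$ advice bits sufficient to identify and patch the single potentially problematic position in advance.
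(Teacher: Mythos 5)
Your proof is correct and follows essentially the same route as the paper's: simulate the \VC algorithm on $G_x$, exploit the known history to feed it the right edges, observe that any vertex cover of $G_x$ misses at most one vertex of $V_1$, and use $O(\log n)$ extra advice bits to patch that one position. The only difference is cosmetic: the paper's Case 2a additionally flips a rejected good vertex against an accepted bad vertex so that the additive constant stays exactly $\alpha$, whereas you simply add the patched vertex and settle for $\alpha+1$; since the lemma statement only requires $c$-competitiveness (with some constant), this is immaterial.
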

\begin{proof}
Let $\ALG$ be a $c$-competitive algorithm for \VC reading at most $\advice$ bits of advice. By definition, there exists a constant $\alpha$ such that $\ALG(\sigma)\leq c\cdot\OPT(\sigma)+\alpha$ for any input sequence $\sigma$. We will define an algorithm, $\ALGp$, and an oracle, \ORACLEp, for \minSk such that $\ALGp$ is $c$-competitive (with the same additive constant) and reads at most $\advice+ O(\log n)$ bits of advice.

For $x=x_1\ldots x_n$ an input string to \minSk, consider the input instance to \VC $G_x=(V,E)$ defined in Definition~\ref{G} where the vertices are requested in the order $\langle v_1, \ldots, v_n\rangle$. We say that a vertex in $V_0$ is \emph{bad} and that a vertex in $V_1$ is \emph{good}. Note that $V_1\setminus\{v_n\}$ is a minimum vertex cover of $G_x$. Also, if an algorithm rejects a good vertex $v_i$, then it must accept all later vertices $v_j$ (where $i<j\leq n$) in order to cover the edges $(v_i, v_j)$. In particular, since the good vertices form a clique, no algorithm can reject more than one good vertex.

Let $\varphi$ be the advice read by $\ALG$, and let $V_{\ALG}$ be the vertices chosen by $\ALG$. Since $\ALG$ is $c$-competitive, we know that $V_{\ALG}$ must be a vertex cover of size at most $c\ab{V_1\setminus\{v_n\}}+\alpha\leq c\ab{V_1}+\alpha$. 

We now define $\ALGp$ and $\ORACLEp$. As usual, $y$ denotes the output computed by $\ALGp$. We consider three cases. The first two bits of the advice tape will be used to tell \ALGp which one of the three cases we are in. 

\emph{Case 1: $\ALG$ accepts all good vertices in $G_x$, i.e., $V_1\subseteq V_\ALG$.}
 The oracle \ORACLEp writes the advice $\varphi$ to the advice tape. When $\ALGp$ receives request $i$, it considers what $\ALG$ does when the vertex $v_i$ in $G_x$ is revealed: \ALGp answers $1$ if \ALG accepts $v_i$ and $0$ otherwise. Note that it is possible for $\ALGp$ to simulate $\ALG$ since, at the beginning of round $i$, $\ALGp$ knows $x_1\ldots x_{i-1}$. In particular, $\ALGp$ knows which edges to reveal to $\ALG$ along with the vertex $v_i$ in $G_x$. Together with access to the advice $\varphi$ read by $\ALG$, this allows $\ALGp$ to simulate $\ALG$. Since $V_1\subseteq V_\ALG$, we get that $x\sq y$. Furthermore, since $\ab{V_\ALG}\leq c\ab{V_1}+\alpha$, we also get that $\no{y}\leq c\no{x}+\alpha$.



\emph{Case 2a: \ALG rejects a good vertex, $v_i$, and accepts a bad vertex, $v_j$.} In this case, the oracle \ORACLEp writes the indices of $i$ and $j$ in a self-delimiting way, followed by $\varphi$, to the advice tape. $\ALGp$ simulates $\ALG$ as before and answers accordingly, except that it answers $1$ in round $i$ and $0$ in round $j$. This ensures that $x\sq y$. Furthermore, $\no{y}=\ab{V_\ALG}\leq c\ab{V_1}+\alpha= c\no{x}+\alpha$.

\emph{Case 2b: \ALG rejects a good vertex, $v_i$, and all bad vertices.} In this case, $V_{\ALG}=V_1\setminus\{v_i\}$. The oracle \ORACLEp writes 
the value of $i$ to the advice tape in a self-delimiting way, followed by $\varphi$. Again, \ALGp simulates \ALG, but it answers $1$ in round $i$. 
Thus, $x=y$, meaning that $x \sq y$ and $y$ is optimal.

In all cases, $\ALGp$ computes an output $y$ such that $x\sq y$ and $\no{y}\leq c\no{x}+\alpha$. 
Since $\ab{\varphi}\leq \advice$, the maximum number of bits read by $\ALGp$ is $\advice+O(\log n)+2=\advice+O(\log n)$.
\end{proof}

\begin{theorem}
\VC is \Wc.
\end{theorem}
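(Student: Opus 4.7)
The plan is to combine the two ingredients that are already assembled in the excerpt: the membership statement (Lemma \ref{VCupperlemma}) and the reduction from \minSk (Lemma \ref{vckh}), together with the lower bound for \minSk (Theorem \ref{lminknown}). Since \W-completeness is defined by the two bullet points in Definition \ref{completedef}, the proof is essentially a matter of verifying each bullet point in turn.

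For the first bullet, membership of \VC in \W is already established by Lemma \ref{VCupperlemma}, so nothing further is needed there. For the second bullet, I would argue by contraposition: suppose \ALG is a $c$-competitive algorithm for \VC reading $b$ bits of advice on inputs of length $n$. Then by Lemma \ref{vckh}, there exists a $c$-competitive algorithm $\ALGp$ for \minSk reading $b + O(\log n)$ bits of advice. Since Theorem \ref{lminknown} tells us that any $c$-competitive algorithm for \minSk must read at least $B(n,c) - O(\log n)$ bits of advice, we conclude
\[
b + O(\log n) \geq B(n,c) - O(\log n),
\]
and hence $b \geq B(n,c) - O(\log n)$, which is the required lower bound.

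There is no real obstacle here, because the difficulty has already been discharged in the earlier work: the combinatorial heart of the lower bound lives inside Theorem \ref{lminknown}, and the graph-theoretic heart of the reduction (the split-graph construction $G_x$ together with the case analysis on whether \ALG rejects a good vertex or accepts a bad one) lives inside Lemma \ref{vckh}. The present theorem is the short bookkeeping step that glues these two results together through the definition of \Wc, so the proof should be only a few lines long.
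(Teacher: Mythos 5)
Your proposal is correct and matches the paper's proof exactly: both establish membership via Lemma~\ref{VCupperlemma} and obtain the lower bound by combining the reduction in Lemma~\ref{vckh} with Theorem~\ref{lminknown}. The contraposition phrasing is just a stylistic variant of the same bookkeeping step.
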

\begin{proof}
By Lemma~\ref{VCupperlemma}, \VC is in \W. Combining Lemma~\ref{vckh} and Theorem~\ref{lminknown} shows that a $c$-competitive algorithm for \VC must read at least $B(n,c)-O(\log n)$ bits of advice. Thus, \VC is \Wc.
\end{proof}


\subsubsection{Online Cycle Finding.}
Most of the graph problems that we prove to be \Wc are, in their offline versions, $\mathsf{NP}$-complete. However, in this section, we show that \CF is also \Wc. The offline version of this problem is very simple and can easily be solved in polynomial time.



\begin{lemma} \label{cfub}
\CF is in \W.
\end{lemma}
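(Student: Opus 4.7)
The plan is to verify the conditions in Definition~\ref{wdef} essentially in the same way as was done for \VC in Lemma~\ref{VCupperlemma}. The key observation, and the only nontrivial one, is that the property of ``containing a cycle in the induced subgraph'' is monotone under taking supersets of vertices.

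First, I would check that \CF has the required form. An input $\sigma = \langle r_1,\ldots,r_n\rangle$ for \CF is a sequence of vertex requests, each revealing a new vertex together with its edges back to previously revealed vertices; no final dummy request is needed. For each $r_i$, the algorithm makes the binary decision $y_i\in\{0,1\}$ of whether to include $v_i$ in its solution, and by assumption the presented graph contains at least one cycle, so taking all vertices yields a feasible solution. The cost of an infeasible output is $\infty$, and the cost of a feasible output $y$ is the number of selected vertices, which equals $\no{y}$; this verifies condition~1 of Definition~\ref{wdef}.

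Next I would verify condition~2. Let $y'\in S_{\min}(\sigma)$, i.e., the characteristic vector of a smallest vertex subset $V'\subseteq V$ whose induced subgraph contains a cycle; this is exactly (the vertex set of) a shortest cycle in the input graph. Suppose $y'\sq y$ and let $V_y$ be the set of vertices chosen by $y$. Then $V'\subseteq V_y$, and since the induced subgraph $G[V_y]$ contains $G[V']$ as a subgraph, it still contains the cycle witnessed by $V'$. Therefore $y$ is feasible, which is precisely condition~2.

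There is no real obstacle here; the only thing to double-check is the correct interpretation of $S_{\min}(\sigma)$ and the monotonicity of the feasibility predicate. With those in hand, the three requirements of Definition~\ref{wdef} (correct form, cost $=\no{y}$ on feasible outputs, and closure of feasibility under the $\sq$-relation from some optimum) all hold, so \CF\ belongs to \W.
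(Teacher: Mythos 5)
Your proof is correct and takes exactly the approach the paper intends: the paper omits this proof with the remark that it is ``almost identical to the proof of Lemma~\ref{VCupperlemma},'' and your verification of the two conditions of Definition~\ref{wdef} (cost equals $\no{y}$ on feasible outputs, and monotonicity of ``induced subgraph contains a cycle'' under vertex supersets) is precisely that adaptation.
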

This and the following proofs of membership of \W have been omitted. They are almost identical to the proof of Lemma~\ref{VCupperlemma}.

In order to show that \CF is \Wminc, we will make use of the following graph.

\begin{definition}
\label{H}
For a string $x=x_1\ldots x_n\in\{0,1\}^n$ define $f(x_i)$ to be the largest $j<i$ such that $x_j=1$.
Note that this may not always be defined. We let $\textsc{max}$ be the largest $i$ such that $x_i=1$. Similarly, we let $\textsc{min}$ be the smallest $i$ such that $x_i=1$.
We now define the graph $H_x=(V,E)$:
\begin{align*}
V&=\{v_1,\ldots , v_n\},\\
E&=\{(v_j, v_i) \colon f(x_i)=j\} \cup \{ (v_{\textsc{min}}, v_\textsc{max})\}.
\end{align*} 
Furthermore, let $V_0=\{v_i \colon x_i=0\}$ and $V_1=\{v_i \colon x_i =1\}$.
\end{definition}

\begin{figure}[h]
\centering
\begin{tikzpicture}[-,=stealth',minimum size=0.75cm,node distance=2cm,
  thick,main node/.style={circle,fill=blue!20,minimum size=0.75cm,draw,font=\sffamily\Large\bfseries}]
\selectcolormodel{gray}
  \node[main node] (1) {{\small 0}};
  \node[main node] (2) [right=1 cm of 1] {{\small 1}};
  \node[main node] (3) [right=1 cm of 2] {{\small 0}};
  \node[main node] (4) [right=1 cm of 3] {{\small 0}};
  \node[main node] (5) [right=1 cm of 4] {{\small 1}};
  \node[main node] (6) [right=1 cm of 5] {{\small 0}};
  \node[main node] (7) [right=1 cm of 6] {{\small 1}};

  \path[every node/.style={font=\sffamily\small}]
	(2) edge (3)
	(2) edge [bend left] (4)
	(2) edge [bend left] (5)
	(2) edge [bend right] (7)
	(5) edge (6)
	(5) edge [bend left] (7)

	;
\end{tikzpicture}
\caption{$H_{0100101}$}
\end{figure}

\begin{lemma}
\label{cfkh}
If there is a $c$-competitive algorithm reading $\advice$ bits for \CF, then there is a $c$-competitive algorithm reading $\advice + O(\log n)$ bits for \minSk.
\end{lemma}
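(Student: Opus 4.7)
The plan is to mimic the proof of Lemma~\ref{vckh}, using the graph $H_x$ from Definition~\ref{H}. Given a $c$-competitive \CF algorithm $\ALG$ reading $\advice$ advice bits (with additive constant $\alpha$), I will build a $c$-competitive \minSk algorithm $\ALGp$ reading $\advice+O(\log n)$ bits. The crucial structural fact I will exploit is that in $H_x$ every $0$-vertex has degree at most $1$: its only possible incident edge is $(v_{f(x_i)},v_i)$, and it cannot be the endpoint of any later backward edge because the relation $f(x_j)=i$ forces $x_i=1$. Thus no $0$-vertex lies on any cycle of $H_x$. As a consequence, whenever $\no{x}\geq 3$ the induced subgraph on the $1$-vertices is precisely an $\no{x}$-cycle (the edges between consecutive $1$-vertices together with the edge $(v_{\textsc{min}},v_{\textsc{max}})$), this is the unique cycle of $H_x$, $\opt(H_x)=\no{x}$, and any feasible \CF solution on $H_x$ must contain every $1$-vertex.

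In the reduction, $\ORACLEp$ writes $O(\log n)$ bits encoding $n$, the index $v_{\textsc{max}}$, and a flag for the degenerate cases $\no{x}<3$ (for which the at most two positions of the $1$-bits are written explicitly and $\ALGp$ answers accordingly, using only the $O(\log n)$ bits of advice). In the generic case $\no{x}\geq 3$, $\ORACLEp$ additionally writes the advice string $\varphi$ that $\ORACLE$ would write on input $H_x$ (presented in vertex order $\langle v_1,\ldots,v_n\rangle$). In round $i$ of \minSk, $\ALGp$ simulates round $i$ of $\ALG$ by revealing $v_i$ along with the edge $(v_{f(x_i)},v_i)$ when $f(x_i)$ is defined (which is computable from $x_1,\ldots,x_{i-1}$ alone), and additionally, in the single round $i=v_{\textsc{max}}$, the edge $(v_{\textsc{min}},v_i)$. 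Note that $\no{x}\geq 2$ forces $v_{\textsc{min}}<v_{\textsc{max}}$, so $v_{\textsc{min}}$ can be read off from $x_1,\ldots,x_{v_{\textsc{max}}-1}$. $\ALGp$ then sets $y_i$ to equal $\ALG$'s decision on $v_i$.

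The main obstacle is exactly this extra edge $(v_{\textsc{min}},v_{\textsc{max}})$, which in the vertex-arrival model is revealed at round $v_{\textsc{max}}$, yet the identity of $v_{\textsc{max}}$ depends on the unseen suffix $x_{v_{\textsc{max}}+1},\ldots,x_n$ that $\ALGp$ has no access to at that moment. Transmitting $v_{\textsc{max}}$ via $\lceil\log n\rceil$ extra bits of advice sidesteps this cleanly, and crucially it does so without inflating the additive constant of the competitive ratio (in contrast to a workaround that appends an auxiliary vertex at the end of the instance and thereby bumps $\opt$ by one).

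Correctness of the reduction is then immediate from the structural fact above: since $\ALG$ is $c$-competitive and $H_x$ is a valid \CF instance (it contains a cycle) whenever $\no{x}\geq 3$, $\ALG$'s output is feasible, hence its vertex set contains every $1$-vertex of $x$, so $x\sq y$. Simultaneously, $\no{y}$ equals the total number of vertices picked by $\ALG$, which is at most $c\cdot\opt(H_x)+\alpha=c\no{x}+\alpha$. Combined with the direct handling of the cases $\no{x}<3$, $\ALGp$ is $c$-competitive (with the same additive constant $\alpha$) and uses $\advice+O(\log n)$ bits of advice.
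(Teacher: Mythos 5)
Your proof is correct and follows essentially the same route as the paper's: you use the graph $H_x$ from Definition~\ref{H}, handle the $\no{x}\leq 2$ case directly with $O(\log n)$ bits, simulate $\ALG$ round by round (observing that $f(x_i)$ is computable from the known prefix), communicate the index $\textsc{max}$ via $O(\log n)$ extra advice bits so the edge $(v_{\textsc{min}},v_{\textsc{max}})$ can be revealed at the right moment, and conclude from the fact that $V_1$ is the unique cycle that any feasible \CF solution contains all of $V_1$. Your explicit derivation that every $0$-vertex has degree at most $1$ (hence lies on no cycle) is a slightly more detailed justification of the paper's unstated assertion that $V_1$ is the only cycle, but the overall argument and the accounting of advice bits coincide with the paper's.
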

\begin{proof}
Let $\ALG$ be a $c$-competitive algorithm (with an additive constant $\alpha$) for \CF reading at most $\advice$ bits of advice. We will define an algorithm $\ALGp$ and an oracle \ORACLEp for \minSk such that $\ALGp$ is $c$-competitive (with the same additive constant) and reads at most $\advice+ O(\log n)$ bits of advice.

Let $x=x_1\ldots x_n$ be an input string to \minSk. The oracle \ORACLEp first writes one bit of advice to indicate if $\no{x}\leq 2$. If this is the case, \ORACLEp writes (in a self-delimiting way) the index of these at most two $1$s to the advice tape. This can be done using $O(\log n)$ bits and clearly allows \ALGp to be strictly $1$-competitive. In the rest of the proof, we will assume that there are at least three 1s in $x$.

Consider the input instance to \CF, $H_x=(V,E)$, defined in Definition~\ref{H}, where the vertices are requested in the order $\langle v_1, \ldots, v_n\rangle$. Note that the vertices $V_1$ form the only cycle in $H_x$. Thus, if an algorithm rejects a vertex from $V_1$, the subgraph induced by the vertices accepted by the algorithm cannot contain a cycle.

Let $\varphi$ be the advice read by $\ALG$, and let $V_{\ALG}$ be the vertices chosen by $\ALG$, when the $n$ vertices of $H_x$ are revealed. Since $\ALG$ is $c$-competitive, we know that $\ab{V_{\ALG}}\leq c\ab{V_1}+\alpha$.

We now define $\ALGp$. As usual, $y$ denotes the output computed by $\ALGp$. Since $\ALG$ is $c$-competitive, it must hold that $V_1\subseteq V_\ALG$. The oracle \ORACLEp writes the advice $\varphi$ to the advice tape. When $\ALGp$ receives request $i$ at the beginning of round $i$ in \minSk, it considers what $\ALG$ does when the vertex $v_i$ in $H_x$ is revealed: \ALGp answers $1$ if \ALG accepts $v_i$ and $0$ otherwise. Note that it is possible for $\ALGp$ to simulate $\ALG$ since, at the beginning of round $i$, $\ALGp$ knows $x_1\ldots x_{i-1}$. In particular, $\ALGp$ knows which edges were revealed to $\ALG$ along with the vertex $v_i$ in $H_x$. Note, however, that in order to simulate the edge from $v_{\textsc{min}}$ to $v_{\textsc{max}}$, $\ALG$ needs to know when $v_{\textsc{max}}$ is being revealed.
This can be achieved using $O(\log n)$ additional advice bits.

 Together with access to the advice $\varphi$ read by $\ALG$, this allows $\ALGp$ to simulate $\ALG$. Since $V_1\subseteq V_\ALG$, we get that $x\sq y$. Furthermore, since $\ab{V_\ALG}\leq c\ab{V_1}+\alpha$, we also get that $\no{y}\leq c\no{x}+\alpha$.
\end{proof}

\begin{theorem}
\CF is \Wcomplete
\end{theorem}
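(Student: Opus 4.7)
The proof will be completely analogous to the one given for \VC being \Wc, and is in fact essentially a one-line combination of results already established in the paper. By Definition~\ref{completedef}, I need to verify two things: membership of \CF in \W, and the matching lower bound of $B(n,c)-O(\log n)$ advice bits for any $c$-competitive algorithm.

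Membership in \W is immediate from Lemma~\ref{cfub}, so the only remaining task is the lower bound. Here the plan is to invoke Lemma~\ref{cfkh}, which shows that a $c$-competitive algorithm for \CF reading $\advice$ bits of advice yields a $c$-competitive algorithm for \minSk reading $\advice + O(\log n)$ bits of advice. Composing this reduction with the lower bound from Theorem~\ref{lminknown}, which states that any $c$-competitive algorithm for \minSk must read at least $B(n,c) - O(\log n)$ bits of advice, immediately gives that any $c$-competitive algorithm for \CF must read at least $B(n,c) - O(\log n)$ bits of advice as well (the two $O(\log n)$ terms simply absorb into one).

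There is no real obstacle here; all the work has already been done in the two supporting lemmas. The only cosmetic check is to make sure the additive $O(\log n)$ slack allowed by Definition~\ref{completedef} is large enough to swallow the $O(\log n)$ slack introduced by the reduction in Lemma~\ref{cfkh}, which it clearly is. Thus the proof fits in a single sentence mirroring the one given for \VC.
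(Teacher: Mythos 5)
Your proposal is correct and follows exactly the same route as the paper: membership via Lemma~\ref{cfub}, and the lower bound by composing the reduction in Lemma~\ref{cfkh} with Theorem~\ref{lminknown}. The paper's own proof is the one-liner you anticipated.
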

\begin{proof}
This follows from Lemmas~\ref{cfub} and \ref{cfkh} together with Theorem~\ref{lminknown}.
\end{proof}

\subsubsection{Online Dominating Set.}
In this section, we show that \DS is also \Wminc. We do not require that the vertices picked by the online algorithm form a dominating set at all times. We only require that the solution produced by the algorithm is a dominating set when the request sequence ends. Of course, this makes a difference only because we consider online algorithms with advice. 
For \VC, this issue did not arise, since it is not possible to end up with a vertex cover without maintaining a vertex cover at all times. Thus, in this aspect, \DS is more similar to \CF.

\begin{lemma}
\label{dsup}
\DS in in \Wmin.
\end{lemma}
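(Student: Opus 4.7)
The plan is to verify the three conditions of Definition~\ref{wdef} for \DS, mirroring the argument used for \VC in Lemma~\ref{VCupperlemma}.

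First I would observe that an input to \DS in the vertex-arrival model naturally takes the required form: the requests $\sigma = \langle r_1,\ldots,r_n\rangle$ are the revealed vertices (each with its edges to previously revealed vertices), no dummy request is needed, and for every $r_i$ the algorithm makes a binary choice $y_i\in\{0,1\}$ where $y_i=1$ means the vertex is included in the algorithm's set $V_{\ALG}$. Hence the output is exactly a string $y\in\{0,1\}^n$ with $y_i$ a function of $r_1,\ldots,r_i$, as required.

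Next I would check feasibility and the score function. By definition, $y$ is feasible iff $V_{\ALG}$ is a dominating set of the revealed graph, and in that case the cost is $|V_{\ALG}| = \no{y}$; otherwise the cost is $\infty$. Thus condition~1 of Definition~\ref{wdef} holds. To see that at least one feasible output exists, take $y = 1^n$: the full vertex set is trivially a dominating set whenever the graph is non-empty (and the empty input is feasible vacuously).

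For condition~2 I would use the monotonicity of the dominating-set property. Let $y'\in S_{\min}(\sigma)$ be any minimum dominating set of the input graph, and suppose $y'\sq y$, i.e.\ $V_{y'}\subseteq V_{y}$. Every vertex $v$ of the graph is dominated by some vertex in $V_{y'}$ (either $v\in V_{y'}$, or $v$ has a neighbour in $V_{y'}$), and this same vertex lies in $V_y$, so $v$ is also dominated by $V_y$. Hence $V_y$ is a dominating set and $y$ is feasible. This is the only slightly non-trivial step, though still immediate; the main observation is simply that ``being a dominating set'' is closed under taking supersets, exactly as ``being a vertex cover'' was for \VC. Together, the three items establish \DS $\in$ \W.
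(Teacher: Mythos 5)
Your proof is correct and follows exactly the route the paper intends: the paper omits this proof with the remark that it is ``almost identical to the proof of Lemma~\ref{VCupperlemma},'' and your argument is precisely that mirroring, with the one relevant substitution being that a superset of a dominating set is a dominating set (just as a superset of a vertex cover is a vertex cover).
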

In order to show that \DS is \Wminc, we use the following construction. 

\begin{definition}
\label{K}
For a string $x=x_1 \ldots x_n$ such that $\no{x}\geq 1$, define $\textsc{max}$ to be the largest $i$ such that $x_i=1$ and define $K_x=(V,E)$ as follows:
\begin{align*}
V&=\{v_1,\ldots , v_n\},\\
E&=\{(v_i, v_{\textsc{max}}) \colon x_i=0\}.
\end{align*} 
Furthermore, let $V_0=\{v_i \colon x_i=0\}$ and $V_1=\{v_i \colon x_i =1\}$.
\end{definition}

Note that $V_1$ is a smallest dominating set in $K_x$ and that any dominating set is either a superset of $V_1$ or equal to $V \setminus \{v_{\textsc{max}}\}$. 
We now give a lower bound on the advice complexity of \DS. Interestingly, it is possible to do this by making a reduction from \minSu (instead of \minSk) to \DS.

\begin{lemma} \label{dshard}
If there is a $c$-competitive algorithm for \DS reading $\advice$ bits of advice, then there is a $c$-competitive algorithm reading $\advice+ O(\log n)$ bits
of advice for \minSu.
\end{lemma}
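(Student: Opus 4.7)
The plan is to have \ALGp produce its output by simulating \ALG on the graph $K_x$, but \emph{only} during the initial rounds $i < \textsc{max}$ in which no edge of $K_x$ has yet been revealed (every edge of $K_x$ is incident to $v_{\textsc{max}}$ and is therefore revealed no earlier than round $\textsc{max}$). For the remaining rounds the output can be hard-coded once $\textsc{max}$ is known. Concretely, given $x$, the oracle \ORACLEp constructs $K_x$, determines an advice string $\varphi$ of length at most $\advice$ with which \ALG is $c$-competitive on $K_x$, and writes to the tape a self-delimiting encoding of $\textsc{max}$ followed by $\varphi$; the case $\no{x}=0$ is signalled by a special value of $\textsc{max}$ and handled by \ALGp simply outputting $y=0^n$. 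In total, this uses at most $\advice + O(\log n)$ bits of advice.

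The algorithm \ALGp reads $\textsc{max}$ and then, for each round $i < \textsc{max}$, simulates \ALG by presenting the isolated vertex $v_i$ and outputs $y_i$ equal to \ALG's decision. For $i = \textsc{max}$ it outputs $y_i = 1$ without continuing the simulation, and for $i > \textsc{max}$ it outputs $y_i = 0$. The critical observation is that in $K_x$ no edge is revealed during rounds $1,\ldots,\textsc{max}-1$, so from \ALG's perspective the simulated inputs and the real inputs on $K_x$ agree through round $\textsc{max}-1$; being deterministic, \ALG produces identical decisions on both. Feasibility follows because any $v_i$ with $x_i = 1$ and $i < \textsc{max}$ is isolated in $K_x$ and must therefore lie in every feasible dominating set---in particular in $V_\ALG$---so \ALG accepts it, giving $y_i = 1$. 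Combined with $y_{\textsc{max}} = 1$ and $x_i = 0$ for $i > \textsc{max}$, this yields $x \sq y$. For competitiveness, one notes that $\OPT(K_x) = \no{x}$ (the set $V_1$ is a minimum dominating set), so $\no{y} \le |V_\ALG \cap \{v_1,\ldots,v_{\textsc{max}-1}\}| + 1 \le |V_\ALG| + 1 \le c\cdot\no{x} + \alpha + 1$, and \ALGp is $c$-competitive.

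The main obstacle, compared to the \VC and \CF reductions earlier in the paper, is that those reductions could use \minSk and thereby let \ALGp learn the bits of $x$ online, building the correct edges round by round. In \minSu no bits of $x$ are revealed, so a round-by-round simulation of \ALG on $K_x$ is flatly impossible; in particular \ALGp cannot reproduce round $\textsc{max}$, where the edges from $v_{\textsc{max}}$ to the earlier $V_0$ vertices would have to be shown. The trick that makes the reduction go through is that the star-like structure of $K_x$ concentrates all of its edges at or after round $\textsc{max}$, whereas the decisions \ALGp actually needs from \ALG live in rounds $1,\ldots,\textsc{max}-1$---and for those rounds \ALG has no edges to see anyway, so \ALGp can simulate them without any knowledge of $x$.
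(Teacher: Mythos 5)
Your proof is correct and follows the paper's approach: construct $K_x$, exploit the fact that all of its edges are incident to $v_{\textsc{max}}$ so that \ALG sees only isolated vertices before round $\textsc{max}$, simulate those rounds, and then hard-code the answers for rounds $\textsc{max},\ldots,n$. The one place you diverge is minor but worth noting: the paper splits into two cases according to whether \ALG accepts $v_{\textsc{max}}$, and in the second case transmits (with $O(\log n)$ extra bits) the index of a $V_0$ vertex accepted by \ALG and flips that answer to $0$, precisely so that $\no{y}\le|V_\ALG|$ and the additive constant stays $\alpha$; you skip the case split and simply absorb the possibly uncompensated $+1$ from always answering $1$ in round $\textsc{max}$, obtaining $\no{y}\le c\,\no{x}+\alpha+1$, which is still $c$-competitive. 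This is a small genuine simplification, valid because the lemma only asserts $c$-competitiveness, not preservation of the additive constant.
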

\begin{proof}
Let $\ALG$ be a $c$-competitive algorithm (with an additive constant of $\alpha$) for \DS reading at most $\advice$ bits of advice. We will define an algorithm $\ALGp$ and an oracle \ORACLEp for \minSu such that $\ALGp$ is $c$-competitive (with the same additive constant) and reads at most $\advice+ O(\log n)$ bits of advice.

Let $x=x_1\ldots x_n$ be an input string to \minSu. The oracle \ORACLEp first writes one bit of advice to indicate if $\no{x}=0$. If this is the case, \ALGp answers 0 in each round. In the rest of the proof, we will assume that $\no{x}\geq 1$.

Consider the input instance to \DS, $K_x=(V,E)$, defined in Definition~\ref{K}, where the vertices are requested in the order $\langle v_1, \ldots, v_n \rangle$. Note that $V_1$ is the smallest dominating set in $K_x$.
Let $\varphi$ be the advice read by $\ALG$, and let $V_{\ALG}$ be the vertices chosen by $\ALG$, when the $n$ vertices of $K_x$ are revealed. Since $\ALG$ is $c$-competitive, we know that $V_{\ALG}$ is a dominating set of size $\ab{V_{\ALG}}\leq c\ab{V_1}+\alpha$. 

We now define $\ALGp$ and $\ORACLEp$. 
The second bit of the advice tape will be used to let \ALGp distinguish the two cases described below.
Note that the only vertex from $V_1$ that can be rejected by a $c$-competitive algorithm is $v_{\textsc{max}}$, and nothing can be rejected when $V_1=V$. Hence the two cases are exhaustive.

{\em Case 1: \ALG accepts all vertices in $V_1$.}
The oracle \ORACLEp writes the value of $\textsc{max}$ in a self-delimiting way. This requires $O(\log n)$ bits. Furthermore, \ORACLEp writes $\varphi$ to the advice tape.
Now, \ALGp learns $\varphi$ and $\textsc{max}$ and works as follows: In round $i\leq \textsc{max}-1$, \ALGp answers 1 if $\ALG$ accepts the vertex $v_i$ and 0 otherwise. Note that \ALGp knows that no edges are revealed to \ALG in the first $\textsc{max}-1$ rounds. Thus, \ALGp can compute the answer produced by \ALG in these rounds from $\varphi$ alone. In round $\textsc{max}$, \ALGp answers $1$. In rounds $\textsc{max}+1,\ldots , n$, the algorithm \ALGp always answers $0$. 

{\em Case 2: \ALG rejects $v_{\textsc{max}}$.}
In order to dominate $v_{\textsc{max}}$, \ALG must accept a vertex $v_i \in V_0$.
The oracle \ORACLEp writes the values of $\textsc{max}$ and $i$ in a self-delimiting way, followed by $\varphi$, to the advice tape.
\ALGp behaves as in Case 1, except that it answers 0 in round $i$.

In both cases, $x\sq y$ and $\no{y}\leq \ab{V_{\ALG}}\leq c\ab{V_{1}}+\alpha = c\no{x}+\alpha$. Furthermore, \ALGp reads $\advice + O(\log n)$ bits of advice.
\end{proof}

\begin{theorem}
\DS is \Wcomplete
\end{theorem}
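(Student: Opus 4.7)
The plan is to assemble the two ingredients that have just been established: the membership result (Lemma \ref{dsup}) placing \DS inside \W, and the hardness reduction (Lemma \ref{dshard}) from \minSu to \DS. Together with the lower bound on \minSu proved in Theorem \ref{covlow}, these two facts match exactly the two conditions required by Definition \ref{completedef}.

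First, Lemma \ref{dsup} verifies that \DS is in \W, which directly gives the first bullet of Definition \ref{completedef}. Second, I would argue the lower bound by contraposition via the reduction. Suppose \ALG is a $c$-competitive algorithm for \DS reading $b$ bits of advice on inputs of length $n$. By Lemma \ref{dshard}, this yields a $c$-competitive algorithm for \minSu reading $b + O(\log n)$ bits. But Theorem \ref{covlow} tells us that any such \minSu algorithm must read at least $B(n,c) - O(\log n)$ bits. Combining these two inequalities gives
\[
b + O(\log n) \;\geq\; B(n,c) - O(\log n),
\]
so $b \geq B(n,c) - O(\log n)$, which is exactly the second bullet of Definition \ref{completedef}.

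There is essentially no obstacle, since both required lemmas are already in hand and the argument is a direct chaining of them, identical in structure to the corresponding completeness proofs for \VC and \CF earlier in the section. The only subtle point worth double-checking is that the additive $O(\log n)$ loss in Lemma \ref{dshard} is indeed absorbed by the $O(\log n)$ slack in the definition of \Wcomplete, but this is precisely why Definition \ref{completedef} was stated with that slack. No further calculation is needed.
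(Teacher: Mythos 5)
Your proof is correct and takes essentially the same approach as the paper: membership via Lemma~\ref{dsup}, hardness via the reduction of Lemma~\ref{dshard}, and the lower bound on advice for the target guessing problem. The one small difference is the citation for the lower bound: you invoke Theorem~\ref{covlow} (the \minSu lower bound), which is the direct match since Lemma~\ref{dshard} produces a \minSu algorithm, while the paper cites Theorem~\ref{lminknown} (the \minSk lower bound), which also works because any \minSu algorithm is trivially a \minSk algorithm, so the \minSk lower bound subsumes it; your choice is the more immediate one and either is valid.
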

\begin{proof}
This follows from Lemmas~\ref{dsup} and \ref{dshard} together with Theorem~\ref{lminknown}.
\end{proof}

\subsubsection{Online Set Cover.}
We study a version of \SC in which the universe is known from the beginning and the sets arrive online. 
Note that this problem is very different from the set cover problem studied in \cite{setcover, Asetcover}, where the elements (and not the sets) arrive online. 

\begin{lemma} \label{scw}
\SC is in \W
\end{lemma}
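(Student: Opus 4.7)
The plan is to verify that \SC fits into the template provided by Definition~\ref{wdef}, following essentially the same pattern as the proofs of Lemma~\ref{VCupperlemma}, Lemma~\ref{cfub}, and Lemma~\ref{dsup}. First I would recall the setup: an input to \SC is a sequence $\sigma = \langle A_1, \ldots, A_n\rangle$ of subsets of a known universe $U$ with $\bigcup_{i=1}^n A_i = U$, and on request $r_i = A_i$ the algorithm must irrevocably decide $y_i \in \{0,1\}$, where $y_i = 1$ means that $A_i$ is accepted into the cover. There is no dummy request. This immediately shows that \SC has the required form: the input arrives as a sequence of $n$ requests, and the algorithm produces a binary output string $y = y_1 \ldots y_n$ with $y_i$ depending only on $r_1, \ldots, r_i$.

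Next I would define the score function $s(\sigma, y)$ as follows: $s(\sigma, y) = \no{y}$ whenever the accepted sets $\{A_i : y_i = 1\}$ cover $U$, and $s(\sigma, y) = \infty$ otherwise. Since $\bigcup_i A_i = U$, the all-ones output $y = 1^n$ is always feasible, so at least one feasible output exists. Condition~1 of Definition~\ref{wdef} is then immediate from the definition of $s$, since the cost of a feasible output is precisely the number of $1$s in $y$.

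The only part that actually requires an argument is condition~2, namely that if $y' \in S_{\min}(\sigma)$ and $y' \sq y$, then $y$ is feasible. This is the \emph{monotonicity} step, which is the hallmark of problems in \W. The argument is trivial here: by definition of $\sq$, every index $i$ with $y'_i = 1$ also has $y_i = 1$, so $\{A_i : y'_i = 1\} \subseteq \{A_i : y_i = 1\}$, and taking unions preserves this inclusion. Since the left-hand side already equals $U$ (as $y'$ is an optimal, hence feasible, set cover), so does the right-hand side, making $y$ feasible.

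I do not expect any serious obstacle; the main thing to be careful about is simply to match the template of Definition~\ref{wdef} precisely — in particular noting that there is no dummy request, checking that a feasible output always exists, and explicitly invoking the monotonicity of the cover property under taking supersets of accepted sets. With these three ingredients verified, the lemma follows.
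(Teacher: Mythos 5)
Your proof is correct and follows essentially the same template as the paper's proof of Lemma~\ref{VCupperlemma}: verify the form of the problem (binary choice per request, no dummy request), check that a feasible output always exists (take all sets), and observe that condition 2 holds because a superset of a set cover is still a set cover. The paper itself omits the proof of this lemma, noting that it is almost identical to that of Lemma~\ref{VCupperlemma}, so your argument matches the intended one.
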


\begin{lemma} \label{schard}
If there is a $c$-competitive algorithm for \SC reading $\advice$ bits of advice, then there
is a $c$-competitive algorithm reading $\advice + O(\log n)$ bits of advice for \minSu.
\end{lemma}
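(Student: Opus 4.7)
The plan is to mimic the reduction for \DS in Lemma~\ref{dshard}: construct an \SC instance from the \minSu input $x$ in which almost every set is trivial and independent of $x$, with all $x$-dependence concentrated in a single distinguished set placed at the last $1$-position of $x$. This lets $\ALGp$ simulate $\ALG$ on the trivial rounds and handle the distinguished round (plus all subsequent rounds) with canned answers, just as the \DS reduction handles $v_{\textsc{max}}$ and everything after it.

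Given a $c$-competitive \SC algorithm $\ALG$ with additive constant $\alpha$ reading $\advice$ bits of advice, I will define $\ALGp$ and $\ORACLEp$ for \minSu. For an input $x=x_1\ldots x_n$ with $\no{x}\geq 1$ (the case $\no{x}=0$ is handled with one extra bit of advice and an all-zero output), let $i^*=\max\{i:x_i=1\}$. The oracle constructs the \SC instance with universe $U=\{u_1,\ldots,u_n\}$, singleton sets $A_i=\{u_i\}$ for every $i\neq i^*$, and the distinguished set $A_{i^*}=\{u_{i^*}\}\cup\{u_j:x_j=0\}$. It writes $n$ and $i^*$ self-delimitingly followed by the advice $\varphi$ that $\ALG$ reads on this instance, for a total of $\advice+O(\log n)$ bits.

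Because $u_i$ (for $x_i=1$, $i\neq i^*$) lies only in $A_i$, $u_{i^*}$ lies only in $A_{i^*}$, and every $u_j$ with $x_j=0$ lies in $A_{i^*}$, the minimum cover equals $\{A_i:x_i=1\}$ with cost exactly $\no{x}$, and any feasible cover must include $A_{i^*}$ together with every $A_i$ for which $x_i=1$ and $i<i^*$. The algorithm $\ALGp$ operates as follows: for each round $i<i^*$ it advances one step of $\ALG$'s simulation by feeding it the singleton $A_i=\{u_i\}$ (reconstructible from $i$ and $n$ alone) and sets $y_i$ to $\ALG$'s decision on $A_i$; at round $i^*$ it outputs $y_{i^*}=1$ without simulating $\ALG$ on the $x$-dependent set $A_{i^*}$; at rounds $i>i^*$ it outputs $y_i=0$.

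Feasibility $x\sq y$ is immediate: $x_i=1$ with $i<i^*$ forces $\ALG$ to accept the uniquely-covering singleton $A_i$ so $y_i=1$, $y_{i^*}=1\geq x_{i^*}$ trivially, and $x_i=0=y_i$ for $i>i^*$. Since $\ALG$ must accept $A_{i^*}$, its total cost is at least (simulated count for $i<i^*$) $+\,1=\no{y}$, so $\no{y}\leq\ALG$'s cost $\leq c\cdot\OPT+\alpha=c\no{x}+\alpha$. The main obstacle is arranging the simulation in the absence of $x$; this is resolved by concentrating all $x$-dependence into $A_{i^*}$, which $\ALGp$ never examines, and by placing this set at the last $1$-position so that no simulation past the distinguished round is ever required.
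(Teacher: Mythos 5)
Your proof is correct and follows essentially the same approach as the paper: the paper defines the very same \SC instance (universe $[n]$, singletons $A_i=\{i\}$ for $i\neq\textsc{max}$, and $A_{\textsc{max}}=\{\textsc{max}\}\cup S_0$) and then invokes Case~1 of the argument from Lemma~\ref{dshard}, noting that a $c$-competitive algorithm must accept every set containing a uniquely-covered element. You have simply unpacked the details that the paper delegates to ``the same arguments as in Lemma~\ref{dshard}.''
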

\begin{proof}
Let $x=x_1\ldots x_n$ be an input string to \minSu with $\no{x}\geq 1$, and define {\sc max} as in Definition~\ref{K}. We define an instance of \SC as follows.
The universe is $[n]=\{1,\ldots , n\}$ and there are $n$ requests. For $i \neq \textsc{max}$, request $i$ is just the singleton $\{i\}$. Request $\textsc{max}$ is the set $\{\textsc{max}\}\cup S_0$, where $S_0=\{i : x_i=0\}$.

Using these instances of \SC and the same arguments as in Lemma~\ref{dshard} proves the theorem.
Note that for \SC, only Case 1 of Lemma~\ref{dshard} is relevant, since a $c$-competitive algorithm for this problem will accept all requests $i \not\in S_0$.
\end{proof}

\begin{theorem}
\SC is \Wc
\end{theorem}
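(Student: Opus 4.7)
The plan is to show \SC is \Wc by verifying the two conditions of Definition~\ref{completedef}, exactly mirroring the structure used for the \VC, \CF, and \DS completeness theorems. Membership in \W is already handled by Lemma~\ref{scw}, so the only remaining task is to establish the advice lower bound: every $c$-competitive algorithm for \SC must read at least $B(n,c)-O(\log n)$ bits of advice.

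For the lower bound I would chain Lemma~\ref{schard} with the advice lower bound for the underlying string guessing variant. Concretely, suppose there were a $c$-competitive algorithm for \SC using $b$ bits of advice. By Lemma~\ref{schard}, this would yield a $c$-competitive algorithm for \minSu using $b+O(\log n)$ bits of advice. Since Theorem~\ref{covlow} asserts that every $c$-competitive algorithm for \minSu must read at least $B(n,c)-O(\log n)$ bits of advice, we obtain $b+O(\log n) \geq B(n,c)-O(\log n)$. Absorbing both logarithmic terms on the right-hand side yields $b \geq B(n,c)-O(\log n)$, which is exactly the bound required by Definition~\ref{completedef}.

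There is essentially no obstacle to overcome here, since all the heavy lifting has been done in the supporting lemmas and in the advice complexity analysis of Section~\ref{section:ac}. The only small bookkeeping point is that Lemma~\ref{schard} reduces from \minSu (rather than from \minSk, as in the \VC and \CF cases), so we invoke Theorem~\ref{covlow} instead of Theorem~\ref{lminknown}; this makes no difference to the final bound, as both variants have advice complexity $B(n,c)\pm O(\log n)$. Combining everything gives a single-sentence proof in the same style as the completeness proofs for \CF and \DS.
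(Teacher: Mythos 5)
Your proof is correct and takes essentially the same approach as the paper: membership via Lemma~\ref{scw}, hardness via Lemma~\ref{schard}, and a string-guessing lower bound from Section~\ref{section:ac}. The only cosmetic difference is that the paper cites Theorem~\ref{lminknown} (the \minSk lower bound) while you cite Theorem~\ref{covlow} (the \minSu lower bound); both are valid here since any \minSu algorithm is in particular a \minSk algorithm, and the two theorems give the same $B(n,c)-O(\log n)$ bound.
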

\begin{proof}
This follows from Lemmas~\ref{scw} and \ref{schard} together with Theorem~\ref{lminknown}.
\end{proof}


\subsection{\Wmaxc maximization problems}
In this section, we consider two maximization problems which are \Wmaxc.
\subsubsection{Online Independent Set.} The first maximization problem that we consider is \IS.

\begin{lemma}
\label{isupper}
\IS is in \Wmax.
\end{lemma}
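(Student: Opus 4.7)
The plan is to verify the three conditions of Definition~\ref{wdef} for the maximization version, in direct analogy with the proof of Lemma~\ref{VCupperlemma} for \VC. Recall that in \IS the input $\sigma=\langle r_1,\ldots,r_n\rangle$ is a sequence of vertices $v_1,\ldots,v_n$, where each $r_i$ arrives together with all edges joining $v_i$ to previously revealed vertices. No dummy request is needed. I will interpret the binary output $y=y_1\ldots y_n$ in the ``maximization convention'' of Section~\ref{sectionclass}: $y_i=0$ means that $v_i$ is accepted into the independent set, and $y_i=1$ means it is rejected. Feasibility then means that the accepted vertices $\{v_i : y_i=0\}$ form an independent set of the revealed graph, and by definition the profit of an infeasible output is $-\infty$. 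Since $y=1\ldots 1$ (the empty set) is always a valid independent set, at least one feasible output exists, so $S_{\max}(\sigma)\neq\emptyset$.

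Next I would check the two maximization conditions. For condition~1, a feasible output has profit equal to the number of accepted vertices, which is exactly $\nz{y}$; this is immediate from the chosen interpretation. For condition~2, I need to show that whenever there exists $y'\in S_{\max}(\sigma)$ with $y'\sq y$, the output $y$ is feasible. Unfolding $\sq$: the relation $y'\sq y$ says that $y'_i=1\Rightarrow y_i=1$, or equivalently, $y_i=0\Rightarrow y'_i=0$. Thus $\{v_i : y_i=0\}\subseteq\{v_i : y'_i=0\}$, i.e., the vertices accepted by $y$ form a subset of those in the maximum independent set represented by $y'$. Since every subset of an independent set is independent, the vertices accepted by $y$ form an independent set, so $y$ is feasible.

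The only thing to be careful about is the orientation of bits: the definition of \W\ is stated in terms of $\sq$ on $\{0,1\}^n$, and it is only the maximization reading of $0$ as ``accept'' that makes the subset-closure argument line up with $\sq$. Once that convention is made explicit, the verification is routine, and there is no real obstacle. I would conclude by noting that the proof matches exactly the template of Lemma~\ref{VCupperlemma}, with the roles of $0$ and $1$ swapped and ``subset of a feasible solution'' replacing ``superset of a feasible solution'', as the paper already signals when it states that the omitted membership proofs are almost identical to that of \VC.
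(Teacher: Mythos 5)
Your proof is correct and takes essentially the same approach as the paper: you verify the conditions of Definition~\ref{wdef} for the maximization case under the convention that $y_i=0$ means the vertex is accepted, observe that the all-ones output is feasible, and use the fact that any subset of an independent set is independent to establish condition~2. Your unfolding of the $\sq$ relation is a slightly more explicit version of the paper's one-line closing remark that ``any subset of the vertices in an optimal solution is a feasible solution.''
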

\begin{proof}
Each request is a vertex along with the edges connecting it to previously requested vertices. The algorithm makes a binary choice for each request,
to include the vertex ($y_i=0$) or not ($y_i=1$). The feasible outputs are those that are independent sets. There exists a feasible output (taking no vertices).
The score of a feasible output is the number of vertices in it, and the score of an infeasible output is $-\infty$. Any subset of the vertices in an optimal solution is a feasible solution.
\end{proof}

\begin{lemma}
\label{iskh}
If there is a $c$-competitive algorithm reading $b$ bits for \IS, then there is a $c$-competitive algorithm reading $b + O(\log n)$ bits for \maxSk.
\end{lemma}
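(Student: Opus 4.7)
The plan is to use the same split graph $G_x$ from Definition~\ref{G} that was used in the \VC reduction (Lemma~\ref{vckh}), but now exploit that $V_0$ is a large independent set and $V_1$ is a clique. Given a $c$-competitive \IS algorithm $\ALG$ reading $b$ bits of advice (with additive constant $\alpha$), I construct $\ALGp$ for \maxSk that, on input $x=x_1\ldots x_n$, simulates $\ALG$ on $G_x$ with vertices revealed in the order $v_1,\ldots,v_n$. At round $i$ of \maxSk, $\ALGp$ has learned $x_1,\ldots,x_{i-1}$ and therefore knows exactly which edges $(v_j,v_i)$ with $j<i$ and $x_j=1$ to present alongside $v_i$, so the simulation is genuinely online. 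The natural correspondence is $y_i=0$ iff $\ALG$ accepts $v_i$, matching the convention that ``accept'' in \IS plays the role of a $0$ in \maxS. The two key structural facts are that $V_0$ is an independent set of size $\nz{x}$, so $\OPT_{\IS}(G_x)\geq \nz{x}$, and that $V_1$ is a clique, so $V_{\ALG}$ contains at most one vertex of $V_1$.

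The main obstacle is that if $\ALG$ accepts a vertex $v_{i^*}\in V_1$, the naive rule $y_{i^*}=0$ violates $x\sq y$ because $x_{i^*}=1$. I resolve this with $O(\log n)$ extra advice bits that $\ORACLEp$ writes before $\ALG$'s original advice string $\varphi$. One bit distinguishes two cases. In Case~1, $V_{\ALG}\cap V_1=\emptyset$, and $\ALGp$ simply mirrors $\ALG$; every $v_i\in V_1$ is rejected by $\ALG$, hence $y_i=1$, so $x\sq y$ holds and $\nz{y}=|V_{\ALG}|$. In Case~2, $V_{\ALG}\cap V_1=\{v_{i^*}\}$ for a single index $i^*$, which $\ORACLEp$ encodes in a self-delimiting way using $O(\log n)$ bits; $\ALGp$ mirrors $\ALG$ except that it overrides its output in round $i^*$ by answering $y_{i^*}=1$. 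Since no other vertex of $V_1$ is accepted by $\ALG$, every other $v_j\in V_1$ contributes $y_j=1$ automatically, so $x\sq y$ still holds. The override costs exactly one in the score, giving $\nz{y}=|V_{\ALG}|-1$.

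For the competitive ratio, since $\OPT_{\IS}(G_x)\geq \nz{x}$ and $\ALG$ is $c$-competitive, $\nz{x}\leq c\,|V_{\ALG}|+\alpha$. In Case~1 this directly yields $\nz{x}\leq c\nz{y}+\alpha$, and in Case~2 it yields $\nz{x}\leq c(\nz{y}+1)+\alpha=c\nz{y}+(c+\alpha)$. In both cases $\ALGp$ is $c$-competitive (with additive constant at most $c+\alpha$), and the total advice read is $b+O(\log n)$, establishing the lemma. I expect no further technical difficulties beyond carefully checking that $\ALGp$ can carry out the simulation online, which follows because the edges of $G_x$ incident to $v_i$ depend only on $x_1,\ldots,x_{i-1}$, known to $\ALGp$ at the start of round $i$ in \maxSk.
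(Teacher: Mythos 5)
Your reduction uses the same graph $G_x$, the same online simulation argument, and the same observation that $V_1$ is a clique so $\ALG$ accepts at most one bad vertex. The structure is right, and Case~1 is handled exactly as in the paper. However, your handling of Case~2 has a genuine gap: when $\ALG$ accepts a bad vertex $v_{i^*}$, you simply override $y_{i^*}$ to $1$, which costs one unit of score, and you absorb this into the additive term as $\nz{x}\leq c\nz{y}+(c+\alpha)$. But the paper's definition of $c$-competitive requires the additive term $\alpha$ to be a \emph{constant}, whereas the paper (see the concluding theorem and the discussion of the competitive ratio) explicitly allows $c=c(n)$ to be a function of the input length. When $c$ grows with $n$, your term $c+\alpha$ is not a constant, so $\ALGp$ is not $c$-competitive in the required sense, and the downstream application (combining this lemma with Theorem~\ref{kpacklow} to conclude that \IS is \Wc) breaks for such $c$.

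The paper avoids this loss by splitting your Case~2 into two subcases. If $\ALG$ accepts the bad vertex $v_{i^*}$ but rejects some good vertex $v_j$ (paper's Case~2a), the oracle also encodes $j$, and $\ALGp$ \emph{swaps}: it sets $y_{i^*}=1$ and $y_j=0$, so $\nz{y}=\ab{V_\ALG}$ is unchanged and the additive constant stays exactly $\alpha$. If $\ALG$ accepts $v_{i^*}$ and all good vertices (paper's Case~2b), then $V_\ALG=V_0\cup\{v_{i^*}\}$ has size $\ab{V_0}+1$, and the override yields $\nz{y}=\ab{V_0}=\nz{x}$, so $\ALGp$ is in fact optimal. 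Both subcases still need only $O(\log n)$ extra advice (for the case flag and the indices $i^*$, $j$), so incorporating this swap into your argument fixes the gap without changing the overall approach.
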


\begin{proof}
The proof is almost identical to the proof of Lemma \ref{vckh}.
Let \ALG be a $c$-competitive algorithm (with an additive constant of $\alpha$) for \IS reading at most $b$ bits of advice. We will define an algorithm \ALGp and an oracle \ORACLEp for \maxSk such that $\ALGp$ is $c$-competitive (with the same additive constant) and reads at most $b$ bits of advice.

As in Lemma \ref{vckh}, on input $x=x_1\ldots x_n$ to \IS, the algorithm \ALGp simulates \ALG on $G_x$ (from Definition~\ref{G}). This time, a vertex in $V_0$ is \emph{good} and a vertex in $V_1$ is \emph{bad}. Note that $V_0\cup\{v_n\}$ is a maximum independent set in $G_x$. Also, if \ALG accepts a bad vertex, $v_i$, then no further vertices $v_j$ (where $i<j\leq n$) can be accepted because of the edges $(v_i, v_j)$. Thus, \ALG accepts at most one bad vertex. Let $V_{\ALG}$ be the vertices accepted by \ALG. Since \ALG is $c$-competitive, $V_{\ALG}$ is an independent set satisfying $\ab{V_0}\leq \ab{V_0\cup\{v_n\}}\leq c\ab{V_{\ALG}}+\alpha$. We denote by $y$ the output computed by $\ALGp$. There are three cases to consider:

\emph{Case 1: All vertices accepted by $\ALG$ are good, that is $V_\ALG\subseteq V_0$.} In this case, \ALG answers $0$ in round $i$ if $v_i\in V_\ALG$ and $1$ otherwise. Clearly, $x \sq y$ and $\nz{x} = \ab{V_0}\leq c\ab{V_{\ALG}}+\alpha= c\nz{y}+\alpha$.
 

\emph{Case 2a: \ALG accepts a bad vertex, $v_i$, and rejects a good vertex, $v_j$.} In this case, the oracle \ORACLEp writes the indices $i$ and $j$ in a self-delimiting way. \ALG simulates \ALGp as before, but answers $1$ in round $i$ and $0$ in round $j$. It follows that $x \sq y$ and $\nz{x}\leq c\nz{y}+\alpha$.

\emph{Case 2b: \ALG accepts a bad vertex, $v_i$, and all good vertices.} This implies that $V_\ALG$ is an independent set of size $\ab{V_0}+1$, which must be optimal. The oracle \ORACLEp writes the value of $i$ to the advice tape in a self-delimiting way. \ALG simulates \ALGp as before but answers $1$ in round $i$. It follows that $x \sq y$. Furthermore, $\nz{y}=\ab{V_{\ALG}}-1=\ab{V_0}=\nz{x}$, and hence the solution $y$ is optimal.

In order to simulate \ALG, the algorithm \ALGp needs to read at most $b$ bits of advice plus $O(\log n)$ bits of advice to specify the case and handle the cases where \ALG accepts a bad vertex. 
\end{proof}

\begin{theorem}
\IS is \Wc.
\end{theorem}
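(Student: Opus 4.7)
The plan is to directly invoke the two preparatory results established immediately above the theorem together with the matching lower bound for \maxSk, exactly as was done for the minimization problems (\VC, \CF, \DS, \SC) in the previous subsection. Concretely, membership of \IS in \W is already Lemma~\ref{isupper}, so the first condition of Definition~\ref{completedef} is in hand.

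For the lower bound, I would argue by contraposition using the reduction in Lemma~\ref{iskh}. Suppose some $c$-competitive algorithm for \IS uses only $b$ bits of advice. Then Lemma~\ref{iskh} produces a $c$-competitive algorithm for \maxSk that reads at most $b + O(\log n)$ bits. By Theorem~\ref{kpacklow}, any such algorithm must read at least $B(n,c) - O(\log n)$ bits, and hence
\[
b + O(\log n) \;\geq\; B(n,c) - O(\log n),
\]
which rearranges to $b \geq B(n,c) - O(\log n)$. This is exactly the requirement in Definition~\ref{completedef}.

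Combining the two bullet points of Definition~\ref{completedef} (membership in \W and the matching lower bound), \IS is \Wc. Since every step is a direct appeal to a previously proved statement, there is no real obstacle in the proof; the substance of the work was carried out in Lemma~\ref{iskh}, where the hard instance is the split graph $G_x$ from Definition~\ref{G} and the case analysis handles the possibility that \ALG accepts a bad vertex or rejects a good one.
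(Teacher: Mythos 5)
Your proof is correct and follows exactly the same route as the paper: membership via Lemma~\ref{isupper}, and the lower bound by combining the reduction of Lemma~\ref{iskh} with the \maxSk lower bound in Theorem~\ref{kpacklow}. The paper states this in a single line; you have simply made the contrapositive chaining of the two bounds explicit.
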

\begin{proof}
This follows from Lemmas~\ref{isupper} and \ref{iskh} together with Theorem~\ref{kpacklow}.
\end{proof}

\subsubsection{Online Disjoint Path Allocation.}
In this section, we show that \DPA is \Wmaxc.

\begin{lemma}
\DPA is in \Wmax.
\end{lemma}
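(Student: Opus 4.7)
The plan is to verify the conditions of Definition~\ref{wdef} under the encoding $y_i = 0 \Leftrightarrow$ \emph{accept} request $i$ and $y_i = 1 \Leftrightarrow$ \emph{reject} request $i$. Under this encoding, the number of accepted subpaths is precisely $\nz{y}$, which matches the score function for maximization problems in \Wmax. I would begin by observing that the input is a sequence of requests $\sigma = \langle r_1, \dots, r_n\rangle$, where each $r_i = (v_{i_1}, v_{i_2})$ specifies a subpath, and that no final dummy request is needed. For each request, the algorithm makes a binary decision $y_i$ based on $r_1, \dots, r_i$, so the syntactic form matches the definition.

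Next, I would check condition 1: a feasible output $y$ is one in which the accepted subpaths share no edges, and its profit is defined as the number of accepted subpaths, which equals $\nz{y}$. Infeasible outputs get profit $-\infty$ by convention. At least one feasible output exists, namely the all-ones string corresponding to rejecting every request (profit $0$).

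The main step is verifying condition 2. Let $y' \in S_{\max}(\sigma)$ and let $y \in \{0,1\}^n$ satisfy $y' \sq y$. Writing out the relation, $y'_i = 1 \Rightarrow y_i = 1$ for all $i$, equivalently $y_i = 0 \Rightarrow y'_i = 0$. Hence the set of subpaths accepted by $y$ is a subset of those accepted by $y'$. Since $y'$ is optimal, it is in particular feasible, so the paths it accepts are pairwise edge-disjoint; any subcollection of pairwise edge-disjoint paths is again pairwise edge-disjoint, so $y$ is feasible.

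There is no real obstacle here; the only point that requires care is making sure the encoding $0 = \text{accept}$, $1 = \text{reject}$ lines up with the $\nz{y}$ scoring convention for maximization problems (mirroring the \IS case in Lemma~\ref{isupper}). Once this encoding is fixed, all parts of the definition are satisfied directly from the fact that the feasibility constraint (edge-disjointness of accepted subpaths) is closed under taking subsets of the accepted set.
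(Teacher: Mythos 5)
Your proof is correct and follows exactly the pattern the paper uses for membership lemmas — the paper omits this particular proof and refers the reader to the analogous argument for \IS (Lemma~\ref{isupper}), which, like yours, fixes the encoding $0=\text{accept}$, $1=\text{reject}$, observes that the profit of a feasible output is $\nz{y}$, notes the all-ones (reject everything) string is feasible, and concludes from the fact that any subset of a set of pairwise edge-disjoint subpaths is again pairwise edge-disjoint.
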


In Lemma~\ref{dpkh}, we use the same hard instance for \DPA as in \cite{A1} to get a lower bound on the advice complexity of \DPA.
\begin{lemma}
\label{dpkh}
If there is a $c$-competitive algorithm reading $b$ bits for \DPA, then there is a $c$-competitive algorithm reading $b + O(\log n)$ bits for \maxSk.
\end{lemma}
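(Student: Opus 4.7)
The plan is to adapt the template established by the preceding reductions for \IS (Lemma~\ref{iskh}) and \VC (Lemma~\ref{vckh}). Given a $c$-competitive algorithm $\ALG$ for \DPA with additive constant $\alpha$ reading at most $b$ advice bits, I will construct a $c$-competitive algorithm $\ALGp$ for \maxSk with the same additive constant reading $b + O(\log n)$ advice bits. On input $x = x_1 \ldots x_n$, $\ALGp$ together with its oracle $\ORACLEp$ simulates $\ALG$ on the hard \DPA instance $I_x$ from \cite{A1}, translating each of $\ALG$'s decisions on the $i$-th \DPA request into a bit $y_i$ under the correspondence: accepting the $i$-th request corresponds to $y_i = 0$, rejecting corresponds to $y_i = 1$.

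The instance $I_x$ is a path $v_0, v_1, \ldots, v_L$ (with $L$ polynomial in $n$) together with $n$ subpath requests engineered so that: (i) the $i$-th request is a function only of $x_1, \ldots, x_{i-1}$, so that $\ALGp$ can reveal it to $\ALG$ in round $i$ of \maxSk using only the known history; (ii) the optimal \DPA profit on $I_x$ equals $\nz{x}$, realised by accepting exactly those requests whose index $i$ satisfies $x_i = 0$; and (iii) accepting a ``bad'' request (one at a position $i$ with $x_i = 1$) blocks at least one subsequent ``good'' request, so any $c$-competitive $\ALG$ can afford only $O(1)$ bad acceptances. $\ORACLEp$ writes $O(\log n)$ bits encoding $n$ and a short list of rounds to be patched (rounds where $\ALG$ accepts a bad request and the good requests it thereby blocks), followed by the $b$-bit advice that $\ALG$'s own oracle would have written for input $I_x$. $\ALGp$ reads the patch list first, and thereafter plays the role of $\ALG$ on a simulated run of $I_x$, outputting the translated answer in each round except in the (few) patched rounds, where it overrides $\ALG$'s answer to enforce $x \sq y$.

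The correctness analysis follows the standard three-case split of Lemma~\ref{iskh}: either $\ALG$ makes no bad acceptances (Case~1), in which case $x \sq y$ and $\nz{x} \leq c\nz{y} + \alpha$ follow directly from the feasibility and $c$-competitiveness of $\ALG$ on $I_x$; or $\ALG$ makes a bad acceptance together with an avoidable rejection of a good request (Case~2a), which we repair by swapping the two answers; or $\ALG$'s solution is essentially optimal and a single patch suffices (Case~2b). The main obstacle I expect is verifying property~(iii) for the construction of \cite{A1}: namely, showing that on every adversarially chosen string $x$ the number of bad acceptances a $c$-competitive schedule can make is bounded by a constant depending only on $\alpha$ and $c$, so that the patch list fits in $O(\log n)$ bits. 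A secondary, more technical obstacle is ensuring property~(i): if the canonical instance of \cite{A1} is not naturally ``online-revealable'' from the known history, I will need to reorder or relabel its requests so that every structural decision is a function of the prefix $x_1, \ldots, x_{i-1}$ without disturbing the optimum value in~(ii).
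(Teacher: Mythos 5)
Your high-level plan is exactly the paper's: simulate a $c$-competitive \DPA algorithm $\ALG$ on a hard instance $I_x$ derivable from the known prefix of $x$, under the correspondence accept~$\leftrightarrow y_i=0$, with $O(\log n)$ extra advice to patch rounds where $\ALG$ accepts a ``bad'' request. But the version of property~(iii) you hypothesize is too weak, and it would in fact break the reduction. If a bad acceptance only ``blocks at least one subsequent good request,'' then from $\ALG$'s perspective there is no penalty for making that swap --- one acceptance is traded for another --- so a $c$-competitive $\ALG$ could make $\Theta(n)$ bad acceptances, and the patch list would need $\Theta(n\log n)$ bits, blowing your budget. Competitiveness alone does not bound the number of bad acceptances under your weaker (iii).

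The instance $I_x$ from \cite{A1} actually has a much stronger property, which is what makes the argument go through cleanly. The path has $L = 2^n$ edges (exponential in $n$, not polynomial, though this is harmless since advice is measured against the number of requests $n$, not $L$). Request $r_i$ is a subpath of length $2^{n-i}$, positioned as a function of $x_{i-1}$ alone: $u_1=0$, $v_1=2^{n-1}$, and for $i\ge 2$, $u_i=u_{i-1}$ if $x_{i-1}=1$ and $u_i=v_{i-1}$ if $x_{i-1}=0$, with $v_i=u_i+2^{n-i}$. A good request $r_i$ (with $x_i=0$) is edge-disjoint from \emph{all} later requests, while a bad request $r_i$ (with $x_i=1$) overlaps with \emph{all} later requests. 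Consequently, any feasible \DPA schedule accepts at most one bad request --- not $O(1)$ bounded via $\alpha$ and $c$, but literally at most one, unconditionally. This dissolves your ``main obstacle'' and makes the patch list have constant size, matching the three-case split of Lemma~\ref{iskh} exactly. Your ``secondary obstacle'' about online-revealability is also a non-issue here: $r_i$ depends only on $x_1,\ldots,x_{i-1}$ by construction, so the known history of \maxSk suffices to drive the simulation directly.
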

\begin{proof}
The proof is similar to the proof of Lemma \ref{iskh}. Let \ALG be a $c$-competitive algorithm for \DPA reading at most $b$ bits of advice. We will describe an algorithm \ALGp and an oracle \ORACLEp for \maxSk such that $\ALGp$ is $c$-competitive (with the same additive constant $\alpha$ as $\ALG$) and reads at most $b$ bits of advice. 

Let $x=x_1\ldots x_n$ be an input to \maxSk. We define an instance $I_x$ of \DPA with $L=2^{n}$ (that is, the number of vertices on the path is $2^{n}+1$).
In round $i$, $1 \leq i \leq n$, a path of length $2^{n-i}$ arrives.
For $2 \leq i \leq n$, the position of the path depends on $x_{i-1}$.
We define the request sequence inductively
(for an example, see Figure~\ref{fig:odpa}): 

\begin{figure}
\begin{align*}
&x=010\\
&L=8\\
&I_x = \langle (0,4), (4,6), (4,5) \rangle
\end{align*}
\caption{\label{fig:odpa}An example of the reduction used in the proof of Lemma~\ref{dpkh}. The request $(0,4)$ is good, since $x_1=0$, and $(4,6)$ is a bad request, since $x_2=1$.}
\end{figure}

In round $1$, the request $(u_1,v_1)$ arrives, where 
\begin{align*}
&u_1=0\\
&v_1=2^{n-1}
\end{align*}

In round $i$, $2\leq i\leq n$, the request $(u_i, v_i)$ arrives, where
\begin{align*}
&u_i=
  \begin{cases}
    u_{i-1}, & \text{if $x_{i-1}=1$}\\ 
    v_{i-1}, & \text{if $x_{i-1}=0$} 
  \end{cases}\\
&v_i=u_i+2^{n-i}
\end{align*}

We say that a request $r_i=(u_i, v_i)$ is \emph{good} if $x_i=0$ and \emph{bad} if $x_i=1$. If $r_i$ is good, then none of the later requests overlap with $r_i$. On the other hand, if $r_i$ is bad, then all later requests do overlap with $r_i$. In particular, if one accepts a single bad request, then no further requests can be accepted. An optimal solution is obtained if one accepts all good requests together with $r_n$.

The oracle \ORACLEp will provide \ALGp with the advice $\varphi$ read by \ALG when processing $I_x$. Since \ALG knows the value of $x_{i-1}$ at the beginning of round $i$ in \maxSk, \ALG can use the advice $\varphi$ to simulate \ALGp on $I_x$. If \ALG only accepts good requests, it is clear that \ALGp can compute an output $y$ such that $x \sq y$ and $\nz{x}\leq c\nz{y}+\alpha$. The case where \ALG accepts a bad request is handled by using at most $O(\log n)$ additional advice bits, exactly as in Lemma \ref{iskh}.
\end{proof}

\begin{theorem}
\DPA is \Wmaxc.
\end{theorem}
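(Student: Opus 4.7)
The plan is to follow the same template used for the other completeness theorems in the paper (e.g. the theorems establishing that \VC, \CF, \DS, \SC and \IS are \Wc). That is, completeness amounts to checking the two bullets in Definition~\ref{completedef}: membership in \W and a matching lower bound of $B(n,c)-O(\log n)$ advice bits.

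Membership is already handled: the immediately preceding lemma states that \DPA is in \Wmax, so the first bullet of Definition~\ref{completedef} is satisfied. For the second bullet I would argue by contradiction/contrapositive using Lemma~\ref{dpkh}. Suppose there exists a $c$-competitive algorithm for \DPA reading $b$ bits of advice. Applying Lemma~\ref{dpkh} directly produces a $c$-competitive algorithm for \maxSk that reads $b + O(\log n)$ bits of advice. But Theorem~\ref{kpacklow} asserts that any $c$-competitive algorithm for \maxSk must read at least $B(n,c) - O(\log n)$ bits of advice. Combining these two inequalities yields
\[
b + O(\log n) \;\geq\; B(n,c) - O(\log n),
\]
and hence $b \geq B(n,c) - O(\log n)$, which is exactly the bound required by the second bullet of Definition~\ref{completedef}.

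There is no real obstacle here; all the technical work has already been done. The hard combinatorial content lives inside Theorem~\ref{kpacklow} (the lower bound on the advice complexity of \maxSk via covering designs) and inside Lemma~\ref{dpkh} (the reduction that embeds a \maxSk instance as a nested-interval instance of \DPA, where the accept/reject decisions on bad paths force any feasible output to contain the characteristic vector of $x$). The theorem statement simply collects these two ingredients. I would therefore write the proof in one or two sentences, exactly in the style already used for \IS: ``This follows from the preceding lemma and Lemma~\ref{dpkh} together with Theorem~\ref{kpacklow}.''
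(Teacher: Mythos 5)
Your proposal is correct and follows exactly the approach the paper intends: membership comes from the immediately preceding lemma stating \DPA is in \Wmax, and the lower bound follows by chaining Lemma~\ref{dpkh} with Theorem~\ref{kpacklow}, precisely mirroring the one-line proofs given for \VC, \CF, \DS, \SC, and \IS. The paper in fact leaves this particular theorem's proof implicit, but the \IS case shows the intended argument is word for word what you have written.
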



\subsection{\W Problems which are not \Wc}
In this section, we will give two examples of problems in \W which are provably not \Wc.
\subsubsection{Uniform knapsack.} 

We define the problem \KS as follows: For each request, $i$, an item of weight $a_i$, $0\leq a_i\leq 1$, is requested. A request must immediately be either accepted or rejected, and this decision is irrevocable. Let $S$ denote the set of indices of accepted items. We say that $S$ is a feasible solution if $\sum_{i\in S}a_i\leq 1$. The profit of a feasible solution is the number of items accepted (all items have a value of $1$). The problem is a maximization problem. 

The \KS problem is the online knapsack problem as studied in \cite{Aknapsack}, but with the restriction that all items have a value of $1$. This problem is the same as online dual bin packing with only a single bin available and where items can be rejected. 

It is clear that \KS belongs to \Wmax since a subset of a feasible solution is also a feasible solution. Furthermore, since all items have value $1$, the profit of a feasible solution is simply the number of items packed in the knapsack. The problem is hard in the sense that no deterministic algorithm (without advice) can attain a strict competitive ratio better than $\Omega (n)$ (see \cite{MVknapsack, Aknapsack}). However, as the next lemma shows, the problem is not \Wmaxc. In \cite{Aknapsack}, it is shown that for any $\varepsilon>0$, it is possible to achieve a competitive ratio of $1+\varepsilon$ using $O(\log n)$ bits of advice, under the assumption that all weights and values can be represented in polynomial space. Lemma~\ref{uks} shows how this assumption can be avoided when all items have unit value.
\begin{lemma}
\label{uks}
There is a strictly $2$-competitive \KS algorithm reading $O(\log n)$ bits of advice, where $n$ is the length of the input.
\end{lemma}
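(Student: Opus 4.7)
The plan is to exploit the fact that the only piece of information the algorithm truly needs is $k=\OPT(\sigma)$, the number of items in some optimal solution. Since $k\leq n$, the oracle can transmit $k$ on the advice tape using a self-delimiting encoding, which costs $O(\log n)$ bits. Given $k$, the algorithm runs a single-threshold greedy rule: set $\tau = 1/\lceil k/2\rceil$, accept every arriving item whose weight is at most $\tau$ and which still fits in the remaining capacity, and reject everything else.

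The correctness analysis splits into two short counting arguments. First, I will show that any optimal solution contains at least $\lceil k/2\rceil$ ``light'' items, i.e., items of weight at most $\tau$. This follows from averaging: if more than $\lceil k/2\rceil-1$ items of $\OPT$ had weight strictly greater than $1/\lceil k/2\rceil$, their total weight would already exceed $1$, contradicting feasibility. Hence $\OPT$ contains at least $k-(\lceil k/2\rceil-1)=\lfloor k/2\rfloor+1\geq\lceil k/2\rceil$ light items, and in particular the input sequence contains at least $\lceil k/2\rceil$ items of weight $\leq\tau$.

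Second, I will show that the greedy rule never rejects a light item until it has already collected $\lceil k/2\rceil$ of them. Indeed, if the algorithm has accepted $m<\lceil k/2\rceil$ items so far, all of weight at most $\tau$, then the total weight packed is at most $m\tau\leq (\lceil k/2\rceil-1)/\lceil k/2\rceil = 1-\tau$, leaving room for another item of weight up to $\tau$. Combining the two claims, the algorithm accepts at least $\lceil k/2\rceil\geq k/2$ items, giving $\OPT(\sigma)\leq 2\cdot\ALG(\sigma)$, i.e., strict $2$-competitiveness. Membership in \Wmax (Definition~\ref{wdef}) guarantees that the algorithm's output is a legitimate feasible solution, since only subsets of an optimal packing are considered.

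The only delicate point — the step I would be most careful about — is the choice of the threshold. The more obvious value $\tau=2/k$ gives the right ``enough light items exist'' count for even $k$ but fails for odd $k$, where an adversary could make the algorithm's capacity bound and OPT's averaging bound disagree by one item. The cleaner choice $\tau=1/\lceil k/2\rceil$ makes both inequalities match exactly, because $m\tau\leq 1-\tau$ and ``$\leq \lceil k/2\rceil-1$ heavy items in $\OPT$'' become the same arithmetic statement. Everything else is routine.
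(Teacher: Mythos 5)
Your proof is correct and follows essentially the same strategy as the paper: encode $\OPT(\sigma)$ with $O(\log n)$ bits, then greedily accept any item below a weight threshold that still fits, and use an averaging argument to show at least half the optimal number of items are accepted. The only difference is your refined threshold $\tau = 1/\lceil k/2\rceil$ (versus the paper's $2/m$), which, as you note, tightens the bookkeeping for odd $k$ so that the two counting bounds match exactly; the paper's analysis is otherwise the same two-case argument phrased slightly differently.
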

\begin{proof}
Fix an input $\sigma= \langle a_1,\ldots , a_n \rangle$. Let $m$ be the number of items accepted by \OPT. The oracle writes $m$ to the advice tape using a self-delimiting encoding. Since $m\leq n$, this requires $O(\log n)$ bits. The algorithm \ALG learns $m$ from the advice tape and works as follows: If $\ALG$ is offered an item, $a_i$, such that $a_i\leq 2/m$ and if accepting $a_i$ will not make the total weight of $\ALG$'s solution larger than $1$, then \ALG accepts $a_i$. Otherwise, $a_i$ is rejected.

In order to show that $\ALG$ is strictly $2$-competitive, we define $A=\{a_i : a_i\leq 2/m\}$. First note that $\ab{A} \geq m/2$, since the sizes of the $m$ smallest items add up to at most 1. Thus, if \ALG accepts all items contained in $A$, it accepts at least $m/2$ items.
On the other hand, if \ALG rejects any item $a_i \in A$, it means that it has already accepted items of total size more than $1 - 2/m$. Since all accepted items have size at most $2/m$, this means that \ALG has accepted at least $m/2$ items.
\end{proof}

Even though \KS is not \Wmaxc, the fact that it belongs to \Wmax might still be of interest, since this provides some starting point for determining the advice complexity of the problem. In particular, it gives some (non-trivial) way to obtain a $c$-competitive algorithm for $c<2$. Determining the exact advice complexity of \KS is left as an open problem.

\subsubsection{Matching under edge-arrival.}
We briefly consider the \OM problem in an edge-arrival version. For each request, an edge is revealed. An edge can be either accepted or rejected. Denote by $E_{\ALG}$ the edges accepted by some algorithm \ALG. A solution $E_{\ALG}$ is feasible if the set of edges in the solution is a matching in the input graph, and the profit of a feasible solution is the number of edges in $E_{\ALG}$. The problem is a maximization problem.

It is well-known that the greedy algorithm is $2$-competitive for \OM. Since this algorithm works in an online setting without any advice, it follows that \OM is not \Wc. On the other hand, \OM is in \W. This gives an upper bound on the advice complexity of the problem for $1\leq c<2$. It seems obvious that this upper bound is not tight, but currently, no better bound is known.
\section{Conclusion and Open Problems}
The following theorem summarizes the main results of this paper.
\begin{theorem}
For the problems
\begin{itemize}
\item \VC
\item \CF
\item \DS
\item \SC (set-arrival version)
\item \IS
\item \DPA
\end{itemize}
and for any $c>1$, possibly a function of the input length $n$,
\begin{align*}
	b=\log\left(1+\frac{(c-1)^{c-1}}{c^{c}}\right)n \pm O(\log n)
\end{align*}
bits of advice are necessary and sufficient to achieve a (strict) competitive ratio of $c$.
\end{theorem}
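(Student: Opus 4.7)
The plan is to assemble this summary theorem by composing the machinery developed throughout Sections \ref{section:ac} and \ref{sectionclass}. For each of the six problems on the list, the claim splits into two matching bounds on the advice complexity: an upper bound of $B(n,c)+O(\log n)$ and a lower bound of $B(n,c)-O(\log n)$, where $B(n,c)=\log\!\left(1+\tfrac{(c-1)^{c-1}}{c^{c}}\right)n$. Both follow essentially for free once one observes that each listed problem is \Wc.

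For the upper bound, I would first invoke the lemmas establishing membership in \W for each problem: Lemma~\ref{VCupperlemma} for \VC, Lemma~\ref{cfub} for \CF, Lemma~\ref{dsup} for \DS, Lemma~\ref{scw} for \SC, Lemma~\ref{isupper} for \IS, and the analogous lemma for \DPA. Once membership in \W is secured, Theorem~\ref{algsgeasy} supplies a strictly $c$-competitive algorithm for each problem reading at most $B(n,c)+O(\log n)$ bits of advice, uniformly via the covering-design construction that underlies the \S upper bound.

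For the lower bound, I would cite the theorems marking each problem as \Wc, which were proved by reducing \Sk (or \Su) to the problem while preserving the competitive ratio and adding only $O(\log n)$ advice bits: Lemma~\ref{vckh}, Lemma~\ref{cfkh}, Lemma~\ref{dshard}, Lemma~\ref{schard}, Lemma~\ref{iskh}, and Lemma~\ref{dpkh}. Composing each such reduction with the tight lower bounds for \S, namely Theorem~\ref{lminknown} in the minimization case and Theorem~\ref{kpacklow} in the maximization case, immediately yields a lower bound of $B(n,c)-O(\log n)$ for the strict competitive ratio.

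Finally, to bridge strict and non-strict competitiveness I would apply Remark~\ref{strictremark}, which costs only an additional $O(\log n)$ advice bits in either direction, so the same formula governs both notions. No step requires new ideas; the only thing to be careful about is bookkeeping, ensuring that the $O(\log n)$ slack incurred at each stage (in the reductions, in the strict/non-strict conversion, and in the closed-form approximation from Lemma~\ref{calclower}) collapses into a single $O(\log n)$ additive term. This bookkeeping is the only mild obstacle, and it is routine.
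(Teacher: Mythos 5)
Your proposal is correct and follows exactly the route the paper takes: the summary theorem in the conclusion is indeed a direct composite of the membership lemmas (Lemmas~\ref{VCupperlemma}, \ref{cfub}, \ref{dsup}, \ref{scw}, \ref{isupper}, and the unlabeled one for \DPA) plus Theorem~\ref{algsgeasy} for the upper bound, and the hardness lemmas (Lemmas~\ref{vckh}, \ref{cfkh}, \ref{dshard}, \ref{schard}, \ref{iskh}, \ref{dpkh}) combined with Theorems~\ref{lminknown} and \ref{kpacklow} for the lower bound, with Remark~\ref{strictremark} and Lemma~\ref{calclower} handling the strict/non-strict and closed-form bookkeeping. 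The paper states the theorem without an explicit proof precisely because it is this immediate consolidation of the preceding results.
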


As with the original string guessing problem \oldS~\cite{A2, SG}, we have shown that \S is a useful tool for determining the advice complexity of online problems. It seems plausible that one could identify other variants of online string guessing and obtain classes similar to \W. Potentially, this could lead to an entire hierarchy of string guessing problems and related classes. 

More concretely, there are various possibilities of generalizing \S. One could associate some positive weight to each bit $x_i$ in the input string. The goal would then be to produce a feasible output of minimum (or maximum) weight. Such a string guessing problem would model minimum weight vertex cover (or maximum weight independent set). Note that for \maxS, the algorithm from Theorem \ref{trivialmax} works in the weighted version. However, the same is not true for any of the algorithms we have given for \minS. Thus, it remains an open problem if $O(n/c)$ bits of advice suffice to achieve a competitive ratio of $c$ for the weighted version of \minS. 

\begin{acknowledgements}
The authors would like to thank Magnus Gausdal Find for helpful discussions.
\end{acknowledgements}



\bibliographystyle{spmpsci}
\bibliography{refs}

\clearpage
\appendix
\normalsize
\section*{Appendix}
\section{Approximation of the Advice Complexity Bounds}

\label{calcbounds}

In Theorems \ref{covup}-\ref{packlow}, bounds on the advice complexity of \S were obtained. These bounds are tight up to an additive term of $O(\log n)$. However, within the proofs, they are all expressed in terms of the minimum size of a certain covering design or a quotient of binomial coefficients. In this appendix, we prove the closed formula estimates for the advice complexity stated in Theorems \ref{covup}-\ref{packlow} and \ref{algsgeasy}. Again, these estimates are tight up to an additive term of $O(\log n)$. The key to obtaining the estimates is the estimation of a binomial coefficient using the binary entropy function.

\subsection{Approximating the Function $B(n,c)$}

\begin{lemma}
For $c>1$, it holds that $$\frac{1}{e\ln(2)}\frac{1}{c}\leq\log\left(1+\frac{(c-1)^{c-1}}{c^{c}}\right)\leq \frac{1}{c}.$$
\label{simplelemma}
\end{lemma}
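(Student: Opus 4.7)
The plan is to recast both inequalities in the equivalent form
\[
e^{1/e} \leq (1+f(c))^c \leq 2, \qquad c > 1,
\]
where $f(c) := (c-1)^{c-1}/c^c$. The constants $2$ and $e^{1/e}$ are exactly the boundary values of $h(c) := (1+f(c))^c$: one checks $h(1) = 2$ (using the convention $0^0 = 1$, so $f(1) = 1$) and $\lim_{c\to\infty} h(c) = e^{1/e}$ (since $f(c) \sim 1/(ec)$). Thus the lemma asserts that $h$ stays between its two limiting values.

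For the upper bound, my first attempt would be the tangent-line estimate $\log(1+x) \leq x/\ln 2$, which reduces $\log(1+f(c)) \leq 1/c$ to $cf(c) \leq \ln 2$, i.e., $\phi(c) := (1-1/c)^{c-1} \leq \ln 2$. A short calculation yields $(\ln\phi)'(c) = \ln(1-1/c) + 1/c < 0$ (from the Taylor series of $\ln(1-1/c)$), so $\phi$ decreases strictly from $\phi(1) = 1$ to $\phi(\infty) = 1/e$. Consequently the reduced inequality holds for all $c \geq c_0$, where $c_0$ is the unique root of $\phi(c) = \ln 2$ (numerically $c_0 \approx 1.21$). For the remaining compact range $1 < c < c_0$, I would switch to the stronger monotonicity claim that $h(c)$ is non-increasing on $[1,\infty)$, which can be established by computing $(\ln h)'(c) = \ln(1+f) - cf\ln(c/(c-1))/(1+f)$ (using $f'(c) = f(c)\ln(1-1/c)$) and showing that the numerator $(1+f)\ln(1+f) - cf\ln(c/(c-1))$ is non-positive for $c > 1$.

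For the lower bound, I would instead use the dual tangent estimate $\log(1+x) \geq x/((1+x)\ln 2)$, converting $\log(1+f(c)) \geq 1/(e\ln 2 \cdot c)$ into $ec\,f(c)/(1+f(c)) \geq 1$, equivalently $(c-1)^{c-1}(ec-1) \geq c^c$. Taking natural logarithms, this becomes $\xi(c) := (c-1)\ln(c-1) + \ln(ec-1) - c\ln c \geq 0$. The endpoints are easy: $\xi(1) = \ln(e-1) > 0$, and a two-term asymptotic expansion gives $\xi(c) = (1/2 - 1/e)/c + O(1/c^2) \to 0^{+}$. Positivity on the interior then follows once one verifies that $\xi$ has no internal zero, by computing $\xi'(c) = -\ln(1 + 1/(c-1)) + e/(ec-1)$ and checking the standard single-variable inequality $\ln(1 + 1/(c-1)) \geq e/(ec-1)$ on $(1,\infty)$.

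The main obstacle is the behavior near $c = 1$, where $f(c) \to 1$ and both sides of each bound converge to $1$, so the naive tangent-line estimates alone are insufficient for the upper bound. The cleanest remedy is the global monotonicity of $h(c) = (1+f(c))^c$ on $[1,\infty)$, which handles both bounds uniformly; the price is careful derivative bookkeeping, since $f'(c) = f(c)\ln(1-1/c)$ has an integrable singularity at $c = 1$.
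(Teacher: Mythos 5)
Your reformulation of the lemma as $e^{1/e} \leq h(c) \leq 2$, with $h(c) = \bigl(1+(c-1)^{c-1}/c^c\bigr)^c$ and the two constants identified as $\lim_{c\to 1^+}h$ and $\lim_{c\to\infty}h$, is exactly the paper's, and the key claim you settle on as the ``cleanest remedy'' --- monotonicity of $h$ on $(1,\infty)$ --- is precisely what the paper invokes for both bounds (the paper phrases the lower bound via $h^{e\ln 2}$, which is decreasing iff $h$ is). Neither treatment actually carries out the monotonicity verification: the paper writes ``using calculus, one may verify'' and you stop at ``showing that the numerator $(1+f)\ln(1+f) - cf\ln(c/(c-1))$ is non-positive'' without doing so, so on the upper bound you are on equal footing with the paper. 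The genuinely new content is your lower-bound route via $\xi(c) := (c-1)\ln(c-1) + \ln(ec-1) - c\ln c \geq 0$: it bypasses the monotonicity of $h$ entirely and reduces the lower bound to the single-variable inequality $\ln\bigl(1+\tfrac{1}{c-1}\bigr) \geq \tfrac{e}{ec-1}$, which (with $u = c-1$) follows from the Pad\'e bound $\ln(1+1/u) \geq 2/(2u+1)$ together with $2/(2u+1) \geq e/(eu+e-1) \Leftrightarrow e \geq 2$; combined with $\xi(1^+)=\ln(e-1)>0$, $\xi(\infty)=0$, and $\xi'\leq 0$, this gives $\xi\geq 0$. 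That route is more explicit and self-contained than the paper's bare assertion. The upper-bound tangent-line detour buys nothing, though: the range $1<c<c_0\approx 1.21$ where $(1-1/c)^{c-1}\leq\ln 2$ fails still requires the full monotonicity of $h$, so you may as well invoke monotonicity from the outset, as the paper does.
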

\begin{proof}
We prove the upper bound first.
To this end, note that
\begin{align*}
&\log\left(1+\frac{(c-1)^{c-1}}{c^{c}}\right)\leq \frac{1}{c} \;
\Leftrightarrow \; 1+\frac{(c-1)^{c-1}}{c^{c}}\leq 2^{1/c} \;
\Leftrightarrow \; \left(1+\frac{(c-1)^{c-1}}{c^{c}}\right)^c\leq 2.
\end{align*}
Using calculus, one may verify that $\left(1+\frac{(c-1)^{c-1}}{c^{c}}\right)^c$ is decreasing in $c$ for $c>1$. Thus, by continuity, it follows that
\begin{align*}
\left(1+\frac{(c-1)^{c-1}}{c^{c}}\right)^c&\leq\lim_{c\rightarrow 1^+}\left(1+\frac{(c-1)^{c-1}}{c^{c}}\right)^c
=\lim_{c\rightarrow 1^+}\left(1+\left(\frac{c-1}{c}\right)^{c-1} \, \frac{1}{c}\right)^c\\
&=\lim_{c\rightarrow 1^+}\left(1+\frac{1}{c}\right)^c
 =2.
\end{align*}

For the lower bound, let $a=e\ln (2)$ and note that
\begin{align*}
&\frac{1}{ac}\leq \log\left(1+\frac{(c-1)^{c-1}}{c^{c}}\right) \;
\Leftrightarrow \; 2\leq \left(1+\frac{(c-1)^{c-1}}{c^{c}}\right)^{ac}.
\end{align*}
Again, using calculus, one may verify that $\left(1+\frac{(c-1)^{c-1}}{c^{c}}\right)^{ac}$ is decreasing in $c$ for $c>1$. It follows that
\begin{align*}
\left(1+\frac{(c-1)^{c-1}}{c^{c}}\right)^{ac}&\geq \lim_{c\rightarrow\infty}\left(1+\frac{(c-1)^{c-1}}{c^{c}}\right)^{ac}
=\lim_{c\rightarrow\infty}\left(1+\left(\frac{c-1}{c}\right)^{c-1} \, \frac{1}{c}\right)^{ac}\\
&=\lim_{c\rightarrow\infty}\left(1+\frac{1}{e} \, \frac{1}{c}\right)^{ac}
=\lim_{c\rightarrow\infty}\left(1+\frac{a/e}{ac}\right)^{ac}
= e^{a/e}=e^{\ln (2)}=2.
\end{align*}
\end{proof}

\subsection{The Binary Entropy Function}

In this section, we give some properties of the binary entropy function that will be used extensively in Section~\ref{approxmin}.

\begin{definition}
\label{entropy}
The \emph{binary entropy function} $H:[0,1]\rightarrow [0,1]$ is the function given by
$$H(p)=-p\log(p) - (1-p)\log(1-p),\;\text{for }0<p<1,$$ 
and $H(0)=H(1)=0$.
\end{definition}

\begin{lemma}[Lemma 9.2 in \cite{Mitzen}]
\label{entro}
For integers $m,n$ such that $0\leq m \leq n$,
\begin{equation*}
\frac{2^{nH\left(m/n\right)}}{n+1}\leq \binom{n}{m}\leq 2^{nH\left(m/n\right)}.
\end{equation*}
\end{lemma}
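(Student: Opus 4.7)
The plan is to derive both bounds from a single identity: expanding $(p + (1-p))^n = 1$ via the binomial theorem with the specific choice $p = m/n$, yielding $\sum_{k=0}^{n} \binom{n}{k} p^k (1-p)^{n-k} = 1$ as a sum of $n+1$ non-negative terms. All of the work then consists in bounding the term indexed by $k = m$, which both is trivially at most $1$ and (as I will show) is actually the largest term in the sum.

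For the upper bound, I would keep only the single term at $k = m$ and use non-negativity of the rest:
\[
\binom{n}{m} \left(\frac{m}{n}\right)^m \left(\frac{n-m}{n}\right)^{n-m} \leq 1.
\]
Taking the binary logarithm and rearranging yields
\[
\log \binom{n}{m} \leq -m\log(m/n) - (n-m)\log((n-m)/n) = n\, H(m/n),
\]
by Definition~\ref{entropy}. For the lower bound, I would show that at $p = m/n$ the term at $k = m$ is the maximum of the $n+1$ summands. Computing the ratio of consecutive terms,
\[
\frac{\binom{n}{k+1} p^{k+1} (1-p)^{n-k-1}}{\binom{n}{k} p^k (1-p)^{n-k}} = \frac{n-k}{k+1}\cdot\frac{p}{1-p} = \frac{(n-k)\,m}{(k+1)(n-m)},
\]
one checks that this exceeds $1$ exactly for integer $k < m$ and is strictly less than $1$ for $k \geq m$, so the mode of the summands is at $k = m$. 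Since the maximum of $n+1$ non-negative reals summing to $1$ is at least $1/(n+1)$, we obtain
\[
\binom{n}{m} \left(\frac{m}{n}\right)^m \left(\frac{n-m}{n}\right)^{n-m} \geq \frac{1}{n+1},
\]
which rearranges to $\binom{n}{m} \geq 2^{nH(m/n)}/(n+1)$.

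The main obstacle is cosmetic rather than conceptual: the boundary cases $m = 0$ and $m = n$ nominally introduce $0^0$ or division by zero in the formulas $p^m$, $(1-p)^{n-m}$, $(n/m)^m$, and $(n/(n-m))^{n-m}$. These cases must be dispatched separately at the outset by observing that $\binom{n}{0} = \binom{n}{n} = 1$ while, by the convention $H(0) = H(1) = 0$ built into Definition~\ref{entropy}, $2^{nH(m/n)} = 1$ as well, so both inequalities hold trivially. A secondary nuisance is verifying the ``exactly for $k < m$'' claim about the ratio: since $m(n+1)/n = m + m/n$ lies strictly between $m$ and $m+1$ whenever $1 \leq m < n$, the integer condition $k+1 \leq m(n+1)/n$ collapses to $k \leq m - 1$, cleanly identifying $k = m$ as the unique mode.
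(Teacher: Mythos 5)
Your proof is correct and is essentially the standard argument: the paper itself does not prove this lemma but cites it from Mitzenmacher and Upfal, and what you wrote is precisely the proof given there. Both bounds come from $\sum_{k=0}^{n}\binom{n}{k}p^{k}(1-p)^{n-k}=1$ at $p=m/n$, with the upper bound obtained by dropping all terms but $k=m$ and the lower bound by your consecutive-ratio computation showing $k=m$ is the mode (hence carries at least $1/(n+1)$ of the mass); your treatment of the degenerate cases $m\in\{0,n\}$ is also right.
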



\begin{proposition}
The binary entropy function $H(p)$ has the following properties.
\begin{enumerate}[(H1)]
\item $H\left(\frac{1}{s}\right)=\log (s)+\frac{1-s}{s}\log(s-1)$ for $s>1$.
\item $s H\left(\frac{1}{s}\right)\leq \log s + 2$ for $s>1$.
\item $H'(p)=\log\left(\frac{1}{p}-1\right)$ and $H''(p)<0$ for $0<p<1$.
\item For any fixed $t>0$, $sH\left(\frac{t}{s}\right)$ is increasing in $s$ for $s>t$.
\item $nH\left(\frac{1}{x}\right)-nH\left(\frac{1}{x}+\frac{1}{n}\right)< 3$ if $n\geq 3$ and $x>2$.
\end{enumerate}
\end{proposition}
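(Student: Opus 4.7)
The proposal is to dispatch the five items largely by direct calculus, with the key computational identities of (H1) and (H3) feeding into the monotonicity arguments for (H2), (H4), and (H5).

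For (H1), I would just unfold Definition~\ref{entropy} at $p=1/s$. Writing $H(1/s) = -\tfrac{1}{s}\log(1/s) - \tfrac{s-1}{s}\log\!\bigl(\tfrac{s-1}{s}\bigr)$ and using $\log(\tfrac{s-1}{s}) = \log(s-1)-\log s$, the $\log s$ terms collect to $\log s$ and the remaining piece is $\tfrac{1-s}{s}\log(s-1)$. For (H3), differentiate $-p\log p - (1-p)\log(1-p)$ directly: the two $1/\ln 2$ terms cancel and what survives is $\log(1-p)-\log p = \log(1/p-1)$. Differentiating once more with the chain rule yields $H''(p) = -1/(p(1-p)\ln 2)$, which is strictly negative on $(0,1)$.

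For (H2), plugging (H1) into $sH(1/s)$ gives $sH(1/s) = s\log s + (1-s)\log(s-1)$, and the desired inequality reduces to $(s-1)\log\!\bigl(\tfrac{s}{s-1}\bigr) \leq 2$. Setting $t=s-1>0$, the left side equals $\log\bigl((1+1/t)^{t}\bigr)$, which is bounded above by $\log e < 2$ since $(1+1/t)^t$ is increasing to $e$. For (H4), I would compute $\tfrac{d}{ds}(sH(t/s))$. By (H3), $H'(t/s) = \log(s/t - 1)$, and the derivative becomes $H(t/s) - (t/s)\log((s-t)/t)$. Using (H1) to expand $H(t/s)$ (treating $s/t$ as the argument), the $\log((s-t)/t)$ terms cancel and the derivative simplifies to $\log(s/(s-t))$, which is positive for $s>t$.

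For (H5), which I expect to be the main technical point, I would apply the Mean Value Theorem: there exists $\xi \in \bigl(\tfrac{1}{x}, \tfrac{1}{x}+\tfrac{1}{n}\bigr)$ with $nH(1/x) - nH(1/x+1/n) = -H'(\xi)$. By (H3), $-H'(\xi) = \log\!\bigl(\xi/(1-\xi)\bigr)$. If $\xi \leq 1/2$ this is nonpositive and we are done; otherwise, since $x>2$ and $n \geq 3$ give $\xi < 1/2 + 1/3 = 5/6$, the ratio $\xi/(1-\xi)$ is strictly less than $5$, whence $-H'(\xi) < \log 5 < 3$. The only delicacy is verifying the endpoint calculation $1/x + 1/n < 5/6$ from the strict inequality $x>2$ and $n\geq 3$, which is immediate. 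No further ingredients beyond elementary calculus and the identities (H1), (H3) are needed.
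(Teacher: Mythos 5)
Your proposal is correct throughout, and for items (H1)--(H4) it follows essentially the same calculus route as the paper (the bookkeeping in (H4) differs only in whether one expands $H(t/s)$ via (H1) or via the raw definition, but the cancellations are identical). Where you genuinely depart is (H5): the paper first reduces to the extremal case $H(\tfrac12)-H(\tfrac12+\tfrac1n)$ by a monotonicity argument on the two branches of $H$, then explicitly expands the difference and bounds it by $\log(5)/n$ through a small amount of algebra. Your Mean Value Theorem argument is cleaner: $nH(\tfrac1x)-nH(\tfrac1x+\tfrac1n) = -H'(\xi) = \log\!\bigl(\tfrac{\xi}{1-\xi}\bigr)$ for some $\xi\in(\tfrac1x,\tfrac1x+\tfrac1n)$, and since $x>2$, $n\geq 3$ force $\xi<\tfrac56$, the quantity is at most $\log 5<3$ (or even nonpositive if $\xi\leq\tfrac12$). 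This avoids the case split on whether $\tfrac1x+\tfrac1n\leq\tfrac12$ and the explicit expansion of $H(\tfrac12+\tfrac1n)$, and reaches the same numerical constant $\log 5$; the paper's approach buys nothing extra here, so yours is a small but genuine improvement in transparency.
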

\begin{proof}
\emph{(H1):} Follows from the definition.

\emph{(H2):} For $s>1$,
\begin{align*}
sH\left(\frac{1}{s}\right)&=s\left(\log s + \frac{1-s}{s}\log (s-1)\right), \text{ by {\em (H1)}}\\
&=\log\left(\left(1+\frac{1}{s-1}\right)^{s-1}s\right)
\leq \log (e\cdot s)= \log (e) + \log (s)\leq \log s +2.
\end{align*}

\emph{(H3):} Note that $H$ is smooth for $0<p<1$. The derivative $H'(p)$ can be calculated from the definition. The second-order derivative is $$H''(p)=\frac{-1}{(1-p)p\ln (2)}\,,$$ which is strictly less than zero for all $0<p<1$.

\emph{(H4):} Fix $t>0$. The claim follows by showing that the partial derivative of $sH(\frac{t}{s})$ with respect to $s$ is positive for all $s>t$.

\begin{align*}
\frac{d}{ds}\left(sH\left(\frac{t}{s}\right)\right)&=H\left(\frac{t}{s}\right)+sH'\left(\frac{t}{s}\right)\left(-\frac{t}{s^2}\right)
=H\left(\frac{t}{s}\right)-\frac{t}{s}H'\left(\frac{t}{s}\right)\\
&=-\frac{t}{s}\log\left(\frac{t}{s}\right)-\left(1-\frac{t}{s}\right)\log\left(1-\frac{t}{s}\right)-\frac{t}{s}\log\left(\frac{s}{t}-1\right), \text{ by Def.~\ref{entropy} and {\em (H3)}}\\
&=-\log \left(1-\frac{t}{s}\right)>0.
\end{align*}

\emph{(H5):} $H(p)$ is increasing for $0\leq p\leq \frac{1}{2}$ and decreasing for $\frac{1}{2}\leq p\leq 1$. If $\frac{1}{x}+\frac{1}{n}\leq \frac12$, then the claim is trivially true (since then the difference is negative). Assume therefore that $\frac{1}{x}+\frac{1}{n}> \frac12$. Under this assumption, $H(\frac{1}{x})$ increases and $H(\frac{1}{x}+\frac{1}{n})$ decreases as $x$ tends to $2$. Thus, $H(\frac{1}{x})-H(\frac{1}{x}+\frac{1}{n})$ increases as $x$ tends to $2$ and, hence,

\begin{equation}
\label{Hligning1}
H\left(\frac{1}{x}\right)-H\left(\frac{1}{x}+\frac{1}{n}\right)\leq H\left(\frac{1}{2}\right)-H\left(\frac{1}{2}+\frac{1}{n}\right).
\end{equation}
Inserting into the definition of $H$ gives
\begin{align*}
H\left(\frac{1}{2}\right)-H\left(\frac{1}{2}+\frac{1}{n}\right)&=1-\left(-\left(\frac{1}{2}+\frac{1}{n}\right)\log \left(\frac{1}{2}+\frac{1}{n}\right)-\left(\frac{1}{2}-\frac{1}{n}\right)\log \left(\frac{1}{2}-\frac{1}{n}\right) \right)\\
&=\frac{1}{n} \log\left(\frac{\frac{1}{2}+\frac{1}{n}}{\frac{1}{2}-\frac{1}{n}}\right)+\frac{1}{2}\log\left(\left(\frac{1}{2}+\frac{1}{n}\right)\left(\frac{1}{2}-\frac{1}{n}\right)\right)+1\\
&=\frac{1}{n} \log\left(\frac{n+2}{n-2}\right)+\frac{1}{2}\log\left(\frac{n^2-4}{4n^2}\right)+1
\end{align*}
Since $(n+2)/(n-2)$ is decreasing for $n\geq 3$, it follows that $\log((n+2)/(n-2))\leq \log (5)$. Furthermore, $(n^2-4)/(4n^2)\leq\frac{1}{4}$ for all $n\geq 3$, and so $\frac{1}{2}\log\left((n^2-4)/(4n^2)\right)+1\leq 0$. We conclude that, for all $n\geq 3$,

\begin{equation}
H\left(\frac{1}{2}\right)-H\left(\frac{1}{2}+\frac{1}{n}\right)\leq \frac{\log(5)}{n}<\frac{3}{n}.
\label{Hligning2}
\end{equation}
Combining (\ref{Hligning1}) and (\ref{Hligning2}) proves (H5).
\end{proof}

\subsection{Binomial Coefficients}
The following proposition is a collection of simple facts about the binomial coefficient that will be used in Sections~\ref{approxmin} and \ref{approxmax}. 
\begin{proposition}
Let $a,b,c\in \mathbb{N}$.
\begin{enumerate}[(B1)]
\item \label{binomminus}$\binom{a}{b}=\frac{a}{a-b}\binom{a-1}{b}$, where $b<a$.
\item \label{binomincr}For fixed $b$, $\binom{a}{b}$ is increasing in $a$.
\item \label{binomfrac}If $c\leq b\leq a$, then $$\frac{\binom{a}{c}}{\binom{b}{c}}=\frac{\binom{a}{b}}{\binom{a-c}{a-b}}.$$
\end{enumerate}
\end{proposition}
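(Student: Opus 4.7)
The plan is to prove each of the three identities by direct factorial manipulation; none require any real ingenuity, so the main task is just to organize the algebra clearly.

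For \Bminus, I would expand both sides using $\binom{m}{k}=m!/(k!(m-k)!)$. This gives
\begin{equation*}
\frac{\binom{a}{b}}{\binom{a-1}{b}} = \frac{a!\,b!\,(a-1-b)!}{b!\,(a-b)!\,(a-1)!} = \frac{a}{a-b},
\end{equation*}
which is the claimed identity (and the factor $(a-b)$ in the denominator is well-defined since $b<a$).

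For \Bincr, the cleanest route is to note that \Bminus gives $\binom{a}{b}=\frac{a}{a-b}\binom{a-1}{b}$ with $a/(a-b)\geq 1$ whenever $b<a$, hence $\binom{a}{b}\geq\binom{a-1}{b}$; for the remaining boundary case $a=b$, both sides equal $1$, and for $a<b$ both vanish, so monotonicity in $a$ holds throughout. (Alternatively, Pascal's identity $\binom{a}{b}=\binom{a-1}{b}+\binom{a-1}{b-1}$ gives the result immediately.)

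For \Bfrac, I would again expand both sides into factorials under the hypothesis $c\leq b\leq a$. The left-hand side simplifies as
\begin{equation*}
\frac{\binom{a}{c}}{\binom{b}{c}} = \frac{a!/(c!(a-c)!)}{b!/(c!(b-c)!)} = \frac{a!\,(b-c)!}{b!\,(a-c)!},
\end{equation*}
while the right-hand side is
\begin{equation*}
\frac{\binom{a}{b}}{\binom{a-c}{a-b}} = \frac{a!/(b!(a-b)!)}{(a-c)!/((a-b)!\,(b-c)!)} = \frac{a!\,(b-c)!}{b!\,(a-c)!},
\end{equation*}
so the two expressions agree. The only thing to verify is that every factorial appearing is of a nonnegative integer, which is immediate from $c\leq b\leq a$. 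Since all three parts reduce to standard factorial bookkeeping, there is no genuine obstacle; the only mild care required is to keep track of the side conditions ($b<a$ in \Bminus and $c\leq b\leq a$ in \Bfrac) so that no factorial of a negative integer or division by zero appears.
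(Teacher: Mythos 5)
Your proof is correct and takes essentially the same route as the paper: direct factorial expansion for \Bminus and \Bfrac, and deducing \Bincr from \Bminus. The only difference is that you spell out the boundary cases $a=b$ and $a<b$ in \Bincr (which the paper leaves implicit), a minor but harmless elaboration.
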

\begin{proof}
First, we prove \Bminus:
$${a \choose b} = \frac{a!}{b!(a-b)!} = \frac{a}{a-b} \, \frac{(a-1)!}{b!(a-1-b)!} = \frac{a}{a-b} {a-1 \choose b}$$

\Bincr follows directly from \Bminus.

To prove \Bfrac, we calculate the two fractions separately:
\begin{align*}
\frac{{a \choose c}}{{b \choose c}} & = \frac{a!}{c!(a-c)!} \frac{c!(b-c)!}{b!} = \frac{a!}{(a-c)!} \frac{(b-c)!}{b!}\\
\frac{{a \choose b}}{{a-c \choose a-b}} & = \frac{a!}{b!(a-b)!} \frac{(a-b)!(b-c)!}{(a-c)!} = \frac{a!}{b!} \frac{(b-c)!}{(a-c)!} = \frac{{a \choose c}}{{b \choose c}}
\end{align*}
\end{proof}

\subsection{Approximating the Advice Complexity Bounds for \minS}
\label{approxmin}

The following lemma is used for proving Theorems~\ref{covup}--\ref{lminknown}.

\begin{lemma}
\label{calclower}
For $c>1$ and $n\geq 3$,
\begin{align}
  \log\left(\max_{t\colon \floor{ct}< n} C(n,\floor{ct},t)\right)&\geq \log\left(\max_{t\colon \floor{ct}< n}\frac{\binom{n}{t}}{\binom{\floor{ct}}{t}}\right)\label{LowerIn1}\\
&\geq \log\left(1+\frac{(c-1)^{c-1}}{c^{c}}\right)n-2\log(n+1)-5\label{LowerIn2}
\end{align}
and
\begin{align}
\log\left(\max_{t\colon \floor{ct}< n} C(n,\floor{ct},t)\right)&\leq \log\left(\max_{t\colon \floor{ct} < n}\frac{\binom{n}{t}}{\binom{\floor{ct}}{t}}n\right)\label{UpperIn1}\\
&\leq \log\left(1+\frac{(c-1)^{c-1}}{c^{c}}\right)n+3\log (n+1)\label{UpperIn2}.
\end{align}
\end{lemma}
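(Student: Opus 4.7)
Inequalities (1) and (3) are immediate consequences of Lemma~\ref{erdos}. For (1), the Erd\H{o}s--Spencer lower bound $\binom{v}{t}/\binom{k}{t} \leq C(v,k,t)$ directly gives the claim after taking a maximum over $t$. For (3), I would combine the upper bound $C(v,k,t) \leq (\binom{v}{t}/\binom{k}{t})(1 + \ln\binom{k}{t})$ with the crude estimate $1 + \ln\binom{\floor{ct}}{t} \leq 1 + (n-1)\ln 2 \leq n$, which holds because $\binom{\floor{ct}}{t} \leq 2^{\floor{ct}} \leq 2^{n-1}$ (since the maximum is taken over $t$ with $\floor{ct} < n$).

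The heart of the proof is inequalities (2) and (4), where I would approximate $\max_t \log\bigl(\binom{n}{t}/\binom{\floor{ct}}{t}\bigr)$ by $B(n,c) = n\log(1 + (c-1)^{c-1}/c^c)$. The key algebraic identity is that $B(n,c)/n$ equals the continuous maximum over $p \in [0,1]$ of $g(p) := H(p) - pcH(1/c)$. Indeed, (H3) gives $H'(p) = \log((1-p)/p)$ and (H1) gives $cH(1/c) = \log(c^c/(c-1)^{c-1})$, so $g'(p^*) = 0$ is solved by $p^* = (c-1)^{c-1}/((c-1)^{c-1}+c^c)$. Direct substitution then yields $g(p^*) = -\log(1-p^*) = \log(1 + (c-1)^{c-1}/c^c)$, and $g$ is concave by (H3).

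For (2), I would choose $t^* = \lfloor p^* n\rfloor$ and apply Lemma~\ref{entro}: the numerator satisfies $\binom{n}{t^*} \geq 2^{nH(t^*/n)}/(n+1)$, and the denominator satisfies $\binom{\floor{ct^*}}{t^*} \leq 2^{\floor{ct^*}H(t^*/\floor{ct^*})} \leq 2^{ct^* H(1/c)}$, where the last inequality invokes (H4) (since $sH(t^*/s)$ is increasing in $s$ and $\floor{ct^*} \leq ct^*$). This yields $\log\bigl(\binom{n}{t^*}/\binom{\floor{ct^*}}{t^*}\bigr) \geq n\,g(t^*/n) - \log(n+1)$, and a Taylor expansion of $g$ around $p^*$ (using $g'(p^*) = 0$ and the fact that $|g''|$ is bounded in terms of $1/(p^*(1-p^*)n)$) shows $n\,g(t^*/n) \geq B(n,c) - O(1)$ whenever $p^*n \geq 2$; when $p^*n < 2$, one has $B(n,c) = O(1)$ and (2) is vacuous.

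For (4), Lemma~\ref{entro} gives $\log\bigl(\binom{n}{t}/\binom{\floor{ct}}{t}\bigr) \leq nH(t/n) - \floor{ct}H(t/\floor{ct}) + \log(n+1)$ for each $t$, and it remains to show $\floor{ct}H(t/\floor{ct}) \geq ctH(1/c) - O(\log n)$ uniformly in $t$. This uniform control of the floor discrepancy is the main technical obstacle: property (H4) gives the wrong-direction inequality, so I would bound $ctH(1/c) - \floor{ct}H(t/\floor{ct})$ by integrating $\frac{d}{ds}[sH(t/s)] = -\log(1 - t/s)$ over the length-at-most-one interval $[\floor{ct},ct]$, invoking (H5) to keep the integral $O(1)$ whenever $t$ is moderate and $c$ is bounded away from $1$. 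The degenerate case $\floor{ct} = t$ (which forces $c$ near $1$ and $t < 1/(c-1)$) has $\binom{\floor{ct}}{t} = 1$; here I would bound $\log\binom{n}{t} \leq nH(t/n)$ directly using (H2), obtaining an $O(\log n)$ value that is absorbed into the claimed error, since $B(n,c) = \Omega(n)$ in that regime.
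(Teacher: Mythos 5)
Your proposal follows the same overall strategy as the paper's proof: apply Lemma~\ref{erdos} to reduce to the binomial ratio, use Lemma~\ref{entro} to switch to the entropy function, identify the continuous optimizer $p^*=1/x$ of $g(p)=H(p)-cpH(1/c)$ (which gives $n\,g(p^*)=B(n,c)$), and then control the integer-rounding error. Your derivation of $p^*$ and of $g(p^*)=\log(1+(c-1)^{c-1}/c^c)$ is correct, and your treatment of (\ref{LowerIn1}) and (\ref{UpperIn1}) is exactly the paper's. Where you diverge is in the rounding arguments. For (\ref{LowerIn2}) the paper uses $t=\ceil{n/x}$ together with $(H2)$ and $(H5)$ to get an explicit $\log n +5$ loss, while you use $t^*=\floor{p^*n}$ and a Taylor expansion around $p^*$; this works, but needs $t^*/n$ and $p^*$ bounded away from $0$ so that $g''$ stays $O(n)$ on the relevant interval (you note the $p^*n<2$ regime is vacuous, which closes that gap). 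For (\ref{UpperIn2}) your approach is genuinely different and weaker in presentation: the paper never needs to show $\floor{ct}H(t/\floor{ct})\geq ctH(1/c)-O(\log n)$ at all. Instead it uses $\binom{\floor{ct}}{t}\geq\binom{\ceil{ct}-1}{t}=\frac{\ceil{ct}-t}{\ceil{ct}}\binom{\ceil{ct}}{t}$ via $(B1)$, picking up only a $\log(\ceil{ct}/(\ceil{ct}-t))\leq\log n$ error and then applying $(H4)$ with $\ceil{ct}\geq ct$ in the \emph{favorable} direction — avoiding both your degenerate case $\floor{ct}=t$ and your integral of $-\log(1-t/s)$. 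Your integral bound does give $\leq\log\frac{\floor{ct}}{\floor{ct}-t}\leq\log n$ when $\floor{ct}>t$ since the integrand is decreasing and the interval has length under one, but the appeal to $(H5)$ here is a misattribution — $(H5)$ plays no role in this bound — and the degenerate-case justification (``$B(n,c)=\Omega(n)$ in that regime'') does not by itself imply $\log\binom{n}{t}\leq B(n,c)+O(\log n)$; one actually needs to observe that $nH(t/n)=M(n,t)+ctH(1/c)\leq B(n,c)+ctH(1/c)$ and then show $ctH(1/c)=O(\log n)$ when $t<1/(c-1)$, which does hold but requires a short calculation you have not supplied. In short: same skeleton, different and slightly looser flesh; the paper's ceiling/Pascal trick is cleaner for (\ref{UpperIn2}), and you should replace the $(H5)$ reference with a direct bound on the integral and spell out the degenerate-case estimate.
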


\begin{proof}
We prove the upper and lower bounds separately.
\paragraph{Upper bound:} Fix $n,c$. By Lemma~\ref{erdos},
\begin{align*}
C(n,\floor{ct},t)&\leq\frac{\binom{n}{t}}{\binom{\floor{ct}}{t}}\left(1+\ln\binom{\floor{ct}}{t}\right).
\end{align*}
Note that $1+\ln \binom{\floor{ct}}{t}\leq n$ since we consider only $\floor{ct}<n$. This proves (\ref{UpperIn1}). Now, taking the logarithm on both sides gives
\begin{align}
\notag \log (C(n,\floor{ct},t)) &\leq \log\left(\frac{\binom{n}{t}}{\binom{\floor{ct}}{t}} \right)+\log n
\leq \log\left(\frac{\binom{n}{t}}{\binom{\ceil{ct}-1}{t}} \right)+\log n\\
\notag &\leq\; \log\left(\frac{\binom{n}{t}}{\frac{\ceil{ct}-t}{\ceil{ct}}\binom{\ceil{ct}}{t}} \right)+\log n, \text{ by \Bminus}\\
\notag &\leq \log\left(\frac{\binom{n}{t}}{\binom{\ceil{ct}}{t}} \right) + \log\left(\frac{\ceil{ct}}{\ceil{ct}-t}\right)+\log n\\
\label{Ulogc} &\leq \log\left(\frac{\binom{n}{t}}{\binom{\ceil{ct}}{t}} \right)+2\log n\,.
\end{align}
Above, we have increased $\floor{ct}$ to $\ceil{ct}$ in the binomial coefficient (at the price of an additive term of $\log n$). This is done since it will later be convenient to use that $ct\leq\ceil{ct}$. Using Lemma~\ref{entro}, we get that
\begin{align*}
 \frac{\binom{n}{t}}{\binom{\ceil{ct}}{t}} \leq \frac{2^{nH(t/n)}}{2^{\ceil{ct}H(t/\ceil{ct})}}\left(\ceil{ct}+1\right),
\end{align*}
and therefore
\begin{align}
\notag \log\left(\frac{\binom{n}{t}}{\binom{\ceil{ct}}{t}} \right)&\leq nH\left(\frac{t}{n}\right)-\ceil{ct}H\left(\frac{t}{\ceil{ct}}\right)+\log\left(\ceil{ct}+1\right)\\
\label{Ulogbin}&\leq nH\left(\frac{t}{n}\right)-ctH\left(\frac{1}{c}\right)+\log(n+1), \text{ by {\em (H4)}.}
\end{align}
Define $$M(n,t)=nH\left(\frac{t}{n}\right)-ctH\left(\frac{1}{c}\right).$$ Combining (\ref{Ulogc}) and (\ref{Ulogbin}) shows that 
\begin{equation}
\label{mpluslog}
\log (C(n, \floor{ct}, t))\leq M(n,t)+3\log (n+1)\,.
\end{equation} The function $M$ is smooth. For any given input length $n$, we can determine the value of $t$ maximizing $M(n,t)$ using calculus. In order to simplify the notation for these calculations, define
\begin{equation*}
x=\left(\frac{c}{c-1}\right)^{c}(c-1)+1,
\label{x}
\end{equation*}
and note that
\begin{align}
\notag\log(x-1)&=c\left(\log c +\frac{1-c}{c}\log (c-1)\right)\\
         &=cH\left(\frac{1}{c}\right), \text{ by {\em(H1)}.}
\label{factx}
\end{align}
We want to determine those values of $t$ for which $\frac{d}{dt}M(n,t)=0$:
\begin{align*}
&\frac{d}{dt}M(n,t)=\frac{d}{dt}\left(nH\left(\frac{t}{n}\right)-ctH\left(\frac{1}{c}\right)\right)=0\\
  \Leftrightarrow&\; nH'\left(\frac{t}{n}\right) \cdot \frac{1}{n} -cH\left(\frac{1}{c}\right)=0\\
\Leftrightarrow&\; \log\left(\frac{n}{t}-1\right)=cH\left(\frac{1}{c}\right), \text{ by {\em (H3)}}\\
\Leftrightarrow&\; \frac{n}{t}=2^{cH(1/c)}+1\\
\Leftrightarrow&\; t=\frac{n}{2^{cH(1/c)}+1}\\
\Leftrightarrow&\; t=\frac{n}{2^{\log(x-1)}+1}, \text{ by (\ref{factx})}\\
\Leftrightarrow&\; t=\frac{n}{x}.
\end{align*}
Note that \mbox{$\frac{d^2}{dt^2}M(n,t)=H''(\frac{t}{n})/n<0$} for all values of $t$, by {\em (H3)}. Thus, 
\begin{equation}
\label{mupper}
M(n,t)\leq M\left(n,\frac{n}{x}\right), \text{ for all values of } t\,.
\end{equation}
The value of $M(n, \frac{n}{x})$ can be calculated as follows: 
\begin{align}
\label{mcalc}
\notag M\left(n, \frac{n}{x}\right)&= n\,H\left(\frac{1}{x} \right)-c\,\frac{n}{x}\, H\left(\frac{1}{c}\right)\\
\notag &=n\left(\log(x)+\frac{1-x}{x}\log(x-1)-\frac{c}{x}H(1/c)\right), \text{ by {\em (H1)}}\\
\notag &=\; n\left(\log(x)+\frac{1-x}{x}\log(x-1)-\frac{1}{x}\log(x-1) \right), \text{ by (\ref{factx})}\\
\notag &=n\big(\log(x)-\log(x-1)\big)
=n\log\left(\frac{x}{x-1}\right)\\
&=n\log\left(1+\frac{(c-1)^{c-1}}{c^{c}}\right).
\end{align}
Combining (\ref{mpluslog}), (\ref{mupper}), and (\ref{mcalc}), we conclude that
\begin{displaymath}
\log(C(n, \floor{ct}, t))\leq n\log\left(1+\frac{(c-1)^{c-1}}{c^{c}}\right)+3\log (n+1).
\end{displaymath}

\paragraph{Lower Bound:}
 By Lemma~\ref{erdos}, $$\log\left(\max_{t\colon \floor{ct}< n} C(n,\floor{ct},t)\right) \geq \log\left(\max_{t\colon \floor{ct}< n}\frac{\binom{n}{t}}{\binom{\floor{ct}}{t}}\right)\,.$$ 
This proves (\ref{LowerIn1}).
In order to prove (\ref{LowerIn2}), first note that by Lemma~\ref{simplelemma}, 
 $$\log\left( 1+ \frac{(c-1)^{c-1}}{c^c} \right) n \leq \frac{n}{c}\,.$$
Thus, for $c \geq \frac{n}{2}$, the righthand side of (\ref{LowerIn2}) is negative, and hence, the inequality is trivially true.

Assume now that $c < \frac{n}{2}$.
We will determine an integer value of $t$ such that $\binom{n}{t}/\binom{\floor{ct}}{t}$ becomes sufficiently large. First, we use Lemma~\ref{entro}: 
\begin{align*}
\frac{\binom{n}{t}}{\binom{\floor{ct}}{t}}&\geq \frac{2^{n H(t/n)}}{(n+1)\cdot2^{\floor{ct} H(t/\floor{ct})}}
=\frac{2^{nH(t/n)-\floor{c t} H(t/\floor{ct})}}{n+1}
\end{align*}
It is possible that $t=\floor{ct}$, but this is fine since $H(1)=0$. Using {\em (H4)}, we see that
\begin{align*}
\floor{c t} H\left(\frac{t}{\floor{ct}}\right)&\leq ct H\left(\frac{t}{ct}\right)
= ct H\left(\frac{1}{c}\right).
\end{align*}
Thus,
\begin{equation}
\log\left(\frac{\binom{n}{t}}{\binom{\floor{ct}}{t}}\right)\geq nH\left(\frac{t}{n} \right)- c t H\left(\frac{1}{c} \right)-\log(n+1)= M(n,t) - \log (n+1).
\label{logbn}
\end{equation}
Let $t'=\frac{n}{x}$. We know that $M(n,t)$ attains its maximum value when $t=t'$. Since $c>1$, it is clear that $x>c$ and hence $t'<\frac{n}{c}$. It follows that $\floor{ct'}<n$. However, $t'$ might not be an integer. In what follows, we will first argue that $\floor{c\ceil{t'}}<n$ and then that $M(n,\ceil{t'})$ is close to $M(n,t')$. The desired lower bound will then follow by setting $t=\ceil{t'}$. 

Using calculus, it can be verified that, for $c>1$, $x/c$ is increasing in $c$.
Hence,
\begin{align*}
\frac{x}{c} 
& = \left(\frac{c}{c-1}\right)^{c-1}+\frac{1}{c}\\
& \geq \lim_{c\rightarrow 1^+} \left( \left(\frac{c}{c-1}\right)^{c-1}+\frac{1}{c} \right), \text{ for } c>1\\
& = \lim_{c\rightarrow 1^+} \left(1+\frac{1}{c-1}\right)^{c-1}+ \lim_{c\rightarrow 1^+} \frac{1}{c}
 = \lim_{a\rightarrow 0^+} \left(1+\frac{1}{a}\right)^{a} +1
 = 2\,.
\end{align*}
Thus, $c \leq x/2$, and hence,
$$\floor{c\ceil{t'}} \leq c \left\lceil \frac{n}{x} \right\rceil < \frac{cn}{x}+c \leq \frac{n}{2}+c < n\,. $$

Note that $\frac{d}{dt}M(n,t)<0$ for $t>t'$, so $M(n,\ceil{t'})\geq M(n,t'+1)$. Combining this observation with (H2) and (H5), we get that
\begin{align*}
M(n,\ceil{t'})&\geq M(n, t'+1)
=nH\left(\frac{t'+1}{n}\right)-c(t'+1)H\left(\frac{1}{c}\right)\\
&=nH\left(\frac{1}{x}+\frac{1}{n}\right)-c\frac{n}{x}H\left(\frac{1}{c}\right)-cH\left(\frac{1}{c}\right)\\
&\geq\; nH\left(\frac{1}{x}+\frac{1}{n}\right)-c\frac{n}{x}H\left(\frac{1}{c}\right)-\log n-2, \text{ by {\em (H2)}}\\
&\geq\; nH\left(\frac{1}{x}\right)-c\frac{n}{x}H\left(\frac{1}{c}\right)-\log n-5, \text{ by {\em (H5)}}\\
&= M(n,t')-\log n-5.
\end{align*}
By choosing $t=\ceil{t'}$ in the $\max$, we conclude that

\begin{align*}
\log\left(\max_{t\colon \floor{ct}< n}\frac{\binom{n}{t}}{\binom{\floor{ct}}{t}}\right)&\geq M(n,\ceil{t'})-\log (n+1), \text{ by~(\ref{logbn})}\\
&\geq M(n,t')-\log(n+1)-\log n-5\\
&\geq n\log\left(1+\frac{(c-1)^{c-1}}{c^{c}}\right)-2\log (n+1)-5, \text{ by~(\ref{mcalc})}.
\end{align*}
\end{proof}

The following lemma is used for proving Theorem~\ref{smallamin}.

\begin{lemma}
\label{finallemma}
If $c$ is an integer-valued function of $n$ and $c>1$, it holds that
\begin{equation*}
\log\left(\max_{t\colon ct< n}\frac{\binom{n}{t}}{\binom{ct}{t}}\right)=\Omega\left(\frac{n}{c}\right).
\end{equation*}
\end{lemma}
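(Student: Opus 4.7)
The plan is to establish the bound directly by choosing a specific value of $t$ in the maximum and applying only elementary estimates for the binomial coefficients involved. The main reason to avoid invoking Lemma~\ref{calclower} is that the additive $O(\log n)$ loss incurred there can swamp $n/c$ in the interesting regime where $c$ is large.

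First I would fix a constant $K>e$ (say $K=4$) and set $t=\lfloor n/(Kc)\rfloor$. When $n\geq Kc$, this choice gives $t\geq 1$ and $ct\leq n/K<n$, so $t$ is a valid index in the maximum. When $n<Kc$ we have $n/c<K$, so $\Omega(n/c)$ represents only a bounded quantity and the lemma holds trivially.

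Assuming $n\geq Kc$, I would apply the elementary bounds $\binom{n}{t}\geq(n/t)^t$ (which follows from the telescoping identity $\binom{n}{t}=\prod_{i=0}^{t-1}\frac{n-i}{t-i}$ together with $(n-i)/(t-i)\geq n/t$) and $\binom{ct}{t}\leq(ect/t)^t=(ec)^t$ (the latter being the standard inequality $\binom{m}{k}\leq(em/k)^k$) to obtain
$$\frac{\binom{n}{t}}{\binom{ct}{t}}\geq\left(\frac{n}{ect}\right)^t\geq\left(\frac{K}{e}\right)^t,$$
where the second step uses $t\leq n/(Kc)$. Taking logarithms,
$$\log\frac{\binom{n}{t}}{\binom{ct}{t}}\geq t\log(K/e)\geq\left(\frac{n}{Kc}-1\right)\log(K/e)=\Omega\!\left(\frac{n}{c}\right),$$
since $\log(K/e)>0$ whenever $K>e$.

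An alternative route would be to combine Inequality~(\ref{LowerIn2}) of Lemma~\ref{calclower} (noting that $\lfloor ct\rfloor=ct$ because $c$ is integer) with Lemma~\ref{simplelemma}, yielding $\log(\max\cdots)\geq n/(e\ln 2\cdot c)-2\log(n+1)-5$. This is $\Omega(n/c)$ whenever $n/c$ dominates $\log n$, but as noted above, the $-2\log(n+1)-5$ term becomes fatal in the regime where $c$ is of order $n/\log n$ or larger, which is precisely the range that matters in Theorem~\ref{smallamin}. This is the main obstacle the proposed direct argument is designed to circumvent; the remaining work is purely the two elementary binomial estimates.
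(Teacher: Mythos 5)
Your proof is correct and takes essentially the same approach as the paper: both pick $t$ proportional to $n/c$ (the paper uses $t=\lfloor n/(ec)\rfloor$, you use $t=\lfloor n/(4c)\rfloor$) and lower-bound $\binom{n}{t}/\binom{ct}{t}$ by an exponential in $t$; the paper does this by a factor-by-factor comparison of the product $\prod_{i<t}\frac{n-i}{ct-i}\geq e^t$, whereas you use the packaged estimates $\binom{n}{t}\geq(n/t)^t$ and $\binom{ct}{t}\leq(ec)^t$, which is an interchangeable choice of elementary bounds. Your side remark correctly identifies why the route through Lemma~\ref{calclower} would not suffice in the regime $n/c=O(\log n)$, which is exactly why the paper also proves this lemma by a direct argument rather than citing the earlier approximation.
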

\begin{proof}
Assume that $c$ is an integer-valued function of $n$, that $c>1$ and that $ct<n$. It follows that
\begin{align*}
\frac{\binom{n}{t}}{\binom{ct}{t}}&=\frac{n!(ct-t)!}{(n-t)!(ct)!}
\geq\frac{n (n-1)\cdots (n-t+1)}{(ct)(ct-1)\cdots (ct-t+1)}
\end{align*}
Let $t=\floor{\frac{n}{ec}}$. Then
\begin{align*}
\frac{\binom{n}{t}}{\binom{ct}{t}}&=\frac{n (n-1)\cdots (n-t+1)}{(ct)(ct-1)\cdots (ct-t+1)}
\geq \frac{n (n-1)\cdots (n-t+1)}{\frac{n}{e}(\frac{n}{e}-1)\cdots (\frac{n}{e}-t+1)}\\
&= \frac{n}{\frac{n}{e}}\frac{n-1}{\frac{n}{e}-1}\cdots\frac{n-t+1}{\frac{n}{e}-t+1}
\geq e^t.
\end{align*}
Since 
\begin{align*}
\log(e^t)=t\log(e)\geq \left(\frac{n}{ec}-1\right)\log e = \frac{n}{e \, \ln(2) \, c}-\log(e)=\Omega\left(\frac{n}{c}\right),
\end{align*} 
this proves the lemma by choosing $t=\floor{\frac{n}{ec}}$.
\end{proof}

\subsection{Approximating the Advice Complexity Bounds for \maxS}
\label{approxmax}

Lemma~\ref{calclowerM} of this section is used for Theorems~\ref{maxsgalg}--\ref{kpacklow}.
In proving Lemma~\ref{calclowerM}, the following lemma will be useful.

\begin{lemma}
\label{binomequal}
For all $n,c$, it holds that
\begin{equation*}
\label{mmbino1}
\max_{u \colon 0<u<n}\frac{\binom{n}{u}}{\binom{n-\ceil{u/c}}{n-u}}\leq n \left(\max_{t\colon \floor{ct}< n}\frac{\binom{n}{t}}{\binom{\floor{ct}}{t}}\right).
\end{equation*}
On the other hand, it also holds that
\begin{equation*}
\label{mmbino2}
\max_{u \colon 0<u<n}\frac{\binom{n}{u}}{\binom{n-\ceil{u/c}}{n-u}}\geq \frac{1}{n}\left(\max_{t\colon \floor{ct}< n}\frac{\binom{n}{t}}{\binom{\floor{ct}}{t}}\right).
\end{equation*}
\end{lemma}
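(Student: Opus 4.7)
My plan is to exploit property \Bfrac to convert the LHS binomial quotient into the same shape as the RHS one. Applying \Bfrac with parameters $a=n$, $b=u$, and bottom index $\ceil{u/c}$ (valid since $\ceil{u/c}\leq u\leq n$ for $c\geq 1$) will give the key identity
\[
\frac{\binom{n}{u}}{\binom{n-\ceil{u/c}}{n-u}} \;=\; \frac{\binom{n}{\ceil{u/c}}}{\binom{u}{\ceil{u/c}}}. \tag{$\star$}
\]
Writing $f(u)$ for the LHS of the lemma and $g(t)=\binom{n}{t}/\binom{\floor{ct}}{t}$ for the RHS, ($\star$) says $f(u)=\binom{n}{t}/\binom{u}{t}$ with $t:=\ceil{u/c}$. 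This aligns the two different-looking max expressions so that they can be compared term-by-term.

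For the lower bound, I will fix any $t\geq 1$ with $\floor{ct}<n$ and set $u=\floor{ct}$. A short check (using $c>1$, so that $t-1/c<\floor{ct}/c\leq t$) gives $\ceil{u/c}=t$, and then ($\star$) immediately yields $f(u)=g(t)$. Since $0<u<n$ by assumption, this proves $\max_u f(u)\geq \max_t g(t)$, which is strictly stronger than the $1/n$ lower bound claimed in the lemma.

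For the upper bound, I will fix $u$ with $0<u<n$ and let $t=\ceil{u/c}\geq 1$. Since $u$ is an integer with $c(t-1)<u\leq ct$, we have $u\geq u_t := \floor{c(t-1)}+1$, and therefore by ($\star$), $f(u)\leq \binom{n}{t}/\binom{u_t}{t}$. I will then use \Bminus on both numerator and denominator to peel off one factor: $\binom{n}{t}=\tfrac{n-t+1}{t}\binom{n}{t-1}$ and $\binom{u_t}{t}=\tfrac{u_t}{t}\binom{u_t-1}{t-1}=\tfrac{u_t}{t}\binom{\floor{c(t-1)}}{t-1}$. Substituting gives
\[
\frac{\binom{n}{t}}{\binom{u_t}{t}} \;=\; \frac{n-t+1}{\floor{c(t-1)}+1}\cdot g(t-1)\;\leq\; n\cdot g(t-1),
\]
and since $\floor{c(t-1)}=u_t-1\leq u-1<n$, the index $t-1$ is admissible for $g$, so $f(u)\leq n\cdot\max_{t'} g(t')$ as required.

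The main obstacle will be handling the bookkeeping around edge cases rather than any deep difficulty: I need to verify carefully that $\ceil{\floor{ct}/c}=t$ (which genuinely requires $c>1$; the case $c=1$ is trivial and handled separately), to check that $u_t\geq t$ so that $\binom{u_t}{t}$ is not degenerate (immediate from $c\geq 1$), and to confirm the bound still works at $t=1$ (where $t-1=0$, $g(0)=1$, $u_1=1$, and the derived inequality $f(1)\leq n$ is in fact an equality). The real content of the argument is the identity ($\star$), which makes an otherwise opaque comparison between two differently-parameterized families of binomial quotients into a one-line calculation.
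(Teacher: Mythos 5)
Your proof is correct, and it pivots on the same key identity the paper uses, namely \Bfrac; but your implementation is tighter in one direction and worth comparing. The paper's proof, working with $g_{n,c}(u)=\binom{n}{u}\big/\binom{n-\ceil{u/c}}{n-u}$ and $f_{n,c}(t)=\binom{n}{t}\big/\binom{\floor{ct}}{t}$, shows the upper bound via $f_{n,c}(\floor{u/c}) \geq g_{n,c}(u)/n$ and the lower bound via $g_{n,c}(\ceil{ct}) \geq f_{n,c}(t)/n$, each by a chain of \Bincr, \Bfrac, \Bminus steps, paying a multiplicative factor of $1/n$ both times. You instead state ($\star$) as a clean intermediate identity, and for the lower bound you choose $u=\floor{ct}$ instead of $\ceil{ct}$; because $\ceil{\floor{ct}/c}=t$ for $c>1$ and $t\geq 1$, this gives an \emph{exact} match $f(\floor{ct})=g(t)$, proving the lower bound without the $1/n$ slack. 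This is genuinely stronger than what the paper proves, and also sidesteps a latent edge case in the paper's choice $u=\ceil{ct}$, which can equal $n$ and fall outside the admissible range when $ct$ is close to $n$ and non-integral. Your upper bound argument (replace $u$ by $u_t=\floor{c(t-1)}+1$, then shift from index $t$ to $t-1$) reaches the same $n$-factor conclusion by a slightly different decomposition.

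Two small points. First, your lower bound only produces matching $u$'s for $t\geq 1$; you should note explicitly that the $t=0$ term $g(0)=1$ is already dominated (e.g.\ by $f(1)=n$, or by $g(1)\geq g(0)$ whenever $t=1$ is admissible), so excluding $t=0$ from the matching argument is harmless. Second, the identity $\binom{n}{t}=\tfrac{n-t+1}{t}\binom{n}{t-1}$ that you attribute to \Bminus is not \Bminus (which decrements the top index, $\binom{a}{b}=\tfrac{a}{a-b}\binom{a-1}{b}$); it is the standard recursion decrementing the bottom index. The identity is correct and elementary, but the citation is wrong.
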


\begin{proof}
Let $$f_{n,c}(t)=\frac{\binom{n}{t}}{\binom{\floor{ct}}{t}} \: \text{ and } \: g_{n,c}(u)=\frac{\binom{n}{u}}{\binom{n-\ceil{u/c}}{n-u}}\,.$$ In order to prove the upper bound, we show that $f_{n,c}(\floor{u/c}) \geq g_{n,c}(u)/n$, for any integer $u$, $0<u<n$. Note that $\floor{u/c}<u$, since $c>1$.
\begin{align*}
f_{n,c}(\floor{u / c})&=\frac{\binom{n}{\floor{u / c}}}{\binom{\floor{c\floor{u/c}}}{\floor{u / c}}}\\
&\geq\; \frac{\binom{n}{\floor{u / c}}}{\binom{u}{\floor{u / c}}}, \text{ by } \Bincr\\
&=\;\frac{\binom{n}{u}}{\binom{n-\floor{u /c}}{n-u}}, \text{ by } \Bfrac\\
&\geq\;\frac{u-\floor{u/c}}{n-\floor{u/c}} \, \frac{\binom{n}{u}}{\binom{n-\ceil{u /c}}{n-u}}, \text{ by } \Bminus\\
&\geq\;\frac{u-\floor{u/c}}{n-\floor{u/c}} \: g_{n,c}(u)\\
&\geq\frac{1}{n} \, g_{n,c}(u), \text{ since } u-\floor{u/c} \geq 1.
\end{align*}
By \Bminus, the second last inequality is actually an equality, unless $u/c$ is an integer.

 In order to prove the lower bound, we will show that $g_{n,c}(\ceil{ct}) \geq f_{n,c}(t)/n$, for any integer $t$ with $\floor{ct}< n$. Note that $t<\ceil{ct}$, since $c>1$.
\begin{align*}
g_{n,c}(\ceil{ct})&=\frac{\binom{n}{\ceil{ct}}}{\binom{n-\ceil{\ceil{ct}/c}}{n-\ceil{ct}}}\\
&\geq\; \frac{\binom{n}{\ceil{ct}}}{\binom{n-t}{n-\ceil{ct}}}, \text{ by \Bincr}\\
&=\frac{\binom{n}{n-\ceil{ct}}}{\binom{n-t}{n-\ceil{ct}}}\\
&=\;\frac{\binom{n}{n-t}}{\binom{\ceil{ct}}{t}}, \text{ by \Bfrac}\\
&=\frac{\binom{n}{n-t}}{\frac{\ceil{ct}}{\ceil{ct}-t}\binom{\floor{ct}}{t}}, \text{ by \Bminus}\\
&=\frac{\ceil{ct}-t}{\ceil{ct}} \, f_{n,c}(t)\\
&\geq\frac{1}{n} \, f_{n,c}(t), \text{ since } \ceil{ct}-t \geq 1 \text{ and } \ceil{ct} \leq n.
\end{align*}
\end{proof}

\begin{lemma}
\label{calclowerM}
Let $c>1$ and $n\geq 3$. It holds that
\begin{align*}
\log\left(\max_{u\colon 0<u< n} C(n,n-\left\lceil\frac{u}{c}\right\rceil,n-u)\right)&\geq \log\left(\max_{u \colon 0<u<n}\frac{\binom{n}{u}}{\binom{n-\ceil{\frac{u}{c}}}{n-u}}\right)\\
&\geq \log\left(1+\frac{(c-1)^{c-1}}{c^{c}}\right)n-3\log n-6.
\end{align*}
Furthermore, 
\begin{align*}
\log\left(\max_{u\colon 0<u< n} C(n,n-\left\lceil\frac{u}{c}\right\rceil,n-u)\right)&\leq \log\left(\max_{u \colon 0<u<n}\frac{\binom{n}{u}}{\binom{n-\ceil{\frac{u}{c}}}{n-u}} \, n\right)\\
&\leq \log\left(1+\frac{(c-1)^{c-1}}{c^{c}}\right)n+4\log(n+1)
\end{align*}
\end{lemma}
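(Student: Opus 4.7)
The plan is to reduce the two-sided bound for the max-version to the already-proved bound for the min-version (Lemma~\ref{calclower}) by using Lemma~\ref{binomequal} as a bridge, and to handle the covering-number estimates by a direct application of Lemma~\ref{erdos}. Write $g_{n,c}(u) = \binom{n}{u}/\binom{n-\ceil{u/c}}{n-u}$ (as in Lemma~\ref{binomequal}) and $f_{n,c}(t) = \binom{n}{t}/\binom{\floor{ct}}{t}$ (as in Lemma~\ref{calclower}), so that the goal is to sandwich $\log\max_{u}g_{n,c}(u)$ between $B(n,c)\pm O(\log n)$ and then insert the covering number on top.

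First I would handle the covering-number step. Applying Lemma~\ref{erdos} with parameters $(v,k,t)=(n,\,n-\ceil{u/c},\,n-u)$, and noting that $\binom{n}{n-u}=\binom{n}{u}$, gives directly
\[
g_{n,c}(u) \;\le\; C\!\left(n,n-\left\lceil\tfrac{u}{c}\right\rceil,n-u\right) \;\le\; g_{n,c}(u)\,\bigl(1+\ln\tbinom{n-\ceil{u/c}}{n-u}\bigr).
\]
Since we consider $0<u<n$, the Erd\H{o}s--Spencer correction factor is at most $n$, so taking logarithms and maximizing over $u$ yields the first inequality of each chain in the statement (the one with $\log\max C$ compared to $\log\max g$, possibly with an extra factor of $n$ on the upper side).

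Next I would deduce the closed-form bounds on $\log\max_u g_{n,c}(u)$ from the corresponding bounds on $\log\max_t f_{n,c}(t)$ established in Lemma~\ref{calclower}. Lemma~\ref{binomequal} furnishes exactly what is needed: it sandwiches $\max_u g_{n,c}(u)$ between $\frac{1}{n}\max_t f_{n,c}(t)$ and $n\,\max_t f_{n,c}(t)$. Thus, for the lower bound I combine the lower half of Lemma~\ref{binomequal} with inequality~(\ref{LowerIn2}) of Lemma~\ref{calclower}, picking up one extra $\log n$ loss from the factor $n$ in the denominator. For the upper bound I combine the upper half of Lemma~\ref{binomequal} with inequality~(\ref{UpperIn2}) of Lemma~\ref{calclower}, again losing one $\log n$.

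The main obstacle is bookkeeping: tracking the additive $\log n$ contributions from (i) the $1+\ln\binom{\cdot}{\cdot}\le n$ term in Lemma~\ref{erdos}, (ii) the factor of $n$ coming from the Lemma~\ref{binomequal} transfer between $g$ and $f$, and (iii) the $2\log(n+1)+5$ resp.\ $3\log(n+1)$ already present in Lemma~\ref{calclower}, and checking that they fit into the stated budget of $3\log n + 6$ on the lower side and $4\log(n+1)$ on the upper side. This is a routine estimation using $\log n \le \log(n+1)$ and the hypothesis $n\ge 3$. Beyond that, no new ideas are needed: the maximization and minimization versions of the problem are essentially dual to each other, and Lemma~\ref{binomequal} already captures that duality at the level of binomial-coefficient ratios.
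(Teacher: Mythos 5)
Your proposal matches the paper's own proof exactly: the paper likewise sandwiches the covering number between $g_{n,c}(u)$ and $n\cdot g_{n,c}(u)$ via Lemma~\ref{erdos}, transfers between $g_{n,c}$ and $f_{n,c}$ via Lemma~\ref{binomequal} at a cost of $\log n$ on each side, then invokes inequalities~(\ref{LowerIn2}) and~(\ref{UpperIn2}) of Lemma~\ref{calclower} and collects the $\log n$ terms using $n\ge 3$. No gap; same route.
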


\begin{proof}
We prove the lower bound first.
\begin{align*}
&\log\left(\max_{u\colon 0<u< n} C(n,n-\left\lceil\frac{u}{c}\right\rceil,n-u)\right)\\
\geq & \log\left(\max_{u \colon 0<u<n}\frac{\binom{n}{u}}{\binom{n-\ceil{\frac{u}{c}}}{n-u}}\right), \text{ by Lemma ~\ref{erdos}}\\
\geq & \log\left(\max_{t\colon \floor{ct}< n}\frac{\binom{n}{t}}{\binom{\floor{ct}}{t}}\right)- \log n, \text{ by Lemma~\ref{binomequal}}\\
\geq & \log\left(1+\frac{(c-1)^{c-1}}{c^{c}}\right)n-2\log(n+1)-5 - \log n, \text{ by (\ref{LowerIn2})}\\
\geq & \log\left(1+\frac{(c-1)^{c-1}}{c^{c}}\right)n-3\log n-6, \text{ since } n \geq 3.
\end{align*}

We now prove the upper bound.
\begin{align*}
& \log\left(\max_{u\colon 0<u< n} C(n,n-\left\lceil\frac{u}{c}\right\rceil,n-u)\right)\\
& \leq \log\left(\max_{u \colon 0<u<n}\frac{\binom{n}{u}}{\binom{n-\ceil{\frac{u}{c}}}{n-u}}\left(1+\ln\binom{n-\ceil{u/c}}{n-u}\right)\right), \text{ by Lemma~\ref{erdos}}\\
& \leq \log\left(\max_{u \colon 0<u<n}\frac{\binom{n}{u}}{\binom{n-\ceil{\frac{u}{c}}}{n-u}} \, n\right)\\
& = \log\left(\max_{u \colon 0<u<n}\frac{\binom{n}{u}}{\binom{n-\ceil{\frac{u}{c}}}{n-u}}\right) + \log n\\
& \leq \log\left(\max_{t\colon \floor{ct}< n}\frac{\binom{n}{t}}{\binom{\floor{ct}}{t}} \, n\right) + \log n, \text{ by Lemma~\ref{binomequal}}\\
&\leq \log\left(1+\frac{(c-1)^{c-1}}{c^{c}}\right)n+4\log (n+1), \text{ by (\ref{UpperIn2})} 
\end{align*}
\end{proof}


\end{document}